\date{}
\newtheorem{thrm}{Theorem}
\newtheorem{lem}{Lemma}
\newtheorem{cor}[thrm]{Corollary}
\newtheorem{Def}[thrm]{Definition}
\newtheorem{conj}{Conjecture}
\newcommand{\Zc} {Z_{\rm c}}
\newcommand{\Remark}{\vspace{0mm} \parindent=0pt
         {\bf Remark.} \hspace{0mm} \parindent=3ex}
\newcommand{\Beginproof}{\vspace{0mm} \parindent=0pt
         {\bf Proof.} \hspace{3mm} \parindent=3ex}
\newcommand{\Endproof}{$\Box$ \vspace{5mm}
                        \parindent=3ex}
\newcommand{\Pcr}{P_{\rm cr}}
\begin{document}
\title{Positive and necklace solitary waves on bounded domains}
\author{G. Fibich\thanks{%
School of Mathematical Sciences, Tel Aviv University, Tel Aviv 69978 Israel,
fibich@tau.ac.il (corresponding author)} \and D. Shpigelman\thanks{%
School of Mathematical Sciences, Tel Aviv University, Tel Aviv 69978 Israel,
dimashpi87@gmail.com}}
\maketitle

\begin{abstract}
We present new solitary wave solutions of the two-dimensional nonlinear Schr\"odinger equation on bounded domains
(such as rectangles, circles, and annuli). These multi-peak ``necklace'' solitary waves
consist of several identical positive profiles (``pearls''), such that adjacent ``pearls'' have opposite signs.
They are stable at low powers, but become unstable at powers well below the critical power for collapse~$\Pcr$.
This is in contrast with the ground-state (``single-pearl'') solitary waves on bounded domains,
which are stable at any power below~$\Pcr$.

On annular domains, the ground state solitary waves are radial at low powers,
but undergo a symmetry breaking at a threshold power well below~$\Pcr$.
As in the case of convex bounded domains,
necklace solitary waves on the annulus are stable at low powers and
become unstable at powers well below~$\Pcr$. Unlike on convex bounded domains, however,
necklace solitary waves on the annulus have a second stability regime at powers well above~$\Pcr$.
For example, when the ratio of the inner to outer radii is 1:2,
four-pearl necklaces are stable when their power is between $3.1\Pcr$ and $3.7\Pcr$.
This finding opens the possibility to propagate localized laser beams with substantially more power than was possible until now.

 The instability of necklace solitary waves is excited by perturbations that break the antisymmetry between adjacent pearls, and is manifested by power transfer between pearls. In particular, necklace instability is unrelated to collapse.
In order to compute numerically
the profile of necklace solitary waves
on bounded domains, 
we introduce a non-spectral variant of Petviashvili's renormalization method.
\end{abstract}

{\bf Keywords;} solitary waves; nonlinear schrodinger equation; bounded domains; nonlinear optics; stability 

\tableofcontents

\section{Introduction}
\label{chap:Introduction}

 The nonlinear Schr\"odinger equation (NLS) in free space
\begin{subequations}
  \label{eq:NLS_free_space}
\begin{equation}
i\psi_z(z,x,y)+\Delta\psi+|\psi|^{2}\psi=0, \qquad -\infty<x,y<\infty, \quad z>0,
\end{equation}
\begin{equation}
\psi(0,x,y) = \psi_0(x,y), \qquad -\infty<x,y<\infty
\end{equation}
\end{subequations}
is one of the canonical nonlinear equations in physics. In nonlinear
optics it models the propagation of intense laser beams in a bulk Kerr
medium. In this case, $z$~is the axial coordinate in the direction
of propagation, $x$ and~$y$ are the spatial coordinates in the
transverse plane, $\Delta \psi:=\frac{\partial^2}{\partial x^2}\psi+\frac{\partial^2}{\partial y^2}\psi$ is the
diffraction term, and~$|\psi|^2\psi$ describes the nonlinear Kerr
response of the medium. For more information on the NLS in nonlinear optics and on NLS theory in free space and
on bounded domains, see the recent book~\cite{NLS-book}.

In some applications, it is
desirable to propagate laser beams over long distances.
In theory, this can be done by the NLS solitary waves
$\psi_{\rm sw} = e^{i \mu z} R_\mu(x,y)$, where $R_\mu$ is a solution of
\begin{equation}
   \label{eq:R-2D-free}
\Delta R(x,y)-\mu R+|R|^{2}R=0,\qquad -\infty<x,y<\infty.
\end{equation}
Unfortunately, the solitary wave solutions of~\eqref{eq:NLS_free_space} are unstable,
so that when perturbed, they either scatter (diffract) as $z \to \infty$, or collapse at a finite distance~$\Zc<\infty$.

In order to mitigate this ``dual instability'' limitation, Soljacic, Sears, and Segev~\cite{Soljacic-98} proposed in~1998
to use a necklace configuration that consists of $n$~identical beams (``pearls'')
that are located along a circle at equal distances, such that
adjacent beams are out of phase (i.e., have opposite signs),
see Figure~\ref{fig:segev_necklace_initial condition_8_pearls_no_green_lines}.
 The idea behind this setup is that the repulsion between adjacent out-of-phase beams resists the diffraction of each beam, and thus slows down its expansion.
Necklace beams in a Kerr medium were first observed experimentally by Grow et al.~\cite{Gaeta-07}.
Necklace beams were also studied in~\cite{Desyatnikov-01,PhysRevA.80.053816,PhysRevLett.86.420,PhysRevLett.94.113902,Yang-05}.
Recently, Jhajj et al.\ used a necklace-beam configuration to set up a thermal waveguide in air~\cite{Jhajj-14}.

\begin{figure}[ht!]
\begin{center}
\scalebox{.6}{\includegraphics{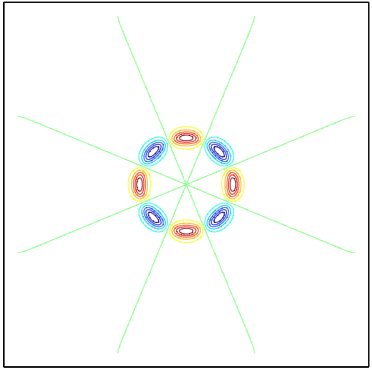}}
\caption{A circular necklace beam with 8~pearls (beams). Adjacent pearls are identical but have opposite phases. The electric field vanishes on the rays (solid lines) between adjacent pearls.
}
\label{fig:segev_necklace_initial condition_8_pearls_no_green_lines}
\end{center}
\end{figure}

As we shall see, there are no necklace solitary wave solutions of the free-space NLS~\eqref{eq:NLS_free_space}.
Thus, in a bulk  medium all
necklace beams ultimately collapse or scatter.
Yang et al.~\cite{Yang-05} showed theoretically and experimentally
that solitary necklace solutions can exist in a bulk medium with an optically-induced photonic lattice.
Because of the need to induce a photonic lattice, however, this approach is not applicable to propagation is a Kerr medium.
In this study we show that necklace solitary waves exist in a Kerr medium,
provided the beam is confined to a bounded domain.
This setup corresponds to propagation in hollow-core fibers, and is therefore relatively
easy to implement experimentally.

In hollow-core fibers, beam propagation can be modeled by the NLS on a bounded domain
\begin{subequations}
\label{eq:NLS_bounded}
\begin{equation}
i\psi_z(z,x,y)+\Delta\psi+|\psi|^{2}\psi=0,
\qquad (x,y)\in D , \qquad z>0,
\end{equation}
subject to
an initial condition
\begin{equation}
\psi_0(0,x,y)=\psi_0(x,y),
\qquad (x,y)\in D,
\end{equation}
and
a Dirichlet boundary condition at the fiber wall
\begin{equation}
\psi(z,x,y)=0, \qquad
(x,y)\in\partial D, \quad z\geq0.
\end{equation}
\end{subequations}
Here $D \subset {\Bbb R}^2$ is the cross section of the fiber, which is typically a circle of radius~$\rho$,
 denoted henceforth by~$B_\rho$.

Eq.~(\ref{eq:NLS_bounded}) admits the solitary waves $\psi_{\rm sw} =e^{i\mu z}R_\mu(x,y)$, where~$R_\mu$ is a solution of
\begin{subequations}
    \label{eq:dDim_R_ODE2}
\begin{equation}
 \label{eq:dDim_R_ODE2a}
\Delta R(x,y)-\mu R+|R|^{2}R=0, \qquad (x,y) \in D,
\end{equation}
\begin{equation}\label{eq:dDim_R_ODE2b}
R(x,y) =0, \qquad (x,y)  \in\partial D.
\end{equation}
\end{subequations}
In free space, the solitary-wave profile [i.e., the solution of~\eqref{eq:R-2D-free}] represents a perfect balance between the focusing nonlinearity and diffraction.
On a bounded domain, the reflecting boundary ``works with'' the focusing nonlinearity and ``against'' the diffraction.
In fact, the reflecting boundary can support finite-power solitary waves
even in the absence of a focusing linearity. These linear modes are solutions of the eigenvalue problem
\begin{equation}
  \label{eq:lin_R_mu_in_D}
\Delta Q(x,y) = \mu Q, \quad (x,y) \in D,
  \qquad\qquad
Q(x,y) =0, \quad (x,y)  \in\partial D.
\end{equation}
Of most importance is the first eigenvalue of~\eqref{eq:lin_R_mu_in_D} and its corresponding
positive eigenfunction, which we shall denote by~$\mu_{\rm lin}$ and~$Q^{(1)}$, respectively.

Solitary wave solutions of~\eqref{eq:NLS_bounded}  were studied
by Fibich and Merle~\cite{bounded-01}, Fukuizumi, Hadj Selem, and Kikuchi~\cite{Fukuizumi-12},
and Noris, Tavares, and Verzini~\cite{Noris-15},
primarily when~$D$ is the unit circle~$B_1$ and~$R_\mu$ is radial,
i.e., $R_\mu = R_\mu(r)$, $r = \sqrt{x^2+y^2}$, and $0 \le r\le 1$.
 In that case, for any $\mu_{\rm lin}<\mu<\infty$,
there exists a unique positive solution~$R_\mu^{(1)}(r)$.
This solution is monotonically decreasing in~$r$,
and its power $P(R_\mu^{(1)}):= \int_{B_1} |R^{(1)}_{\mu}|^2 \, dx dy$ is monotonically increasing in~$\mu$ from $P=0$ at
$\mu  =   \mu_{\rm lin}+$  to $P = \Pcr$ as $\mu \to \infty$.
In addition to this ground state, there exist a countable number of
excited radial states~$\{R_\mu^{(n)}(r)\}_{n = 2}^\infty$,
which are non-monotone and change their sign inside~$B_1$. These
excited states have a unique global maximum at $r=0$, and
additional lower peaks on concentric circles inside~$B_1$.

The excited states~$\{R_\mu^{(n)}(r)\}_{n = 2}^\infty$
are the two-dimensional radial analog of the excited states of the one-dimensional NLS on an interval,
see equation~\eqref{eq:NLS_bounded-1D} below,
which were studied by Fukuizumi et al.~\cite{Fukuizumi-12}.
 In our study here we consider a different type of solitary waves
of~(\ref{eq:NLS_bounded}), which attain their global maximum at $n$~distinct points inside~$D$.
These necklace solitary waves are thus the  non-radial
two-dimensional analog of the one-dimensional excited states.

The paper is organized as follows. In Section~\ref{sec:free-space} we briefly consider necklace solutions
in~${\Bbb R}^2$, which correspond to propagation in a bulk medium.
We illustrate numerically that their expansion is slower than that of single-beam solutions, and  show that 
there are no necklace solitary waves in free space.
In Section~\ref{sec:review} we briefly review the theory for
 the NLS on bounded domains.
 In Section~\ref{sec:Solitary wave} we construct necklace solitary waves with $n$~pearls (peaks), denoted by~$R_{\mu}^{(n)}$, on rectangular, circular, and annular domains.
To do that, we first compute the single-pearl (single-peak/ground state) solitary wave of~\eqref{eq:dDim_R_ODE2}, denoted by~$R_{\mu}^{(1)}$,
on a square, a sector of a circle, and a sector of an annulus, respectively.
Our numerical results for 
single-pearl solutions of~\eqref{eq:dDim_R_ODE2} suggest that:\footnote{These results are consistent with those obtained for radial positive solitary waves on the circle~\cite{bounded-01}
and for positive one-dimensional solitary waves on an interval~\cite{Fukuizumi-12}.}
\begin{enumerate}
  \item $R_{\mu}^{(1)}$ exists for $\mu$ in the range  $\mu_{\rm lin}<\mu< \infty$.
  \item As $\mu \to \mu_{\rm lin}$,  $R_{\mu}^{(1)}$ approaches the positive linear mode~$Q^{(1)}$, i.e., $R_{\mu}^{(1)} \sim c(\mu) Q^{(1)} $, where $ c(\mu) \to 0$.
  \item As $\mu$~increases, $R^{(1)}_{\mu}$ becomes
more localized, the effect of the nonlinearity becomes more pronounced, and that of the reflecting boundary becomes less pronounced.
  \item In particular, as $\mu \to \infty$,  $R^{(1)}_{\mu}$ approaches the free-space ground state~$R_{\mu, \rm 2D}^{(1), \rm free}$, which is the positive solution of~\eqref{eq:R-2D-free}.
  \item  The pearl power $P(R^{(1)}_{\mu}) := \int |R^{(1)}_{\mu}|^2 \, dx dy$ is monotonically increasing in~$\mu$. In particular,
\begin{equation}
\label{eq:VK-pearl}
    \frac{d}{ d \mu} P(R^{(1)}_{\mu}) >0, \qquad \lim_{\mu \to \mu_{\rm lin}}P(R^{(1)}_{\mu})=0, \qquad
\lim_{\mu \to \infty} P(R^{(1)}_{\mu})= \Pcr,
\end{equation}
where
$$
\Pcr = \int_{{\Bbb R}^2} \left|R_{\mu, \rm 2D}^{(1), \rm free}\right|^2 \, dxdy
$$
is the critical power for collapse.
\item On an annular domain, the ground state solitary waves are radial  (ring-type) at low powers,
but undergo a {\em symmetry breaking} into a single-peak profile at a threshold power well below~$\Pcr$.
In particular, equation~\eqref{eq:dDim_R_ODE2} on the annulus,
\begin{enumerate}
  \item does not have a unique positive solution.
  \item has a positive solution which is not a ground state.
\end{enumerate}
   Similar results were obtained by Kirr et al.~\cite{Kirr-08} for the NLS-GP equation with an inhomogeneous (e.g., double-well) potential.
\end{enumerate}

Once we obtain the single-pearl profile, we construct the necklace solitary wave~$R^{(n)}_{\mu}$
from $n$~identical pearls, such that adjacent pearls have opposite signs. Therefore, by construction,
\begin{equation}
\label{eq:P=nP}
P(R^{(n)}_{\mu}) = n P(R^{(1)}_{\mu}).
\end{equation}
Hence, by~\eqref{eq:VK-pearl},
\begin{equation}
\label{eq:VK-necklace}
   \frac{d}{ d \mu}  P(R^{(n)}_{\mu}) > 0,
\qquad \lim_{\mu \to \mu_{\rm lin}}P(R^{(n)}_{\mu})=0, \qquad
\lim_{\mu \to \infty} P(R^{(n)}_{\mu})= n\Pcr.
\end{equation}

In Section~\ref{sec:Linear stability} we study the stability
of necklace solitary waves on bounded domains,  by (i)~computing the eigenvalues and eigenvectors of the associated linearized problem,
and (ii)~by solving the NLS with  an initial condition which is a perturbed necklace solitary wave.
These simulations show that:
\begin{enumerate}

\item Single-pearl (ground state) solitary waves $\psi_{\rm sw}^{(1)} = e^{i \mu z} R_\mu^{(1)}$ on bounded domains are stable for $\mu_{\rm lin} <\mu<\infty$, i.e., for $0<P<\Pcr$.

   \item  For $n \ge 2$, there exists~$\mu_{\rm c}$, where $\mu_{\rm lin} <\mu_{\rm c}<\infty$, such that the necklace solitary waves
     $\psi_{\rm sw}^{(n)} = e^{i \mu z} R_\mu^{(n)}$ are stable for $\mu_{\rm lin}< \mu <\mu_{\rm c}$,
  i.e., for
$$
0<P(R_{\mu}^{(n)})< P_{\rm th}^{\rm necklace} (n), \qquad P_{\rm th}^{\rm necklace} (n):=P(R_{\mu_{\rm cr}}^{(n)}),
$$
and unstable for $0<\mu-\mu_{\rm c} \ll 1$. 
 Thus, the low-power necklace solitary waves inherit the stability of the linear necklace modes $e^{i \mu_{\rm lin} z} Q^{(n)}$ from which they bifurcate.\footnote{Here $Q^{(n)}$ denotes a linear necklace solitary wave made out of $n$ identical $Q^{(1)}$ single pearls.}

\item    On convex domains such as rectangles and circles,
         necklace solitary waves are unstable  for $\mu_{\rm c}< \mu<\infty$, i.e., for
$$
P_{\rm th}^{\rm necklace} (n)<P(R_{\mu}^{(n)})<n \Pcr.
$$
       This may seem surprising, since~$R_\mu^{(n)}$  satisfies the Vakhitov-Kolokolov (VK) condition for stability
  $\frac{d}{ d \mu}  P(R_{\mu}) > 0$, see~\eqref{eq:VK-necklace}. We note, however, that the VK condition implies stability for ground-state solitary waves, which is not the case for necklace solitary waves. Indeed, the VK condition tests whether the solitary wave is susceptible to an amplitude (focusing) instability, whereby the solitary wave amplitude increases (decreases)
as its width decreases (increases). The VK condition, however, does not determine whether the
solitary wave is susceptible to other types of instabilities. See~\cite{Ilan-11,Sivan-08,Sivan-08A} for more details.

 \item  The threshold power for necklace instability is substantially smaller
than the critical power for collapse, i.e.,
$$
  P_{\rm th}^{\rm necklace}(n)
<  P_{\rm cr}.
$$
Thus, the necklace becomes unstable when the power of each pearl is below~$\Pcr/n$.
For example, the threshold power for instability of a necklace solitary wave with $n=4$ pearls is
$ \approx 0.55 \Pcr$ on a square domain
and $\approx 0.24 \Pcr$ on a circular domain,
i.e., when the power of each of the four pearls is roughly
$0.14 \Pcr$ and $0.06 \Pcr$, respectively.
As noted, single-pearl solitary waves are stable for $0<P<P_{\rm cr}$.
Hence, power-wise, {\em necklace solitary waves on circular and rectangular domains are considerably less stable than single-pearl solitary waves on these domains}.

\item  Necklace solitary waves on annular domains are also stable for $\mu_{\rm lin}< \mu <\mu_{\rm c}$ and unstable for $0<\mu-\mu_{\rm c} \ll 1$. 
In addition, $P_{\rm th}^{\rm necklace}\approx 0.24 \Pcr$ for a $4$-pearl
necklace solitary waves on an annular domain with a 1:2 ratio of the inner to outer radii, which is the same as for a circular necklace. Thus, the addition of a hole ``fails'' to increase the threshold power for necklace instability.

  Unlike on convex domains, however,
necklace solitary waves on the annulus have a second stability regime at powers well above~$\Pcr$.
For example, when the ratio of the inner to outer radii is 1:2,
4-pearl necklaces are stable for powers between $3.1\Pcr$ and $3.7\Pcr$.

 \item The instability of necklace solitary waves on circular, rectangular, and annular domains  share many similar features. Thus, the unstable modes are those which are symmetric with respect to one or more interfaces between pearls. These modes break the anti-symmetry between pearls, and thus allow for power transfer between pearls. Another common feature is that the unstable modes for a necklace with $2\times 2$~pearls are either
symmetric with respect to all four interfaces between pearls, or are symmetric in one direction
and anti-symmetric  in the perpendicular direction. In the latter case, the unstable modes
consist of two identical unstable modes of a 2-pearl necklace on half the domain.

 \item  In Section~\ref{sec:One-dimensional-necklaces} we consider the stability of necklace solitary waves of one-dimensional cubic and quintic NLS. We observe that the (in)stability properties of necklace solitary waves in the subcritical and critical cases are similar, in the sense that the instability occurs above a certain power threshold, and  is related to power transfer between pearls. In particular, this further shows that necklace instability is unrelated to collapse.

\end{enumerate}

In Section~\ref{sec:numerical} we discuss the numerical methods. In particular,
we introduce {\em a novel non-spectral variant of Petviashvili's renormalization method} that can be used to compute multi-dimensional solitary waves on bounded domains.

There are several unexpected results in this study:
\begin{enumerate}
  \item The threshold power for instability of a
necklace on a rectangular domain is {\em twice} that on a circular domain.
\item The threshold power for instability of a necklace on a circular domain is the same as
that of an annular domain. This is surprising, since the hole  reduces the interaction between the pearls, and so could be expected to have a stabilizing effect on the necklace.
\item In the case of necklace solitary waves on the annulus, there exists a second stability regime at powers
     above~$3\Pcr$.

\item To the best of our knowledge, our results for an annular domain provide the first example of a positive solitary wave of the {\em homogeneous} two-dimensional cubic NLS which
\begin{enumerate}
\item  satisfies the VK condition for stability, yet is unstable,
\item is not a ground state, and
\item is not the unique positive solitary wave.
\end{enumerate}
 As noted, such results were observed for the NLS-GP equation with an inhomogeneous potential~\cite{Kirr-08}.

\end{enumerate}

The possibility to propagate stable necklace solitary waves with powers well above~$\Pcr$ may be important for applications. While an annular domain does not correspond to a standard hollow-core fiber, the NLS on an annular domain models beam propagation in a
step-index optical fiber with a linear index of refraction
$$
n_0 = \left \{
\begin{array}{ccc}
n_0^{(1)}, \qquad \mbox{ \rm if} ~~0 \le r <r_1, \\
n_0^{(2)}, \qquad \mbox{ \rm if} ~~r_1 < r <r_2, \\
n_0^{(3)}, \qquad \mbox{ \rm if} ~~r_2 < r <r_3,
\end{array}
\right.
$$
where $0< r_1  <r_2  <r_3$ and $n_0^{(1)}, n_0^{(3)} <n_0^{(2)}$.

Our results can be extended in several directions. For example, NLS necklaces can be constructed on ``irregular'' two-dimensional domains, see e.g., Figure~\ref{fig:rectangle_step_necklace}. Such
necklace solitary waves may also be stable at powers above~$\Pcr$.
 One can also construct necklace solitary waves in dimensions higher than~2. For example, one can compute a single-pearl solitary wave
of the three-dimensional NLS on the unit cube, and then use it to construct a necklace solitary wave on a three-dimensional box. The question whether the NLS admits necklace solitary waves made out of non-identical pearls is open.

{\vspace{0mm} \parindent=0pt
         {\bf  A note on notations.} \hspace{0mm} \parindent=3ex}
 In this study we denote the profile of a necklace solitary wave with $n$~peaks (pearls) by~$R^{(n)}_{\mu}$.
In order to be consistent with this notation:
\begin{enumerate}
  \item The single-pearl profile,   i.e., the positive (ground state) solution
of~\eqref{eq:dDim_R_ODE2} is denoted by~$R^{(1)}_{\mu}$,
in contrast to the general convention of denoting the ground state by~$R^{(0)}_{\mu}$.

 \item The free-space necklace solutions in Section~\ref{sec:free-space} are parameterized by $n = 2m$.
\end{enumerate}

\section{Necklace solutions in free space}
   \label{sec:free-space}

The free space linear Schr\"odinger equation
\begin{equation}
   \label{eq:lin-Sch-necklace}
i \psi_z(z,x,y) + \Delta \psi = 0, \qquad -\infty<x,y<\infty, \quad z>0
\end{equation}
models the propagation of laser beams in a bulk linear medium.
This equation admits the
{\em Laguerre-Gaussian vortex solutions}
$$
   \psi_{\rm LG}^{\rm vortex}(z,r,\theta) := \frac{1}{L(z)}\left(\frac{r}{L(z)} \right)^m  e^{- \frac{1-4 i z }{L^2(z)} r^2+  i(m+1) \zeta(z)}
          e^{ i m \theta}, \qquad m=1,2, \dots,
$$
where  $L(z) = \sqrt{1+16 z^2}$,  $r = \sqrt{x^2+y^2}$, and $\zeta = -\arctan (4z)$. 
Since $\Delta \psi = \psi_{rr}+ \frac1r \psi_r+  \frac1{r^2} \psi_{\theta \theta}$,
if $\psi = A(z,r) e^{i m \theta}$ is a
solution of~\eqref{eq:lin-Sch-necklace}, then so is~$\psi = A(z,r) \cos (m \theta)$, as
in both cases $\psi_{\theta \theta} = -m^2 \psi$. Therefore,
$$
   \psi_{\rm LG}^{\rm necklace} := \frac{1}{L(z)}\left(\frac{r}{L(z)} \right)^m
 e^{- \frac{1-4 i z }{L^2(z)} r^2 + i(m+1) \zeta(z)}
          \cos( m \theta), \qquad m=1,2, \dots
$$
are also solutions of~\eqref{eq:lin-Sch-necklace}.
Whereas the amplitude of~$\psi_{\rm LG}^{\rm vortex}$ is radial, 
that of~$\psi_{\rm LG}^{\rm necklace}$ has a $|\cos( m \theta)|$ dependence. Hence,
$|\psi_{\rm LG}^{\rm necklace}|$ attains its maximum at the $n=2m$~points
$$
(r_j,\theta_j) = \left(r_m(z),\frac{j  \pi}{m}\right), \quad
r_m(z):=\arg \max_r  \left(\left(\frac{r}{L(z)} \right)^m e^{- \frac{r^2}{L^2(z)}} \right), \quad j=1, \dots, n.
$$
These $n$~peaks are located at equal distances along an expanding circle
of radius~$r_m(z)= r_m(0) L(z)$,  such that adjacent peaks have a $\pi$ phase difference.

The NLS~\eqref{eq:NLS_free_space} also admits vortex solutions of the form $\psi = A(z,r) e^{ i m \theta}$~\cite{Vortex-PD-08}.
In nonlinear propagation, however,
it is no longer true that if $\psi = A(z,r) e^{ i m \theta}$ is a solution,
then so is~$\psi = A(z,r) \cos (m \theta)$.
Nevertheless, the NLS admits solutions that have a ``necklace structure with $n$~pearls'':
\begin{lem}
  \label{lem:necklace}
  Let $\psi(z,r,\theta)$ be a solution of the {\rm NLS}~\eqref{eq:NLS_free_space}, let~$n$ be even, and let
\begin{equation}
  \label{eq:necklace_ic}
  \psi_0(r,\theta) = f(r) \cos \left( \frac{n}2 \theta\right).
\end{equation}
Then
\begin{enumerate}
  \item
$\psi$ is invariant under rotations by~$\frac{4 \pi}{n}$, i.e., $\psi(z,r,\theta) =\psi(z,r,\theta+\frac{4 \pi}{n})$.
 \item $\psi$ is antisymmetric with respect to the $n	$ rays
$\theta \equiv \Theta_j:=  \frac{\left(1+2j\right)\pi}{n}$,
 i.e.,
\begin{equation}
  \label{eq:antisymmetry-necklace}
     \psi(z,r,\Theta_j+\theta) =-\psi(z,r,\Theta_j+\theta),
 \qquad j = 1, \dots,n.
\end{equation}
\end{enumerate}
\end{lem}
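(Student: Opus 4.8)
The plan is to deduce both statements from the symmetries of the NLS~\eqref{eq:NLS_free_space} together with uniqueness of the solution of the initial value problem. First I would record the elementary fact that if $\psi(z,r,\theta)$ solves~\eqref{eq:NLS_free_space}, then so do: (a)~the rotated field $\psi(z,r,\theta+\alpha)$ for any fixed $\alpha$; (b)~the reflected field $\psi(z,r,2\beta-\theta)$ for any fixed $\beta$; and (c)~the sign-reversed field $-\psi(z,r,\theta)$. Claims (a) and (b) hold because in polar coordinates $\Delta$ acts on the angular variable only through $\partial_{\theta\theta}$ (there is no first-order $\theta$-derivative), while the nonlinearity $|\psi|^2\psi$ is a pointwise operation; claim (c) holds because the equation is odd in $\psi$. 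Composing (b) and (c) produces the ``anti-reflection'' $T_\beta:\psi(z,r,\theta)\mapsto -\psi(z,r,2\beta-\theta)$, which again sends solutions to solutions.

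Next I would check that the datum~\eqref{eq:necklace_ic} is fixed by the relevant transformations. Write $m:=n/2$, which is an integer exactly because $n$ is even --- this is also what makes $\cos(\tfrac{n}{2}\theta)$ single-valued. Rotation by $\tfrac{4\pi}{n}=\tfrac{2\pi}{m}$ sends $\psi_0$ to $f(r)\cos(m\theta+2\pi)=\psi_0$, which gives the first invariance. For the rays, a short computation gives $2m\Theta_j=(1+2j)\pi$, so that $T_{\Theta_j}\psi_0(r,\theta)=-f(r)\cos\!\big((1+2j)\pi-m\theta\big)=-f(r)(-1)\cos(m\theta)=\psi_0(r,\theta)$, where I used $\cos((1+2j)\pi)=-1$ and $\sin((1+2j)\pi)=0$.

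Finally I would invoke uniqueness. Since rotation by $\tfrac{4\pi}{n}$ and each $T_{\Theta_j}$ carry the solution with datum $\psi_0$ to a solution with the same datum, and since these operations are isometries (so they do not alter the maximal interval of existence), the transformed and original solutions must coincide for all admissible $z$. This gives $\psi(z,r,\theta+\tfrac{4\pi}{n})=\psi(z,r,\theta)$, which is statement~1, and $\psi(z,r,\theta)=-\psi(z,r,2\Theta_j-\theta)$; replacing $\theta$ by $\Theta_j+\theta$ in the latter yields the antisymmetry of statement~2 (read with the evident correction $\psi(z,r,\Theta_j+\theta)=-\psi(z,r,\Theta_j-\theta)$ in~\eqref{eq:antisymmetry-necklace}). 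I do not anticipate a genuine obstacle here: the only points that require care are that each symmetry operation preserves the function class in which the solution of the two-dimensional cubic NLS is unique --- so that the standard uniqueness theorem applies verbatim --- and the bookkeeping with $m=n/2$, which is exactly where the hypothesis that $n$ is even enters.
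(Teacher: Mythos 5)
Your proposal is correct and is exactly the argument the paper has in mind: its one-line proof (``follows from the uniqueness of NLS solutions'') is precisely the symmetry-plus-uniqueness reasoning you spell out, including the role of $n$ even in making $\cos(\tfrac n2\theta)$ invariant under the rotation and anti-reflections. Your reading of~\eqref{eq:antisymmetry-necklace} as $\psi(z,r,\Theta_j+\theta)=-\psi(z,r,\Theta_j-\theta)$ is also the correct fix of an evident typo in the statement.
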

\begin{proof}
This is a standard result, which follows from the uniqueness of NLS solutions.
\end{proof}

The repulsion between adjacent out-of-phase pearls (beams) can be understood as follows.
  By~\eqref{eq:antisymmetry-necklace},
\begin{equation}
 \label{eq:psi===0}
  \psi(z,r,\Theta_j) \equiv 0, \qquad j = 1, \dots, n.
\end{equation}
 Thus, as the necklace evolves, the electric field remains zero on the
rays that separate between adjacent beams
(see Figure~\ref{fig:segev_necklace_initial condition_8_pearls_no_green_lines}).
Hence, the dynamics of the $j$th pearl is governed by the NLS~\eqref{eq:NLS_bounded}
on the $j$th sector $D=\{\Theta_j<\theta<\tilde\Theta_{j+1}, 0<r<\infty\}$,
subject to Dirichlet boundary conditions on the rays $\theta\equiv \Theta_j$ and
$\theta \equiv \tilde\Theta_{j+1}$.
Because a Dirichlet boundary condition is reflecting,
there is no interaction between adjacent pearls (beams).
\begin{cor}
So long that it exists,
the solution of the {\rm NLS}~\eqref{eq:NLS_free_space} with the necklace initial condition~\eqref{eq:necklace_ic} maintains a necklace structure.
\end{cor}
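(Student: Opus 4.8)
The plan is to read the corollary off Lemma~\ref{lem:necklace} together with the sector-decoupling observation already made in the paragraph preceding the statement, interpreting ``maintains a necklace structure'' as the conjunction of three precise assertions: (i)~for every $z$ in the existence interval the field vanishes identically on the $n$ separating rays $\theta\equiv\Theta_j$; (ii)~the restriction of $\psi$ to each open angular sector $S_j=\{\Theta_j<\theta<\Theta_{j+1},\ 0<r<\infty\}$ between consecutive rays is \emph{the} solution of the Dirichlet NLS~\eqref{eq:NLS_bounded} on $S_j$, so that the pearls evolve without interacting; and (iii)~the $n$ pearls are pairwise congruent, with adjacent pearls of opposite sign.

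First I would dispose of~(i): it is exactly~\eqref{eq:psi===0}, which is the antisymmetry relation~\eqref{eq:antisymmetry-necklace} of Lemma~\ref{lem:necklace} evaluated at $\theta=0$. Granting~(i), assertion~(ii) is immediate: on $S_j$ the function $\psi$ satisfies the interior equation of the NLS and the initial datum $f(r)\cos(\tfrac n2\theta)$ restricted to $S_j$, while~\eqref{eq:psi===0} supplies the homogeneous Dirichlet condition on $\partial S_j$; invoking uniqueness for the Dirichlet problem on $S_j$ identifies $\psi|_{S_j}$ as the unique such solution, so the dynamics in $S_j$ decouples from the rest of the plane. For~(iii) I would use the rotation invariance from Lemma~\ref{lem:necklace}: rotation by $\tfrac{4\pi}{n}$ maps $S_j$ onto $S_{j+2}$ and carries $\psi|_{S_j}$ onto $\psi|_{S_{j+2}}$, so pearls two apart coincide; and I would use the antisymmetry~\eqref{eq:antisymmetry-necklace} across a single interface ray $\theta\equiv\Theta_j$ to see that the pearl on one side of that ray is the reflected image of its neighbour with the sign reversed. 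Composing these two symmetries shows that all $n$ pearls share one profile up to an isometry of ${\Bbb R}^2$ and that the sign alternates around the circle — which is precisely the necklace structure.

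Since every ingredient is already in hand, there is no genuine analytic obstacle here; the statement is a bookkeeping consequence of the lemma, and the only point that requires care is to pin down the meaning of ``necklace structure'' and the uniqueness class being used. Concretely, I would state at the outset in which space (e.g.\ $H^1$ or $H^2$ with Dirichlet traces on the sector boundaries) the Dirichlet NLS on a sector is locally well-posed, and note that all the maps invoked — rotations by $\tfrac{4\pi}{n}$ and reflections across the rays $\theta\equiv\Theta_j$ — are isometries of ${\Bbb R}^2$ under which both the equation (because $-\Delta$ and $|\psi|^2\psi$ are equivariant) and the initial condition (because of the $\cos(\tfrac n2\theta)$ factor) are invariant up to a global sign, so that the symmetry-reduction argument is rigorous.
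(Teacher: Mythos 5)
Your proposal is correct and follows essentially the same route as the paper: the corollary is obtained there as an immediate consequence of Lemma~\ref{lem:necklace} and the preceding observation that, by~\eqref{eq:psi===0}, the field stays zero on the separating rays, so each pearl evolves under the Dirichlet NLS on its own sector with no interaction between pearls, and the rotation/antisymmetry of Lemma~\ref{lem:necklace} keeps the pearls identical up to alternating sign. Your extra care about the uniqueness class and the equivariance under rotations and reflections is a reasonable sharpening of the same argument, not a different approach.
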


 If the power of each pearl is sufficiently above the critical power for collapse~$\Pcr$, then
the solution collapses at a finite distance.\footnote{This follows from the variance identity on a bounded domain.} In that case, $\psi$~collapses simultaneously
at $n$~points.
If, however, the pearl power is below~$\Pcr$,
then $\psi$ scatters as $z \to \infty$. This scattering is slower than for a single pearl, however,
since the reflecting boundaries (or equivalently, the repulsion by the adjacent out-of-phase pearls)
slow down the pearl expansion.
In particular, if the power of each pearl is
slightly below~$\Pcr$, diffraction is almost balanced by the focusing Kerr nonlinearity
and the repulsion by the adjacent pearls. Consequently,
the pearl expansion is much slower than in the single-pearl case, as is demonstrated numerically 
in Figure~\ref{fig:segev_quarter_and_circle_necklace}.

\begin{figure}[ht!]
\begin{center}
\scalebox{.8}{\includegraphics{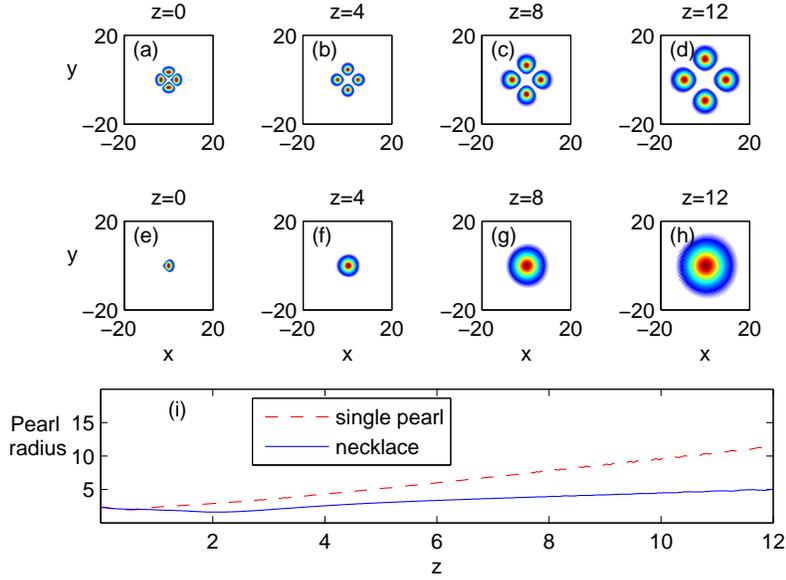}}
\caption{Solution of the NLS~\eqref{eq:NLS_free_space}.  (a)-(d): Contour plots of~$|\psi|$ at different distances,
for $\psi_{0}=\sqrt{2} \,\mbox{sech}(r-3.4)\cos(2 \theta)$. (e)-(h): The same  when~$\psi_0$ consists
of a single pearl, i.e., $\psi_{0}=\sqrt{2} \,\mbox{sech}(r-3.4)\cos(2 \theta)$ for $-\pi/4 \le \theta \le \pi/4$
and  $\psi_{0}\equiv 0$ otherwise.
 (i):~Radius of a single pearl as a function of~$z$,
for the 4-pearl necklace (solid), and for the single pearl (dashes).
}
\label{fig:segev_quarter_and_circle_necklace}
\end{center}
\end{figure}

Let us briefly discuss the stability of necklace beams in free space. By Lemma~\ref{lem:necklace},
as long as the initial condition is of
the form~\eqref{eq:necklace_ic}, the necklace structure is preserved.
Hence,  {\em  the necklace structure can only be destroyed by perturbations
that are not of the form~\eqref{eq:necklace_ic}}. Such ``azimuthal'' perturbations
break the antisymmetry with respect to the rays $\theta\equiv \Theta_j$,
and lead to power transfer between adjacent pearls.  Generally speaking, numerical
simulations suggest that necklace beams in free-space are more stable under random perturbations
than vortex beams~\cite{Gaeta-07,Soljacic-98}.

\subsection{No necklace solitary waves in free space}

   Ideally, a stable necklace beam should neither expand nor collapse, i.e., it should be a solitary wave.
   {\em While the free-space {\rm NLS} admits necklace-type solutions, see Lemma~{\rm \ref{lem:necklace}}, it does not admit necklace-type solitary waves.}
This is easy to see for the one-dimensional NLS\footnote{In one dimension we consider both the subcritical case $\sigma=1$ and the critical case $\sigma=2$.}
\begin{equation}
   \label{eq:NLS-1D-free}
i \psi_z(z,x) + \psi_{xx} + |\psi|^{2 \sigma} \psi = 0, \qquad -\infty<x<\infty, \quad z>0.
\end{equation}
Indeed, the solitary waves of~\eqref{eq:NLS-1D-free} are of the form
$\psi_{\rm sw} =e^{i\mu z}R_\mu(x)$, where~$R_\mu$ is a solution of
\begin{equation}
   \label{eq:R-1D-free}
  R''(x)-\mu R+|R|^{2 \sigma} R = 0, \qquad R(\pm \infty) = 0.
\end{equation}
The unique solution of~\eqref{eq:R-1D-free} is
$$
R_{\mu, \rm 1D}^{(1), \rm free}(x) = \mu^{\frac{1}{2 \sigma}}R_{\rm 1D}^{(1), \rm free}(\mu^{\frac{1}{2}}x),
\qquad
R_{\rm 1D}^{(1), \rm free}(x) := (1+\sigma)^{\frac{1}{2\sigma}} \, \mbox{\rm sech}^{\frac{1}{\sigma}} (\sigma x).
$$
Since~$R_{\mu, \rm 1D}^{(1), \rm free}$ has a single peak,
there are no necklace solutions of~\eqref{eq:R-1D-free}.
In two dimensions, we have
\begin{lem}
The free-space two-dimensional {\rm NLS}~\eqref{eq:R-2D-free} does not admit necklace-type solitary waves.
\end{lem}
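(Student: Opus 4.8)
The plan is to show that any solitary-wave profile $R$ solving \eqref{eq:R-2D-free} that has a ``necklace structure'' — i.e., $n=2m$ identical peaks at $(r_0,\,j\pi/m)$ with adjacent peaks of opposite sign and with $R$ vanishing on the $n$ separating rays $\theta\equiv\Theta_j$ — cannot exist. The idea is to reduce the problem to the single-pearl solitary wave on an infinite sector and then derive a contradiction from scaling/compactness. First I would use the antisymmetry across the rays (the steady-state analog of \eqref{eq:antisymmetry-necklace}): since $R$ vanishes on $\theta\equiv\Theta_j$, the restriction of $R$ to one sector $D_1=\{\Theta_1<\theta<\Theta_2,\ 0<r<\infty\}$ is a solution of the Dirichlet problem \eqref{eq:dDim_R_ODE2} on that sector, which is a cone of opening angle $2\pi/n$. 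So the statement follows once I show that \eqref{eq:dDim_R_ODE2} on an infinite planar sector admits no positive $H^1$ solution.

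The key step is therefore: the homogeneous NLS ground-state equation on an infinite cone has no nontrivial finite-energy solution. I would argue this by a Pohozaev/rescaling argument. Any such positive solution $R$ on the sector satisfies, after multiplying \eqref{eq:dDim_R_ODE2a} by $R$ and by $x\cdot\nabla R$ and integrating (the boundary terms on the two straight edges vanish because $R$ and its tangential derivative vanish there, and the edge $r=0$ contributes nothing for finite-energy solutions), the two identities
\begin{equation}
\int_{D_1}|\nabla R|^2 + \mu\int_{D_1}R^2 = \int_{D_1}R^4,\qquad
\mu\int_{D_1}R^2 = 0 \cdot \int_{D_1}|\nabla R|^2 + \tfrac12\int_{D_1}R^4 - \int_{D_1}|\nabla R|^2
\end{equation}
(the precise $2$D Pohozaev identity, in which the quadratic diffraction term contributes with coefficient $(d-2)/2=0$). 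Eliminating the quartic term gives $\mu\int R^2 = 0$, which forces $R\equiv 0$ since $\mu>0$. Equivalently and more elementarily: if $R$ is a nontrivial solution on the cone, then for every $\lambda>0$ the rescaled function $R_\lambda(x):=R(\lambda x)$ lives on the same cone (cones are scale-invariant), and minimizing the action $S(R_\lambda)$ over $\lambda$ forces the $d=2$ Derrick-type identity $\int|\nabla R|^2 = 0$, again a contradiction. I would present whichever of these two is cleanest; both exploit that in two dimensions the kinetic term is scale-invariant, so the virial identity has no room for a nontrivial balance on a scale-invariant domain.

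The main obstacle is making the boundary/regularity bookkeeping rigorous: one must justify that the single-pearl restriction $R|_{D_1}$ is genuinely an $H^1_0(D_1)$ solution (so that the Pohozaev multiplication by $x\cdot\nabla R$ is legitimate and all boundary integrals vanish), and that the vertex $r=0$ of the cone — where the domain is not smooth — does not produce a spurious boundary contribution; for finite-energy solutions a standard capacity/cutoff argument near the vertex handles this. A secondary point is that \emph{a priori} a necklace solitary wave of \eqref{eq:R-2D-free} need not have exactly the symmetric form above — but by definition (as set up earlier in the excerpt) a necklace profile consists of $n$ identical pearls of alternating sign arranged on a circle, so it does vanish on the $n$ bisecting rays by the same uniqueness/reflection argument as in Lemma~\ref{lem:necklace}, and the reduction to one sector is forced. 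Once the reduction is in place, the nonexistence on the cone is the decisive and essentially one-line scaling fact.
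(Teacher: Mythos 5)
Your reduction to a single sector is fine and coincides with the paper's first step (the paper restricts the putative necklace to the positive quarter-plane), but the step you call ``decisive'' --- nonexistence on the infinite cone via Pohozaev/Derrick scaling --- is wrong. The correct Pohozaev identity for $\Delta R-\mu R+R^3=0$ with Dirichlet data on a planar sector with vertex at the origin is obtained with the multiplier ${\bf x}\cdot\nabla R$; since ${\bf x}\cdot{\bf n}=0$ on the two straight edges and $(d-2)/2=0$ in two dimensions, it reads
\begin{equation*}
\mu\int_{D_1}R^2=\tfrac12\int_{D_1}R^4 ,
\end{equation*}
with \emph{no} gradient term. Your second displayed identity contains an extra $-\int|\nabla R|^2$ (inconsistently, right next to the coefficient $0$ you yourself quote), and it is precisely this spurious term that manufactures the contradiction $\mu\int R^2=0$. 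With the correct identity, combining with the $L^2$ identity $\int|\nabla R|^2+\mu\int R^2=\int R^4$ gives only $\int|\nabla R|^2=\mu\int R^2$, which is perfectly consistent. The Derrick variant fails for the same reason: in $d=2$ the kinetic term is scale-invariant, so $\frac{d}{d\lambda}S(R_\lambda)\big|_{\lambda=1}=0$ again yields $\mu\int R^2=\frac12\int R^4$, not $\int|\nabla R|^2=0$. A quick sanity check shows no scaling argument can work here: the cone and ${\Bbb R}^2$ are both dilation-invariant, so the dilation identities on the sector are identical to those in free space, where the positive solution of~\eqref{eq:R-2D-free} (the Townes profile) does exist and satisfies exactly these relations.

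So the nonexistence of a positive finite-energy solution on the infinite sector needs a different mechanism. A rigorous route is an Esteban--Lions-type argument using the \emph{translation} multiplier $\chi\cdot\nabla R$ with a fixed direction $\chi$ chosen so that ${\bf n}\cdot\chi\ge 0$ on both edges (possible because the sector lies in a half-plane); this forces $\partial_{\bf n}R\equiv 0$ on the edges, and then $R\equiv 0$ by unique continuation. The paper itself argues differently (and only informally): it views the quarter-plane solution as a limit of positive solutions on expanding squares $[0,A]^2$, whose peak must sit at the center $(A/2,A/2)$ by the symmetrization argument of Corollary~\ref{cor:square}, so the peak escapes to infinity, contradicting a fixed peak location. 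Either of these would repair your second step; the scaling computation as written does not.
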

\begin{proof}
We provide an informal proof.
 Assume by negation that there is a necklace solution with e.g., $n=4$~pearls. Without loss of generality, their peaks are located at~$(\pm a,\pm a)$.
By antisymmetry, $R\equiv 0$ on the $x$-axis and $y$-axis. Therefore, there exists a nontrivial, positive,
single-peak solution of~\eqref{eq:dDim_R_ODE2} on the positive quarter-plane $S_\infty := \{0<x,y<\infty\}$,
whose peak is at~$(a, a)$.
By continuity, this solution is the limit of
positive, single-peak solutions of~\eqref{eq:dDim_R_ODE2} on the square $S_A := \{0 <x,y<A\}$.
By symmetry, however, the peak of these solutions is at~$(A/2,A/2)$.\footnote{This statement is further supported by
Corollary~\ref{cor:square}.} Since
$\lim_{A \to \infty}(A/2,A/2) \not= ( a,a)$, we reach a contradiction.
\end{proof}

\section{NLS on a bounded domain}
\label{sec:review}

In this section we briefly review the NLS on bounded domain. See~\cite{bounded-01} and~\cite[Chapter~16]{NLS-book}
for further details.
Solutions of~(\ref{eq:NLS_bounded}) conserve their {\em power} and Hamiltonian, i.e.,
$$
P(\psi) \equiv P(\psi_0), \qquad H(\psi) \equiv H(\psi_0) ,
$$
where
$$
P(\psi):= \int_{D} |\psi|^2 \, dx dy, \qquad H(\psi):= \int_{D} |\nabla \psi|^2 \, dx dy- \frac12\int_{D} |\psi|^4 \, dx dy.
$$

  Recall that the free-space {\em ground-state} solitary
waves\footnote{i.e., the positive solutions of~\eqref{eq:R-2D-free}.} depend on~$\mu$ through the scaling
$$
R^{(1), \rm free}_{\mu , \rm 2D}({\bf x}) = \mu^{\frac{1}{2}}R^{(1), \rm free}_{\rm 2D}(\mu^{\frac{1}{2}}{\bf x}),
\qquad R^{(1), \rm free}_{\rm 2D} := R^{(1), \rm free}_{\mu=1 , \rm 2D}.
$$
Consequently, their power is independent of~$\mu$, i.e.,
\begin{equation}
  \label{eq:P(mu)=P}
P(R^{(1), \rm free}_{\mu , \rm 2D}) \equiv P(R^{(1), \rm free}_{\rm 2D}) = \Pcr,
\end{equation}
where $\Pcr$ is the critical power for collapse.
On bounded domains, however,  there is no such scaling for~$R_\mu^{(1)}$.
Indeed, numerical simulations and some analytic results show that the
power of $R^{(1)}_{\mu}$ is strictly increasing in~$\mu$.

 The absence of a scaling invariance
can be used to give the following variational characterization
of the ground-state solitary waves  on bounded domains:
\begin{conj}
\label{conj:unique_min}
Let $D$ be a regular bounded domain, and let $\mu_{\rm lin}$ be the first eigenvalue of~\eqref{eq:lin_R_mu_in_D}.
 Then
for all $\mu \in (\mu_{\rm lin},  \infty)$,
\begin{enumerate}
  \item Eq.~\eqref{eq:dDim_R_ODE2} has minimal-power nontrivial solution, denoted by~$R^{(1)}_{\mu}$, which is positive  and unique (up to symmetries that leave the domain invariant).
   \item
$R^{(1)}_{\mu}$
is the unique real minimizer over all  $U(x,y) \in H_0^1(D)$
 of
$$
  \inf_{||U||_2^2 =||R^{(1)}_{\mu}||_2^2} H(U) .
$$
\item
 For all $0<P<\Pcr$,  there exists a unique
$\mu_P \in (\mu_0,  \infty)$ such that
$R^{(1)}_{\mu_P}$ is the unique real minimizer over all  $U(x,y) \in H_0^1(D)$ of
\begin{equation}
   \label{eq:variational}
  \inf_{||U||_2^2 =P} H(U) .
\end{equation}
\end{enumerate}
\end{conj}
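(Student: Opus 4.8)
The plan is to study all three assertions through the single constrained minimization problem defining the value function $I(P):=\inf\{H(U):U\in H_0^1(D),\ \|U\|_2^2=P\}$ for $0<P<\Pcr$, show it is attained by an (essentially unique) positive function whose Euler--Lagrange multiplier $\mu_P$ lies in $(\mu_{\rm lin},\infty)$, and show that $P\mapsto\mu_P$ is a strictly increasing bijection onto $(\mu_{\rm lin},\infty)$; parts~1--3 then follow by setting $R^{(1)}_{\mu_P}:=U_*$ and reading the correspondence in both directions. Existence is where the bounded domain helps decisively. The cubic nonlinearity is $L^2$-critical in two dimensions, and the sharp Gagliardo--Nirenberg inequality $\|U\|_4^4\le\frac{2}{\Pcr}\|\nabla U\|_2^2\|U\|_2^2$ gives, on the constraint set, $H(U)\ge(1-P/\Pcr)\|\nabla U\|_2^2\ge 0$, so $H$ is bounded below and coercive in the $H_0^1$ norm along minimizing sequences once $P<\Pcr$. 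Since $D$ is bounded, $H_0^1(D)\hookrightarrow L^4(D)$ compactly, so a minimizing sequence has a subsequence converging weakly in $H_0^1$ and strongly in $L^2$ and $L^4$; the constraint and the quartic term pass to the limit while the Dirichlet energy is weakly lower semicontinuous, so the limit $U_*$ is a minimizer.

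Second, regularity and positivity: since $\|\nabla|U_*|\|_2\le\|\nabla U_*\|_2$ (diamagnetic inequality, equality iff $U_*$ has constant phase) and $|U_*|$ obeys the same constraint, we may take $U_*$ to be non-negative and real; the Lagrange multiplier rule then yields $\mu_P\in\mathbb{R}$ with $\Delta U_*-\mu_P U_*+|U_*|^2U_*=0$ in $D$, $U_*|_{\partial D}=0$, and elliptic regularity together with the strong maximum principle and Hopf's lemma give $U_*\in C^\infty(D)$ with $U_*>0$ in $D$. Writing the equation as $(-\Delta-U_*^2)U_*=-\mu_P U_*$ and noting that $U_*>0$ is the associated Dirichlet eigenfunction, $-\mu_P$ is the \emph{principal} eigenvalue of $-\Delta-U_*^2$, which is strictly below the principal eigenvalue of $-\Delta$; hence $\mu_P>\mu_{\rm lin}$. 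Setting $R^{(1)}_{\mu_P}:=U_*$, the variational characterizations in parts~1 and~2 then hold by construction, modulo the uniqueness discussed below.

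Third, the range of $P\mapsto\mu_P$. As $P\to0^+$, coercivity forces $U_*\to0$ in $H_0^1$, so continuity of the principal eigenvalue under the potential perturbation $U_*^2\to0$ gives $\mu_P\to\mu_{\rm lin}^{+}$. As $P\to\Pcr^-$ one shows $\mu_P\to\infty$ by contradiction: pairing the equation with $U_*$ and using $H(U_*)=I(P)$ gives $\|\nabla U_*\|_2^2=2I(P)+\mu_P P$, so if $\mu_{P_n}$ stayed bounded along some $P_n\to\Pcr$ the minimizers would be bounded in $H_0^1$ and a weak limit $V$ would satisfy $\|V\|_2^2=\Pcr$ and $H(V)\le\lim I(P_n)$; but truncating and suitably rescaling the free-space ground state produces competitors showing $I(P)\to0$ as $P\to\Pcr$, while equality in Gagliardo--Nirenberg at power $\Pcr$ is attained only by untruncated free-space ground states, which are not in $H_0^1(D)$, so $H>0$ strictly on $\{\|U\|_2^2=\Pcr\}\cap H_0^1(D)$ --- a contradiction. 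Conversely, as $\mu\to\infty$ the rescaling $\tilde R(y):=\mu^{-1/2}R^{(1)}_\mu(x_\mu+\mu^{-1/2}y)$ about a maximum point $x_\mu$ (which one checks stays in the interior) solves $\Delta\tilde R-\tilde R+\tilde R^3=0$ in the limit on all of $\mathbb{R}^2$, is bounded away from $0$ and from $\infty$, and hence converges to the unique positive free-space ground state; since the $L^2$ norm is scale-invariant in two dimensions, $P(R^{(1)}_\mu)=\|\tilde R\|_2^2\to\Pcr$.

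The remaining ingredients, and the reason this is stated as a conjecture, are (i) strict monotonicity $\frac{d}{d\mu}P(R^{(1)}_\mu)>0$ and (ii) uniqueness of the positive minimizer up to symmetries of $D$; together they turn the pieces above into a genuine bijection and force the fixed-$P$ minimizer to be unique. For (i), differentiating the equation in $\mu$ gives $L_+\partial_\mu R=-R$ with $L_+:=-\Delta+\mu-3(R^{(1)}_\mu)^2$, so $\frac{d}{d\mu}P=-2\langle R,L_+^{-1}R\rangle$, and the Vakhitov--Kolokolov sign is equivalent to the spectral condition $\langle R,L_+^{-1}R\rangle<0$; here $L_+$ has Morse index exactly one, since $L_+R=-2R^3<0$ forces a negative eigenvalue while minimality of $U_*$ on the constraint forbids a second. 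Near $\mu=\mu_{\rm lin}$ the bifurcation expansion $R^{(1)}_\mu\sim\varepsilon Q^{(1)}$, $\mu-\mu_{\rm lin}\sim\varepsilon^2\|Q^{(1)}\|_4^4/\|Q^{(1)}\|_2^2$, gives $\frac{d}{d\mu}P>0$ outright, and the implicit function theorem continues the branch as long as $L_+$ has no kernel beyond that forced by symmetry --- so the real content is a \emph{global} non-degeneracy statement for $L_+$ along the entire branch. This, together with uniqueness of the ground state on non-radial domains such as the square, where moving-plane arguments yield only partial (dihedral) symmetry and the one-dimensional uniqueness available for the disk does not apply, is where I expect the essential difficulty to lie.
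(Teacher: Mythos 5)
What you are trying to prove is stated in the paper as a \emph{conjecture}, not a theorem: the paper gives no proof, and only records (after the statement) that parts of it were established by Fibich and Merle in~\cite{bounded-01} --- existence of a minimizer of~\eqref{eq:variational} for any regular bounded domain, and uniqueness only for $D=B_1$. Your proposal is, accordingly, not a proof and cannot be judged against one; what it does is re-derive, in sketch form, essentially the known parts (existence of a constrained minimizer via the sharp Gagliardo--Nirenberg bound $H(U)\ge(1-P/\Pcr)\|\nabla U\|_2^2$ plus the compact embedding $H_0^1(D)\hookrightarrow L^4(D)$; positivity and regularity; the multiplier range $\mu_P\in(\mu_{\rm lin},\infty)$; the endpoint limits $P\to 0$ and $P\to\Pcr$), and then to concede openly that the two decisive ingredients --- uniqueness of the positive minimizer up to symmetries of $D$, and the global non-degeneracy of $L_+$ along the branch equivalent to $\frac{d}{d\mu}P(R^{(1)}_\mu)>0$ --- are unproven. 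That concession is exactly the gap: without those two facts there is no bijection $P\leftrightarrow\mu_P$, so parts~2 and~3 (uniqueness of the minimizer and uniqueness of $\mu_P$) do not follow, and this is precisely why the statement is a conjecture in the paper rather than a theorem.

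One further step in part~1 is glossed over even granting your program: the claim there is that the \emph{minimal-power nontrivial solution} of~\eqref{eq:dDim_R_ODE2} at fixed $\mu$ is positive and unique, which does not follow ``by construction'' from having built a constrained Hamiltonian minimizer at fixed mass. You would still need to compare the power of the fixed-mass minimizer (evaluated at its own multiplier) with the powers of \emph{all} other nontrivial solutions at that same $\mu$, including sign-changing ones and non-minimizing positive ones; the paper's own annulus computations (Section~\ref{sec:Annular-radial-solitary-waves}), where for $\mu>\mu_{\rm c}$ there coexist two distinct positive solutions --- a ring and a single-peak profile, the ring being positive yet not a ground state --- show that positivity alone does not pin down the minimal-power solution and that the identification of Hamiltonian-minimizer, power-minimizer, and unique positive solution is exactly the delicate content of the conjecture. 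So your assessment of where the essential difficulty lies is correct, but the proposal should be read as a plausible program plus a correct isolation of the open core, not as a proof.
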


Parts of Conjecture~\ref{conj:unique_min} were proved by Fibich and Merle~\cite{bounded-01}:
The existence of a minimizer of~\eqref{eq:variational} was proved for any regular bounded domain, and
its uniqueness was proved for $D = B_1$.\footnote{Since then  one can use Steiner symmetrization to conclude that the minimizer is radial.} Since $H(|u|) \le H(u)$ for $u \in H^1_0$, the minimizer can be assumed to be non-negative.

\begin{cor}
   \label{cor:square}
Assume that Conjecture~{\rm \ref{conj:unique_min}}  holds for $D = [-1, 1]^2$.
Then the maximum of~$R^{(1)}_{\mu}(x,y)$ is attained at the origin.
\end{cor}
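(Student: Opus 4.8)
The plan is to run the Steiner-symmetrization argument mentioned in the discussion after Conjecture~\ref{conj:unique_min} (which gives radiality of the minimizer on $B_1$), but now separately in the $x$- and $y$-directions, which are the two axes of reflective symmetry of the square $D=[-1,1]^2$. Write $u:=R^{(1)}_\mu$. By Conjecture~\ref{conj:unique_min} we may take $u\ge 0$, and by elliptic bootstrapping applied to~\eqref{eq:dDim_R_ODE2} the function $u$ is smooth up to $\partial D$, so its maximum is attained pointwise.

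The main step: since $D$ has connected slices symmetric about $\{x=0\}$, the Steiner symmetrization $u^{*x}$ of $u$ with respect to $\{x=0\}$ (for each fixed $y$, replace the set $\{x:u(x,y)>t\}$ by the centered open interval of equal one-dimensional measure) lies in $H^1_0(D)$, is non-negative, satisfies $\|u^{*x}\|_2=\|u\|_2$ and $\|u^{*x}\|_4=\|u\|_4$ by equimeasurability, and satisfies $\|\nabla u^{*x}\|_2\le\|\nabla u\|_2$ by the P\'{o}lya--Szeg\H{o} inequality for Steiner symmetrization. Hence $H(u^{*x})\le H(u)$ while $\|u^{*x}\|_2^2=\|u\|_2^2$, so $u^{*x}$ also attains the infimum in Conjecture~\ref{conj:unique_min}(2); by the uniqueness asserted there, $u^{*x}=u$. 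Consequently, for each fixed $y$ the map $x\mapsto R^{(1)}_\mu(x,y)$ is even and non-increasing in $|x|$, so $R^{(1)}_\mu(x,y)\le R^{(1)}_\mu(0,y)$ throughout $D$. Repeating the argument with the roles of $x$ and $y$ interchanged gives $R^{(1)}_\mu(0,y)\le R^{(1)}_\mu(0,0)$, and chaining the two inequalities yields $R^{(1)}_\mu(x,y)\le R^{(1)}_\mu(0,0)$ for all $(x,y)\in D$, which is the claim.

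I do not expect a genuine obstacle here: the only non-routine inputs are that Steiner symmetrization maps $H^1_0([-1,1]^2)$ into itself and the P\'{o}lya--Szeg\H{o} inequality, both classical. (One could alternatively try the moving-plane method of Gidas--Ni--Nirenberg, which would give the same monotonicity without invoking the Conjecture, but the corners of the square make that route slightly delicate, whereas symmetrization is insensitive to them.) The one point that must be read carefully is the phrase ``unique real minimizer'' in Conjecture~\ref{conj:unique_min}(2): it should be understood as uniqueness among non-negative minimizers, the overall sign being the only other ambiguity, and it is in this form that $u^{*x}=u$ is deduced. If one prefers to assume only uniqueness up to the symmetries of the square, one first observes that every symmetry image of $u$ is again a non-negative minimizer of the same power, passes to the genuinely unique non-negative representative, and then runs the same argument; this incidentally shows $R^{(1)}_\mu$ is invariant under the entire symmetry group of the square.
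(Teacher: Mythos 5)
Your argument is correct, and it shares the paper's core ingredients---Steiner symmetrization with respect to the two axes of the square, equimeasurability of the $L^2$ and $L^4$ norms, and the P\'olya--Szeg\H{o} inequality, all played against the variational characterization in Conjecture~\ref{conj:unique_min}---but it closes the argument by a genuinely different mechanism. The paper argues by contradiction, and therefore needs the \emph{strict} inequality $\|\nabla S_\mu\|_2<\|\nabla R^{(1)}_{\mu}\|_2$ for the doubly symmetrized function $S_\mu$; the delicate point of its sketch is precisely this strictness, which is obtained from the equality-case characterization for Steiner symmetrization (the cited result of Fusco et al.) together with elliptic regularity/unique continuation to exclude the degenerate alternative in which $\nabla R^{(1)}_{\mu}$ vanishes on a set of positive measure. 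You bypass all of that by invoking instead the \emph{uniqueness} clause of the conjecture: since $u^{*x}$ is an admissible minimizer with the same constraint, uniqueness forces $u^{*x}=u$, and the monotonicity in each variable follows with no equality-case analysis at all. What each buys: your route is shorter and needs no fine structure of rearrangements, but it leans on uniqueness and on how ``unique real minimizer'' is interpreted; the paper's route uses only minimality, so it is insensitive to the ``up to symmetries'' reading, at the price of the strictness argument.

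Two small caveats. First, your fallback for the reading ``unique up to symmetries of the square'' is phrased imprecisely: under that reading all symmetry images of $u$ are equally good non-negative minimizers, so there is no ``genuinely unique non-negative representative'' to pass to. The clean fix is to symmetrize sequentially, first in $x$ and then in $y$: the resulting $S$ is even and non-increasing in $|x|$ and in $|y|$ (the $y$-symmetrization preserves the $x$-monotonicity, since the slice distribution functions are ordered), it is again a minimizer, hence $S=u\circ g$ for some symmetry $g$ of the square; every such $g$ fixes the center, so $\max u=S(0,0)=u(0,0)$, which is the claim. This $S$ is exactly the paper's $S_\mu$, used with uniqueness in place of strictness. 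Second, the passing claim that $u$ is smooth up to $\partial D$ by bootstrapping is stronger than what holds at the corners of the square and also more than you need: interior regularity plus $u\in H^1_0$ with $u\ge 0$, $u\not\equiv 0$ already gives that the supremum is attained at an interior point.
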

\begin{proof}
We provide a sketch of a proof.  Assume by negation that the maximum of~$R^{(1)}_{\mu}$
is not attained at the origin.
If we apply Steiner symmetrization to~$R^{(1)}_{\mu}(x,y)$ in~$x$ and then in~$y$, we
obtain a function~$S_\mu(x,y)$ which is symmetric in~$x$ and in~$y$, and is monotonically decreasing in~$|x|$ and
in~$|y|$. Therefore, its peak is
at the origin. Furthermore,
$$
\|S_\mu\|_2 = \|R^{(1)}_{\mu}\|_2, \quad \|S_\mu\|_4 = \|R^{(1)}_{\mu}\|_4, \quad
\|\nabla S_\mu\|_2 \le \|\nabla R^{(1)}_{\mu}\|_2.
$$
In fact, we claim that $\|\nabla S_\mu\|_2 < \|\nabla R^{(1)}_{\mu}\|_2$.
Therefore, $S_\mu \in H_0^1(D)$,  $\|S_\mu\|_2 = \|R^{(1)}_{\mu}\|_2$
and $H(S_\mu)<H( R^{(1)}_{\mu})$, which is in contradiction to our assumption that~$R^{(1)}_{\mu}$ is a minimizer.

Indeed, if $\|\nabla S_\mu\|_2  =  \|\nabla R^{(1)}_{\mu}\|_2$, then $\nabla R^{(1)}_{\mu}$ vanishes on a set of
positive measure inside~$D$ (see e.g., \cite[Thoerem~2.6]{Fusco}). In that case, however, by elliptic regularity,
the unique continuation of $R^{(1)}_{\mu}$ from the set of positive measure where
$\nabla R^{(1)}_{\mu}$ vanishes to~$D$  is $R^{(1)}_{\mu}\equiv 0$.
\end{proof}

%

\section{Necklace solitary waves}
\label{sec:Solitary wave}

In this section we construct necklace solitary waves on various bounded domains.
The numerical method used for computing these solutions is discussed in Section~\ref{sec:Computation_of_the_solitary_waves}.

\subsection{Rectangular necklaces}
   \label{sec:2d_rectangle_solitary_wave}

To construct a necklace solitary wave on a rectangular domain, we first consider a single pearl on a square.

 \subsubsection{Single pearl on a square}

 Let
$D$ be the square $[-1, 1]^2$. Then
equation (\ref{eq:dDim_R_ODE2}) reads
\begin{subequations}
   \label{eq:Rlambdarectangular}
\begin{equation}
  \label{eq:Rlambdarectangular_a}
\Delta R(x,y)-\mu R+|R|^{2}R=0,\qquad -1<x,y<1,
\end{equation}
\begin{equation}
  \label{eq:Rlambdarectangular_b}
R(x=\pm1,y)=R(x,y=\pm1)=0.
\end{equation}
Since we look  for a positive solution, we add the requirement that
\begin{equation}
 R(x,y)>0,\qquad -1<x,y<1.
\end{equation}
\end{subequations}
Following Fibich and Merle~\cite{bounded-01}, the solutions of~\eqref{eq:Rlambdarectangular},
which we denote by~$R^{(1)}_{\mu}$,  bifurcate from
\begin{equation}
   \label{eq:Rlambdarectangular_b-lin}
Q^{(1)}(x,y) := \cos \left(\frac{\pi x}{2}\right)\cos\left(\frac{\pi y}{2}\right), \qquad \mu_{\rm lin} =
-\frac{\pi^{2}}{2},
\end{equation}
which is the ground-state eigenpair of~\eqref{eq:lin_R_mu_in_D} with $D=[-1, 1]^2$.

 As~$\mu$
increases from~$\mu_{\rm lin}$, $R^{(1)}_{\mu}$ becomes
more localized as its power increases and the nonlinearity becomes more pronounced, see Fig.~\ref{fig:2dRxy}. In particular, as $\mu
\to \infty$, $R^{(1)}_{\mu}$~ ``does not
feel'' the reflecting boundary, and so it
approaches the corresponding free-space solitary wave,
i.e., $ R^{(1)}_{\mu}\thicksim  R^{(1), \rm free}_{\mu , \rm 2D}$, where
$R^{(1), \rm free}_{\mu , \rm 2D}$ is the positive solution of~\eqref{eq:R-2D-free}.
Hence,
 $\lim_{\mu\rightarrow\infty} P(R_{\mu}^{(1)})=P_{\rm cr}$, where $P_{\rm cr} = \int |R^{(1), \rm free}_{\rm 2D}|^2 \, dxdy$ is the critical power for collapse~\cite{critical-00}.

\begin{figure}[ht!]
\begin{center}
\scalebox{0.75}{\includegraphics{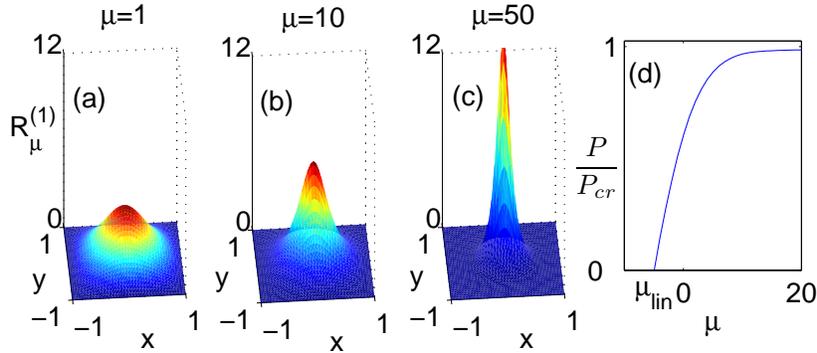}}
\caption{ A single pearl on a square, i.e., the solution of~(\ref{eq:Rlambdarectangular}).
 (a)~$\mu=1$. (b)~$\mu=10$. (c)~$\mu=50$. (d)~The pearl power $P(R^{(1)}_{\mu})=\int_{-1}^{1}\int_{-1}^{1}|R^{(1)}_{\mu}|^{2}dx dy$ as a function of~$\mu$. Here, $\mu_{\rm lin} \approx -5$.
  }
   \label{fig:2dRxy}
\end{center}
\end{figure}

\subsubsection{Rectangular necklace solitary waves}

We can use the single-pearl solution~$R^{(1)}_{\mu}$ to construct a rectangular necklace solitary wave with
$l\times k$~pearls, by letting adjacent pearls have opposite phases (signs).
For example, Fig.~\ref{fig:rectangle_necklace} shows a rectangular necklace solitary
wave with $2 \times 3$ pearls.

\begin{lem}
 \label{lemma:R_lm}
 $R^{(l \times k)}_{\mu}$ is a rectangular necklace solution of equation \eqref{eq:dDim_R_ODE2} with $l \times k$ pearls.
\end{lem}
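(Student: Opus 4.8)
The plan is to verify directly that the explicit gluing construction produces a classical solution of the PDE system \eqref{eq:dDim_R_ODE2} on the rectangular domain. Let $D$ be the rectangle obtained by tiling $l\times k$ copies of the square $[-1,1]^2$ — that is, $D = [-1, 2l-1]\times[-1, 2k-1]$ (or any translate/rescaling thereof). First I would write down the candidate explicitly: on the $(p,q)$-th subsquare $D_{p,q}$ (with $0\le p\le l-1$, $0\le q\le k-1$), set
\begin{equation}
R^{(l\times k)}_\mu\big|_{D_{p,q}} = (-1)^{p+q}\, R^{(1)}_\mu\big(x - 2p,\; y - 2q\big),
\end{equation}
where $R^{(1)}_\mu$ is the single-pearl solution of \eqref{eq:Rlambdarectangular} guaranteed by Section~\ref{sec:2d_rectangle_solitary_wave}. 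Since $R^{(1)}_\mu$ vanishes on $\partial([-1,1]^2)$ and the nonlinearity $|R|^2R$ is odd, the function $R^{(l\times k)}_\mu$ is continuous across every internal interface (both sides vanish there) and satisfies the Dirichlet condition \eqref{eq:dDim_R_ODE2b} on $\partial D$.

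The key step is showing that $R^{(l\times k)}_\mu$ is in fact $C^2$ (or at least a strong/weak solution) across the internal interfaces, so that equation \eqref{eq:dDim_R_ODE2a} holds on all of $D$ and not merely on each open subsquare. For this I would use the reflection/oddness structure: reflecting $R^{(1)}_\mu$ about the line $x=1$ and negating gives a function that, together with $R^{(1)}_\mu$, solves the same equation on a neighborhood of the interface; because $R^{(1)}_\mu$ is the unique positive solution on $[-1,1]^2$ and the equation is invariant under $(x,R)\mapsto(2-x,-R)$, the one-sided normal derivatives match up to sign in exactly the way required for the glued function to be a genuine solution. Concretely, one checks that the odd extension of $R^{(1)}_\mu$ across each edge satisfies the same semilinear elliptic equation, so by standard elliptic regularity (the equation is $\Delta R = \mu R - |R|^2 R$ with a smooth right-hand side) the glued function is smooth in the interior of $D$, including across the gluing lines, and hence a classical solution. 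Finally, by construction it has exactly $l\times k$ local maxima (one per subsquare, since $|R^{(1)}_\mu|$ has a single interior maximum by Corollary~\ref{cor:square}), with adjacent pearls of opposite sign, so it is a rectangular necklace solution in the sense defined.

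The main obstacle is the regularity-across-the-interface argument: a priori the piecewise definition only gives a continuous function whose gradient could jump across the gluing lines. The resolution is the observation that the Dirichlet datum of $R^{(1)}_\mu$ on an edge of $[-1,1]^2$ is zero, so the odd reflection across that edge is $C^1$ there (the normal derivative is continuous because it simply changes sign under the reflection while the tangential derivative is zero on the edge), and then elliptic bootstrapping upgrades $C^1$ to $C^\infty$. I would also remark that the same argument shows no spurious singularities appear at the corners where four subsquares meet, since $R^{(1)}_\mu$ vanishes to the appropriate order at the corners of $[-1,1]^2$. (Everything here is the rectangular analogue of the sector-gluing picture already used for the free-space necklaces in Section~\ref{sec:free-space}, so the proof can be kept brief.)
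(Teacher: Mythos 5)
Your proposal is correct and shares the paper's skeleton: glue sign-alternating translates of the single pearl, get continuity and the Dirichlet condition for free because the pearls vanish on the interfaces, and obtain matching of the one-sided normal derivatives from the symmetry $R^{(1)}_{\mu}(-x,y)=R^{(1)}_{\mu}(x,y)$, which in both arguments ultimately rests on uniqueness of the positive solution on the square. The genuine difference is the finishing step. The paper stays elementary: since $R^{(l\times k)}_{\mu}\equiv 0$ on an interface, the tangential and mixed second derivatives vanish there, and the equation itself forces $\partial_{xx}R^{(l\times k)}_{\mu}=(\mu-|R^{(l\times k)}_{\mu}|^{2})R^{(l\times k)}_{\mu}-\partial_{yy}R^{(l\times k)}_{\mu}\equiv 0$ as well, so the glued function is $C^{2}$ across the interface and hence a classical solution with no regularity theory invoked. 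You instead identify the glued function with the odd extension of the pearl (automatically $C^{1}$ across an edge where the pearl vanishes), note that the nonlinearity $\mu R-|R|^{2}R$ is odd so the extension is a distributional solution, and bootstrap via elliptic regularity to smoothness; this costs more machinery but buys full interior smoothness and, in particular, a cleaner and more systematic treatment of the corner points where four cells meet than either the paper's silence or your ``vanishes to the appropriate order'' remark. One small imprecision worth fixing: the normal-derivative matching does not follow from the invariance $(x,R)\mapsto(2-x,-R)$ alone; it requires the evenness of $R^{(1)}_{\mu}$ in $x$ (equivalently, that the translated-and-negated neighbor coincides with the odd reflection), which is exactly where uniqueness of the positive solution enters --- you do cite uniqueness, so this is phrasing rather than a gap, but note that this uniqueness is only conjectural (Conjecture~\ref{conj:unique_min}) in the paper as well.
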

\Beginproof
 Clearly, $R^{(l \times k)}_{\mu}$ is a smooth solution of~(\ref{eq:dDim_R_ODE2a}) inside each "cell", and it satisfies the boundary conditions~(\ref{eq:dDim_R_ODE2b}). To show that $R^{(l \times k)}_{\mu}$ satisfies equation~\eqref{eq:dDim_R_ODE2a} at the interfaces between pearls, it is enough to show that it is twice continuously
differentiable there. Clearly,
 $R^{(l \times k)}_{\mu}$ is continuous at the interfaces, since the pearls vanish there.
Consider, for example, the interface $x\equiv 1$ in Fig.~\ref{fig:rectangle_necklace}(a).
 Note that if $R^{(1)}_{\mu}(x,y)$ is a solution of~\eqref{eq:Rlambdarectangular}, then so is $R^{(1)}_{\mu}(-x,y)$.
 Therefore,  $R^{(1)}_{\mu}(-x,y)=R^{(1)}_{\mu}(x,y)$, and so
\begin{subequations}
    \label{eq:C2}
\begin{equation}
-\frac{\partial}{\partial x}R^{(l \times k)}_{\mu}(x=-1+,y)=\frac{\partial}{\partial x}R^{(l \times k)}_{\mu}(x=1-,y).
\end{equation}
In addition, by construction of~$R^{(l \times k)}_{\mu}$, we have that  $R^{(l \times k)}_{\mu}(x,y)=-R^{(l \times k)}_{\mu}(x-2,y)$. Hence,
\begin{equation}
\frac{\partial}{\partial x}R^{(l \times k)}_{\mu}(x=1+,y)=-\frac{\partial}{\partial x}R^{(l \times k)}_{\mu}(x=-1+,y).
\end{equation}
\end{subequations}
 Therefore, by~\eqref{eq:C2}, $\frac{\partial}{\partial x}R^{(l \times k)}_{\mu}$ is continuous at~$x\equiv1$.  Since $R^{(l \times k)}_{\mu}\equiv0$ on the interface $x\equiv1$,
then $\frac{\partial}{\partial y}R^{(l \times k)}_{\mu} \equiv0$,
$\frac{\partial^{2}}{\partial x\partial y}R^{(l \times k)}_{\mu} \equiv0$,   and
$\frac{\partial^{2}}{\partial y^{2}}R^{(l \times k)}_{\mu} \equiv0$ there. Therefore,
 $\frac{\partial^{2}}{\partial x^{2}} R^{(l \times k)}_{\mu}
=(\mu-|R^{(l \times k)}_{\mu}|^{2})R^{(l \times k)}_{\mu}-\frac{\partial^{2}}{\partial y^{2}} R^{(l \times k)}_{\mu} \equiv0$ there. Hence, $\frac{\partial}{\partial y}R^{(l \times k)}_{\mu}$,
$\frac{\partial^{2}}{\partial x\partial y}R^{(l \times k)}_{\mu}$,
$\frac{\partial^{2}}{\partial y^{2}}R^{(l \times k)}_{\mu}$ and $\frac{\partial^{2}}{\partial x^{2}}R^{(l \times k)}_{\mu}$
are also continuous at~$x\equiv1$.
\Endproof

 \Remark Note that  $R^{(l \times k)}_{\mu}$  is identically zero along the lines separating adjacent pearls,
and is antisymmetric with respect to these lines.

  \begin{figure}[ht!]
\begin{center}
\scalebox{0.6}{\includegraphics{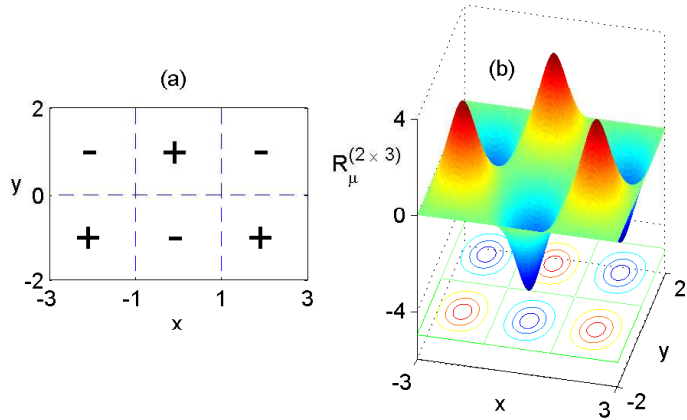}}
 \caption{(a)~Construction of a rectangular necklace solitary wave with $2 \times 3$ pearls. The  $'\pm'$ symbols correspond to~$\pm R^{(1)}_{\mu}$. Note that $R^{(2 \times 3)}_{\mu}$ is identically zero along the dashed lines, and is antisymmetric with respect to these lines. (b)~The rectangular necklace solitary wave~$R_{\mu=1}^{(2 \times 3)}$.
 }
\label{fig:rectangle_necklace}
\end{center}
\end{figure}

\Remark  The single-pearl solution can also be used to construct necklace solitary waves on non-rectangular domains,
   see e.g., Figure~\ref{fig:rectangle_step_necklace}.

  \begin{figure}[ht!]
\begin{center}
\scalebox{0.6}{\includegraphics{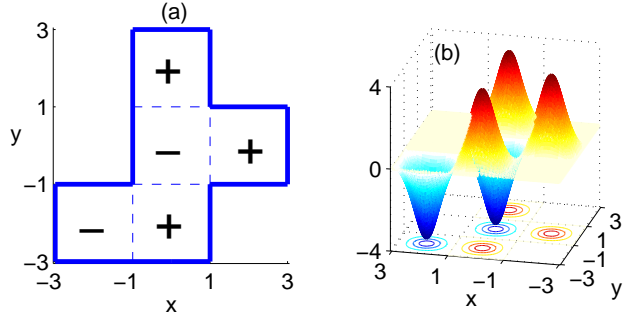}}
 \caption{Same as Fig.~\ref{fig:rectangle_necklace} for a non-rectangular domain with 5 pearls.
     }\label{fig:rectangle_step_necklace}
\end{center}
\end{figure}

\subsection{Circular necklaces}
  \label{sec:2d_quarter_circle_solitary_wave}

 To construct a necklace solitary wave on a circular domain, we repeat the procedure for
a rectangular domain (Section~\ref{sec:2d_rectangle_solitary_wave}).
Thus, we first compute a single pearl solution on a sector of a circle, and then use it to construct
a circular necklace solitary wave.

 \subsubsection{Single pearl on a sector of a circle}

Let~$D=\{r\in[0,1],\theta\in[0,\frac{2\pi}{n}]\}$ be the sector of the unit circle. Then
 equation (\ref{eq:dDim_R_ODE2}) reads
\begin{subequations}
   \label{eq:2Dim_R_ODE_general}
\begin{equation}
R_{rr}(r,\theta) +\frac{1}{r}R_{r}+\frac{1}{r^{2}}R_{\theta\theta}-\mu
R+|R|^{2}R=0,\qquad 0<r<1,\quad 0<\theta<\frac{2\pi}{n},
\end{equation}
\begin{equation}
R(r=1,\theta)=R(r,\theta=0)=R\left(r,\theta=\frac{2\pi}{n}\right)=0.
\end{equation}
Since we look  for a positive solution, we add the requirement that
\begin{equation}
 R(r,\theta)>0,\qquad 0<r<1,\quad 0<\theta<\frac{2\pi}{n}.
\end{equation}
\end{subequations}
Solutions of~\eqref{eq:2Dim_R_ODE_general} bifurcate from the positive eigenfunction
 of~\eqref{eq:lin_R_mu_in_D} with $D=\{r\in[0,1],\theta\in[0,\frac{2\pi}{n}]\}$,
 which is given by
\begin{equation}
  \label{eq:2Dim_R_ODE_general-lin}
 Q^{(1)}: = J_{\frac{n}{2}}\left(k_{\frac{n}{2}}
r\right)\sin \left( \frac{n}{2}\theta \right), \qquad \mu_{\rm lin} = -k_{\frac{n}{2}}^2,
\end{equation}
where $J_{\frac{n}{2}}$ is the Bessel function of
order~$\frac{n}{2}$ of the first kind, and $k_{\frac{n}{2}}$ is the first positive
root of~$J_{\frac{n}{2}}$.
As~$\mu$ increases from~$\mu_{\rm lin}$, $R^{(1)}_{\mu}$ becomes more localized as its power increases, see
Fig.~\ref{fig:quarter_circle}.
In particular,
$ R^{(1)}_{\mu} \thicksim R^{(1), \rm free}_{\mu, \rm 2D}$
 as $\mu \to \infty$.
Consequently, $\lim_{\mu\rightarrow\infty} P(R_{\mu}^{(1)})=P_{\rm cr}$.

\begin{figure}[ht!]
\begin{center}
\scalebox{0.75}{\includegraphics{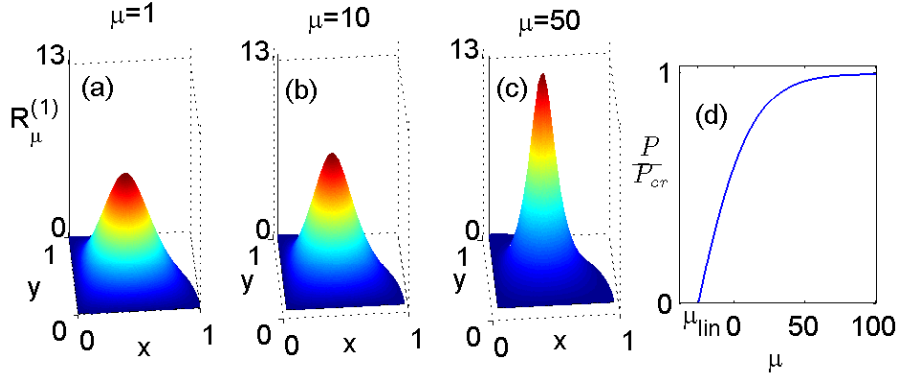}}
\caption{A single pearl on a quarter of the unit circle, i.e., the ground-state solution of~(\ref{eq:2Dim_R_ODE_general})
with $n=4$.
 (a)~$\mu=1$. (b)~$\mu=10$. (c)~$\mu=50$. (d)~The pearl power $P(R^{(1)}_{\mu})=\int_{0}^{\frac{\pi}{2}}\int_{0}^{1}r|R^{(1)}_{\mu}|^{2}dr d\theta$ as a function of~$\mu$.
Here, $\mu_{\rm lin}\thickapprox-26.4$.
 }
 \label{fig:quarter_circle}
\end{center}
\end{figure}


\subsubsection{Circular necklace solitary waves}
\label{sec:Circular necklace solitary waves}

When~$n$ is even,  we can use the single-pearl solution~$R^{(1)}_{\mu}$
to construct a circular necklace solitary wave with $n$~pearls, by letting neighboring pearls have opposite signs. For example, Fig.~\ref{fig:four_quarters}
shows a circular necklace solitary wave with 4 pearls.
  Note that $R^{(n)}_{\mu}$ is identically zero along the rays that are half way through between adjacent pearls, and is antisymmetric with respect to these rays.

  \begin{lem}
Let $n$ be even. Then $R^{(n)}_{\mu}$ is a circular necklace solution of equation~\eqref{eq:dDim_R_ODE2} with $n$~pearls.
\end{lem}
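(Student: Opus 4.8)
The plan is to mirror the proof of Lemma~\ref{lemma:R_lm} (the rectangular case), adapting it to polar coordinates. The construction is: partition the unit disk into $n$ congruent sectors of angular width $2\pi/n$, place a copy of the single-pearl solution $R^{(1)}_{\mu}$ (solving~\eqref{eq:2Dim_R_ODE_general}) on each sector, and alternate signs, which is possible precisely because $n$ is even. On the interior of each sector, $R^{(n)}_{\mu}$ solves~\eqref{eq:dDim_R_ODE2a} by construction, and it satisfies the Dirichlet condition~\eqref{eq:dDim_R_ODE2b} on $r=1$ because each pearl vanishes there. The only thing requiring proof is that $R^{(n)}_{\mu}$ extends as a $C^2$ solution of~\eqref{eq:dDim_R_ODE2a} across the separating rays $\theta \equiv \Theta_j$, and also that nothing goes wrong at the center $r=0$.

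First I would establish the reflection symmetry of the single-pearl profile across the bisecting ray of its own sector: if $R^{(1)}_{\mu}(r,\theta)$ solves~\eqref{eq:2Dim_R_ODE_general} on $\{0<\theta<2\pi/n\}$, then so does $R^{(1)}_{\mu}(r,\tfrac{2\pi}{n}-\theta)$, because the Laplacian in polar form $R_{rr}+\tfrac1r R_r + \tfrac1{r^2}R_{\theta\theta}$ is invariant under $\theta \mapsto \tfrac{2\pi}{n}-\theta$ and the boundary conditions are symmetric; by the uniqueness of the positive solution these coincide, so $R^{(1)}_{\mu}$ is symmetric about its bisector. Consequently $\partial_\theta R^{(1)}_{\mu}$ vanishes on the bisector and, crucially, the outward normal (angular) derivatives of adjacent pearls at a shared interface ray are negatives of one another in the same way as in the rectangular argument. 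Combining this with the defining sign flip $R^{(n)}_{\mu}$ across each interface (the two adjacent copies carry opposite signs), the angular derivative $\tfrac1r\partial_\theta R^{(n)}_{\mu}$ matches from both sides, so it is continuous across each ray $\theta\equiv\Theta_j$.

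Next, to get the remaining second derivatives, I would repeat the bootstrap from the rectangular proof in polar form: on an interface ray $R^{(n)}_{\mu}\equiv 0$, so all tangential (radial) derivatives vanish there, hence $R_{rr}\equiv 0$ and $R_{r}\equiv 0$ and $\partial_r\partial_\theta R\equiv 0$ along the ray; the PDE~\eqref{eq:2Dim_R_ODE_general}, solved for $\tfrac{1}{r^2}R_{\theta\theta}$, then forces $R_{\theta\theta}\equiv 0$ on the ray as well. Since each second-order partial is either zero or (for the mixed and $\theta\theta$ terms) continuous by the argument above, $R^{(n)}_{\mu}\in C^2$ across every interface, and therefore solves~\eqref{eq:dDim_R_ODE2a} classically on all of $B_1\setminus\{0\}$. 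Then I would handle $r=0$: since $n\ge 2$ the single-pearl $R^{(1)}_{\mu}$ vanishes at the vertex $r=0$ of its sector (indeed it vanishes on the whole ray $\theta=0$), so $R^{(n)}_{\mu}(0)=0$; combined with the linear-mode bifurcation behavior (the profile near $r=0$ behaves like $J_{n/2}(k_{n/2}r)\sin(\tfrac n2\theta)\sim r^{n/2}$) one sees $R^{(n)}_{\mu}$ is smooth at the origin, so the PDE holds there too.

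The main obstacle is the interface regularity across the rays — exactly the step that is nontrivial in the rectangular lemma — and within that, the cleanest route is establishing the bisector reflection symmetry of $R^{(1)}_{\mu}$ via uniqueness of the positive solution, rather than trying to match derivatives of a priori unrelated pearls. A secondary subtlety worth a sentence is the coordinate singularity of the polar Laplacian at $r=0$; this is only cosmetic since all the copies vanish there, but one should note that the resulting function is genuinely a classical (indeed smooth) solution in Cartesian coordinates near the origin, so invoking uniqueness/regularity there is legitimate. Everything else is routine and parallels Lemma~\ref{lemma:R_lm}.
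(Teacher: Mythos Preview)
Your proposal is correct and follows essentially the same approach as the paper, which simply states that the proof is the same as that of Lemma~\ref{lemma:R_lm}. You have spelled out the polar-coordinate analog in more detail than the paper does, and you additionally address regularity at the origin $r=0$, a point the paper's proof does not mention.
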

\begin{proof}
The proof is the same as that of Lemma~\ref{lemma:R_lm}.
\end{proof}

\begin{figure}[ht!]
\begin{center}
\scalebox{0.6}{\includegraphics{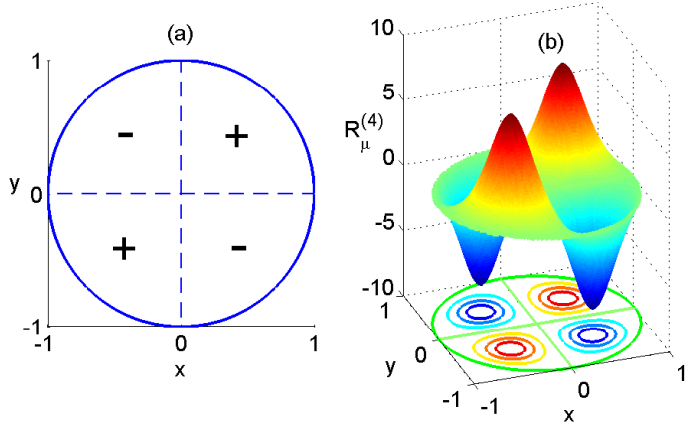}} \caption{a)~Construction of the circular necklace solitary wave with $4$~pearls.
 The symbols $'\pm'$ correspond to $\pm R^{(1)}_{\mu}$. Note that $R^{(4)}_{\mu}$ is identically zero along the dashed lines, and is antisymmetric with respect to these lines. (b)~The circular necklace solitary wave $R_{\mu=1}^{(4)}$.
}
 \label{fig:four_quarters}
\end{center}
\end{figure}

\subsection{Annular necklaces}
\label{sec:2d_quarter_circle_with_hole_solitary_wave}

 To construct a necklace solitary wave on an annular domain, we first consider a single pearl on a sector of an annulus, and then use it to construct necklace solutions on the whole annular domain.

 \subsubsection{Single pearl on a sector of an annulus}

Let~$D= \{r\in[r_{\rm min},r_{\rm max}],\theta\in[0,\frac{2\pi}{n}]\}$ be the sector of the annulus.
Then equation~(\ref{eq:dDim_R_ODE2}) reads
\begin{subequations}
  \label{eq:2Dim_R_ODE_hole}
\begin{equation}
 R_{rr}(r,\theta)+\frac{1}{r}R_{r}+\frac{1}{r^{2}}R_{\theta\theta}-\mu
R+|R|^{2}R=0,\qquad r_{\rm min}<r<r_{\rm max},\quad 0<\theta<\frac{2\pi}{n},
\end{equation}
    \label{eq:2Dim_R_ODE_hole_a}
\begin{equation}
R(r_{\rm min},\theta)=R(r_{\rm max},\theta)=R(r,\theta=0)=R\left(r,\theta=\frac{2\pi}{n}\right)=0.
\end{equation}
Since we look  for a positive solution, we add the requirement that
\begin{equation}
 R(r,\theta)>0,\qquad r_{\rm min}<r<r_{\rm max},\quad 0<\theta<\frac{2\pi}{n}.
\end{equation}
\end{subequations}
Solutions of~\eqref{eq:2Dim_R_ODE_hole} bifurcate from the ground-state eigenfunction
 of~\eqref{eq:lin_R_mu_in_D} with $D=\{r\in[r_{\rm min},r_{\rm max}],\theta\in[0,\frac{2\pi}{n}]\}$,
 which is given by
$$
 Q^{(1)} = \left[J_{\frac{n}{2}}\left( k_{\frac{n}{2}}r \right)-\frac{J_{\frac{n}{2}}\left( k_{\frac{n}{2}} r_{\rm min}\right)}{Y_{\frac{n}{2}}\left( k_{\frac{n}{2}} r_{\rm min}\right)}Y_{\frac{n}{2}}\left( k_{\frac{n}{2}}r \right)\right] \sin \left(\frac{n\theta}{2}\right),
  \qquad \mu_{\rm lin} = -k_{\frac{n}{2}}^2,
$$
where~$J_{\frac{n}{2}}$ and~$Y_{\frac{n}{2}}$ are Bessel functions of
order~$\frac{n}{2}$ of the first and second kind, respectively,
 and $k_{\frac{n}{2}}$ is the smallest positive root of
\begin{equation}
  \label{eq:k_n/2}
J_{\frac{n}{2}}\left( k_{\frac{n}{2}} r_{\rm max}\right)Y_{\frac{n}{2}}\left( k_{\frac{n}{2}} r_{\rm min}\right)-Y_{\frac{n}{2}}\left( k_{\frac{n}{2}}
r_{\rm max} \right)J_{\frac{n}{2}}\left( k_{\frac{n}{2}} r_{\rm min}\right)=0.
\end{equation}
 As~$\mu$ increases from~$\mu_{\rm lin}$, $R^{(1)}_{\mu}$ becomes more localized as its power increases, see
Fig.~\ref{fig:quarter_circle_hole}. In particular,
$R^{(1)}_{\mu} \thicksim R^{(1), \rm free}_{\mu, \rm 2D}$ as $\mu \to \infty$.
Consequently, $\lim_{\mu \to \infty} P(R_{\mu}^{(1)}) =  P_{\rm cr}$.

\begin{figure}[ht!]
\begin{center}
\scalebox{0.75}{\includegraphics{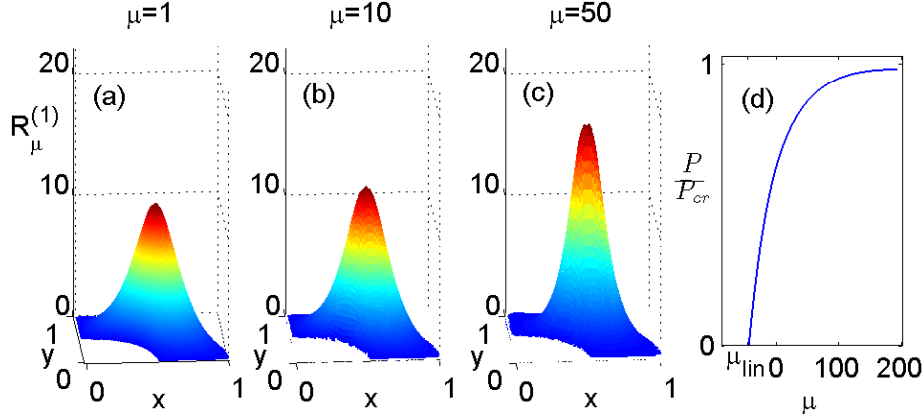}}
 \caption{
 A single pearl on a quarter annulus, i.e., the solution of~(\ref{eq:2Dim_R_ODE_hole}) with~$n=4$.
a)~$\mu=1$. (b)~$\mu=10$. (c)~$\mu=50$. (d)~The pearl power
$P(R^{(1)}_{\mu})=\int_{0}^{\frac{2\pi}{n}} d\theta \int_{r_{\rm min}}^{r_{\rm max}}r|R^{(1)}_{\mu}|^{2}\, dr$ as a function of~$\mu$.
 Here $r_{\rm min}=0.5$, $r_{\rm max}=1$, and~$\mu_{\rm lin}\thickapprox-45.5$.
}
 \label{fig:quarter_circle_hole}
\end{center}
\end{figure}

\subsubsection{Annular necklace solitary waves}
\label{sec:Annular necklace solitary waves}

We can use $R^{(1)}_{\mu}$ as a "building block" to construct an annular necklace solitary wave with $n$~pearls,
by letting neighboring pearls have opposite signs. For example, Fig.~\ref{fig:four_quarters_hole} shows an annular necklace solitary wave with $4$~pearls.


  \begin{lem}
Let $n$ be even. Then $R^{(n)}_{\mu}$ is an annular necklace solution
of equation~\eqref{eq:dDim_R_ODE2} with $n$~pearls.
\end{lem}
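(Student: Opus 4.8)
The plan is to follow the proof of Lemma~\ref{lemma:R_lm} essentially verbatim, replacing the Cartesian interfaces $x\equiv\mathrm{const}$ by the radial rays $\theta\equiv\frac{2\pi k}{n}$ and the Laplacian $\Delta$ by its polar form $\partial_{rr}+\frac1r\partial_r+\frac1{r^2}\partial_{\theta\theta}$. Concretely, $R^{(n)}_{\mu}$ is built by placing $n$ rotated copies of the single-pearl solution $R^{(1)}_{\mu}$ of~\eqref{eq:2Dim_R_ODE_hole} on the sectors $S_k=\{r_{\rm min}\le r\le r_{\rm max},\ \frac{2\pi k}{n}\le\theta\le\frac{2\pi(k+1)}{n}\}$ with alternating signs, i.e.\ $R^{(n)}_{\mu}(r,\theta)=(-1)^k R^{(1)}_{\mu}(r,\theta-\frac{2\pi k}{n})$ on $S_k$. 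The hypothesis that $n$ is even is precisely what makes this sign assignment consistent when the necklace closes up (so that $S_{n-1}$ and $S_0$ are still opposite in sign), hence what makes $R^{(n)}_{\mu}$ single-valued on the annulus. Since the equation~\eqref{eq:dDim_R_ODE2a} is invariant under rotation and under $R\mapsto -R$, and since $R^{(1)}_{\mu}$ is smooth in the interior of its sector by elliptic regularity, $R^{(n)}_{\mu}$ is a smooth solution of~\eqref{eq:dDim_R_ODE2a} inside each $S_k$; it satisfies the Dirichlet condition~\eqref{eq:dDim_R_ODE2b} on the circles $r=r_{\rm min}$ and $r=r_{\rm max}$ because $R^{(1)}_{\mu}$ does; and it is continuous across each separating ray because the pearls vanish there. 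So, exactly as in Lemma~\ref{lemma:R_lm}, it remains only to check that $R^{(n)}_{\mu}\in C^2$ across each ray $\theta\equiv\frac{2\pi k}{n}$, which then forces it to solve~\eqref{eq:dDim_R_ODE2a} there as well.

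For the $C^2$-matching I would use the reflection symmetry of the single pearl, which here is the map $\theta\mapsto\frac{2\pi}{n}-\theta$ (reflection about the bisector $\theta=\frac{\pi}{n}$ of $S_0$). This map fixes the sector $S_0$ and leaves $\partial_{rr}+\frac1r\partial_r+\frac1{r^2}\partial_{\theta\theta}$ invariant, so $R^{(1)}_{\mu}(r,\frac{2\pi}{n}-\theta)$ is again a positive solution of~\eqref{eq:2Dim_R_ODE_hole}, whence $R^{(1)}_{\mu}(r,\frac{2\pi}{n}-\theta)=R^{(1)}_{\mu}(r,\theta)$. Differentiating this identity in $\theta$ shows that $\partial_\theta R^{(1)}_{\mu}$ is odd about the bisector, in particular $-\partial_\theta R^{(1)}_{\mu}(r,0^+)=\partial_\theta R^{(1)}_{\mu}(r,\frac{2\pi}{n}^-)$. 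Combining this with the antisymmetric construction $R^{(n)}_{\mu}(r,\theta)=-R^{(1)}_{\mu}(r,\theta-\frac{2\pi}{n})$ on $S_1$, one gets — exactly as in~\eqref{eq:C2} — that the $\theta$-derivatives of $R^{(n)}_{\mu}$ from the two sides of the ray $\theta\equiv\frac{2\pi}{n}$ agree, and dividing by $r$ (which causes no trouble since $r\ge r_{\rm min}>0$, so there is no coordinate singularity) shows that the angular derivative is continuous across the ray. By the identical argument this holds at every separating ray.

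To close, I would argue as in Lemma~\ref{lemma:R_lm}: since $R^{(n)}_{\mu}\equiv0$ along each separating ray, the derivatives tangent to the ray vanish there, so $\partial_r R^{(n)}_{\mu}\equiv0$ and $\partial_{rr}R^{(n)}_{\mu}\equiv0$ on the ray, while $\partial_r\partial_\theta R^{(n)}_{\mu}$ is continuous across it by the same odd/even symmetry. Substituting into~\eqref{eq:2Dim_R_ODE_hole} then gives $\frac1{r^2}\partial_{\theta\theta}R^{(n)}_{\mu}=(\mu-|R^{(n)}_{\mu}|^2)R^{(n)}_{\mu}-\partial_{rr}R^{(n)}_{\mu}-\frac1r\partial_r R^{(n)}_{\mu}\equiv0$ on the ray, so $\partial_{\theta\theta}R^{(n)}_{\mu}$ is continuous there too. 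Hence $R^{(n)}_{\mu}$ is $C^2$ across every separating ray, and therefore solves~\eqref{eq:dDim_R_ODE2} on the full annulus. The only genuinely non-routine step — and, as in Lemma~\ref{lemma:R_lm}, the one place where one leans on a property that is only conjectural in full generality — is the reflection identity $R^{(1)}_{\mu}(r,\frac{2\pi}{n}-\theta)=R^{(1)}_{\mu}(r,\theta)$ for the single pearl. I expect this to be the main point to be careful about; it is cleanest to take $R^{(1)}_{\mu}$ to be the ground-state branch bifurcating from the linear mode $Q^{(1)}$, which is itself invariant under $\theta\mapsto\frac{2\pi}{n}-\theta$ (since $\sin(\frac{n}{2}(\frac{2\pi}{n}-\theta))=\sin(\frac{n\theta}{2})$), so that the needed symmetry is inherited along the bifurcation branch.
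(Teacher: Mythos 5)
Your proposal is correct and is essentially the paper's own argument: the paper simply states that the proof is the same as that of Lemma~\ref{lemma:R_lm}, and your write-up is precisely that proof transplanted to polar coordinates (reflection symmetry of the pearl about the sector bisector giving matching angular derivatives, vanishing of $R$ on the rays plus the PDE giving continuity of the second derivatives), with the added, correct observations that evenness of $n$ is what makes the alternating signs consistent around the annulus and that the reflection identity rests on the (conjectural) uniqueness of the positive single-pearl solution, which the paper also uses implicitly.
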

\begin{proof}
The proof is the same as that of Lemma~\ref{lemma:R_lm}.
\end{proof}

\begin{figure}[ht!]
\begin{center}
\scalebox{0.6}{\includegraphics{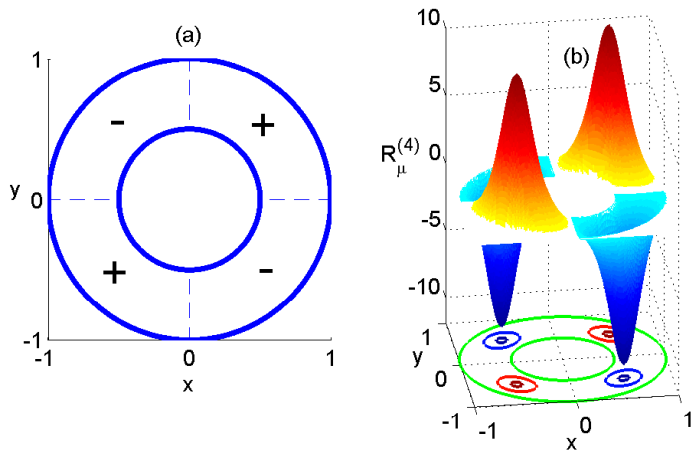}}
\caption{
(a) Construction of an annular necklace solitary wave with $4$~pearls. The symbols $'\pm'$ correspond to $\pm R^{(1)}_{\mu}$. Note that $R^{(4)}_{\mu}$ is identically zero along the dashed lines, and is antisymmetric with respect to these lines. (b)~The annular necklace solitary wave $R_{\mu=1}^{(4)}$. Here $r_{\rm min}=0.5$ and $r_{\rm max}=1$.
}
 \label{fig:four_quarters_hole}
\end{center}
\end{figure}

\subsubsection{Annular ground-state solitary waves  (symmetry breaking)}
\label{sec:Annular-radial-solitary-waves}

We can look for positive radial solitary waves $\psi_{\rm sw} = e^{ i \mu z} R^{\rm ring}_{\mu}(r)$ of the NLS~\eqref{eq:NLS_bounded} on the annulus.
The profile of these {\em ring-type} solitary waves satisfies
\begin{subequations}
  \label{eq:2Dim_R_ODE_hole_radial}
\begin{equation}
 R_{rr}(r)+\frac{1}{r}R_{r}-\mu
R+|R|^{2}R=0,\qquad r_{\rm min}<r<r_{\rm max},
\end{equation}
\begin{equation}
R(r_{\rm min})=R(r_{\rm max})=0,
\end{equation}
\begin{equation}
 R(r)>0,\qquad r_{\rm min}<r<r_{\rm max}.
\end{equation}
\end{subequations}
Because of the hole and the requirement of radial symmetry, these solitary waves {\em cannot} approach the
free-space two-dimensional ground-state~$R^{(1), \rm free}_{\mu, \rm 2D}$
 as $\mu \to \infty$. 
Rather,
\begin{equation}
   \label{eq:R^ring}
R^{\rm ring}_{\mu} \sim \sqrt{\mu} R^{(1), \rm free}_{\rm 1D}(\sqrt{\mu}(r-r_{{\rm M}, \mu})),
\qquad \mu \to \infty,
\end{equation}  where
$R^{(1), \rm free}_{\rm 1D}$ is the positive solution of~\eqref{eq:R-1D-free}
and $r_{{\rm M}, \mu}:=\arg\max R^{\rm ring}_{\mu}(r)$.
Hence,
\begin{eqnarray*}
 P(R^{\rm ring}_{\mu})= 2 \pi \int_{r_{\min}}^{r_{\max}} |R^{\rm ring}_{\mu}|^2 \, rdr
  \sim
 2 \pi \int_{r_{\min}}^{r_{\max}} |\sqrt{\mu} R^{(1), \rm free}_{\rm 1D}(\sqrt{\mu}(r-r_{{\rm M}, \mu}))|^2 \, rdr
\sim  c\sqrt{\mu},
\end{eqnarray*}
where $c= 2 \pi r_{{\rm M}, \mu}   \int_{-\infty}^{\infty} |R^{(1), \rm free}_{\rm 1D}(x)|^2 \, dx$.
Consequently, $\lim_{\mu \to \infty} P(R^{\rm ring}_{\mu}) =  \infty$.

In Figure~\ref{fig:R_full_annular_and_P_as_mu}
we compute the ground-state solutions of~\eqref{eq:dDim_R_ODE2} on the annulus $\{0.5<|{\bf x}|<1\}$
using the non-spectral renormalization method (Section~\ref{sec:Computation_of_the_solitary_waves})
{\em without imposing radial symmetry}.  These positive solutions are radial
for $\mu_{\rm lin}<\mu< \mu_{\rm c}$,\footnote{i.e., $R^{(1)}_\mu =  R^{\rm ring}_{\mu}$ for $\mu_{\rm lin}<\mu< \mu_{\rm c}$.} where  $\mu_{\rm lin} \approx -39$ and
$\mu_{\rm c}  \approx -38.2$. For $ \mu_{\rm c}<\mu<\infty$,
they become non-radial. In particular, 
$R^{(1)}_{\mu} \sim \sqrt{\mu} R^{(2), \rm free}_{\rm 1D}(\sqrt{\mu}({\bf x}-{\bf x}_{{\rm M}, \mu}))$
as $\mu \to \infty$.\footnote{Non-radial ground states on the annulus are only determined up to a rotation about the origin.  Therefore, ${\bf x}_{{\rm M}, \mu} = r_{{\rm M}, \mu}(\cos \theta, \sin \theta)$, where $\theta$ is arbitrary.
}
 Consequently, $\lim_{\mu \to \infty} P(R_{\mu}^{(1)}) =  P_{\rm cr}$.

The symmetry-breaking  at~$\mu_{\rm c}$ is related to the variational characterization of the ground-state
(see Conjecture~\ref{conj:unique_min}). Indeed, by~\eqref{eq:R^ring}, $H(R^{\rm ring}_{\mu}) \sim   \mu^{3/2}$ as $\mu \to \infty$. Therefore, high-power solutions of~\eqref{eq:dDim_R_ODE2} on $D=\{r_{\min}<|{\bf x}|<r_{\max}\}$ can achieve a smaller Hamiltonian by ``adopting'' a non-radial profile.

Note that  for $ \mu_{\rm c}<\mu<\infty$,  equation~\eqref{eq:dDim_R_ODE2} on the annulus
admits both a  radial solution and a non-radial one.
{\em To the best of our knowledge, this is the first example that equation~\eqref{eq:dDim_R_ODE2}
admits two different positive solutions, and that a positive solution of~\eqref{eq:dDim_R_ODE2}
is not a ground state.} As noted above, such results were observed for the inhomogeneous NLS~\cite{Kirr-08}.

\begin{figure}[ht!]
\begin{center}
\scalebox{0.8}{\includegraphics{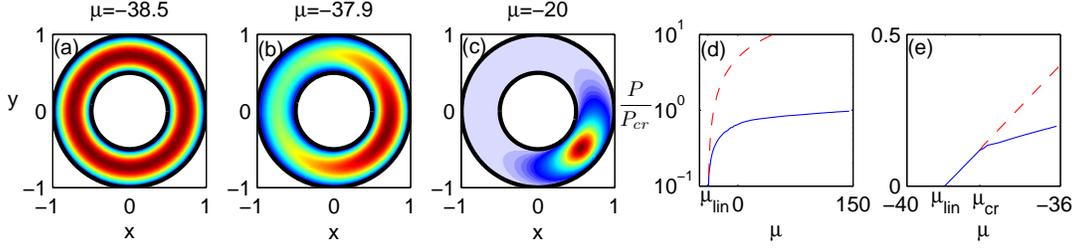}}
 \caption{Ground-state solutions of~\eqref{eq:dDim_R_ODE2} on the annulus $\{r_{\rm min}<|{\bf x}|<r_{\rm max}\}$,
where $r_{\rm min}=0.5$ and~$r_{\rm max}=1$.   Here $\mu_{\rm lin} = -38.2$.
 (a)~$\mu=-38.5$. (b)~$\mu=-37.9$. (c)~$\mu=-20$.
(d)~$P(R^{(1)}_{\mu})$ as a function of~$\mu$ (solid blue line).
The dashed red line is $P(R^{\rm ring}_{\mu})$.
e)~Zoom in near the bifurcation at~$\mu_c$.
}
 \label{fig:R_full_annular_and_P_as_mu}
\end{center}
\end{figure}


\subsection{One-dimensional necklaces}

It is instructive to consider ``necklace solutions'' in one dimension. While less of physical interest,
they allow for simpler analysis and simulations than their two-dimensional counterparts.

Let $\psi(z,x)$ be the solution of the one-dimensional NLS on~$D = [x_{\rm L}, x_{\rm R}]$
\begin{subequations}
\label{eq:NLS_bounded-1D}
\begin{equation}
i\psi_z(z,x)+\psi_{xx}+|\psi|^{2 \sigma}\psi=0,
\qquad  x_{\rm L}<x< x_{\rm R}  , \qquad z>0,
\end{equation}
with the Dirichlet condition
\begin{equation}
\psi(z,x_{\rm L})=\psi(z,x_{\rm R})=0, \qquad z\geq0.
\end{equation}
\end{subequations}
Eq.~(\ref{eq:NLS_bounded-1D}) admits solitary wave solutions~$\psi_{\rm sw} =e^{i\mu z}R_\mu(x)$, where~$R_\mu$ is a solution of
\begin{equation}
 \label{eq:dDim_R_ODE2-1D}
 R_{xx}(x)-\mu R+|R|^{2 \sigma}R=0, \quad  x_{\rm L}<x< x_{\rm R},
\qquad R(x_{\rm L})=R(x_{\rm R})=0.
\end{equation}


Let~$D$ be the interval $[-1,1]$. Then equation~\eqref{eq:dDim_R_ODE2-1D}  reads
\begin{equation}
  \label{eq:Rlambdaline}
R_{xx} (x)-\mu R+|R|^{2 \sigma}R=0,\quad -1<x<1, \qquad R(\pm1)=0.
\end{equation}
The ground-state solitary waves $R^{(1)}_{\mu}(x)$
of~\eqref{eq:Rlambdaline} bifurcate from the ground-state linear mode
\begin{equation}
   \label{eq:Rlambdaline_b-lin}
Q^{(1)}(x) := \cos\left(\frac{\pi x}{2}\right), \quad \mu_{\rm lin} =
-\frac{\pi^{2}}{4}.
\end{equation}
 As~$\mu$
increases from~$\mu_{\rm lin}$, $R^{(1)}_{\mu}$ becomes
more localized as its power increases and the nonlinearity becomes more pronounced,
see Figure~\ref{fig:1dRx}. In particular, $ R^{(1)}_{\mu}\thicksim R_{\mu, \rm 1D}^{(1), \rm free}$ as $\mu
\to \infty$, where $R_{\mu, \rm 1D}^{(1), \rm free}$ is the positive solution of~\eqref{eq:R-1D-free}.

\begin{figure}[ht!]
\begin{center}
\scalebox{0.65}{\includegraphics{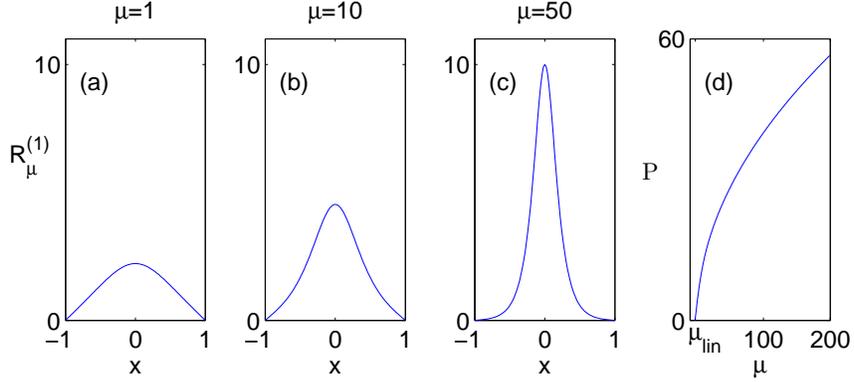}}
\caption{ A single pearl on an interval, i.e., the ground-state solution of~(\ref{eq:Rlambdaline}) with $\sigma=1$.
 (a)~$\mu=1$. (b)~$\mu=10$. (c)~$\mu=50$. (d)~The pearl power $P(R^{(1)}_{\mu})=\int_{-1}^{1}|R^{(1)}_{\mu}|^{2}dx$ as a function of~$\mu$. Here, $\mu_{\rm lin} \approx -2.5.$
     }
   \label{fig:1dRx}
\end{center}
\end{figure}


We can use $R^{(1)}_{\mu}$ to construct a one-dimensional necklace solitary wave with
$n$~pearls on the interval $[-n, n]$,
by letting adjacent pearls have opposite phases (signs).
For example, Fig.~\ref{fig:one-dimensional_necklace} shows a one-dimensional necklace solitary
wave with $3$ pearls.

\begin{figure}[h!]
\begin{center}
\scalebox{0.5}{\includegraphics{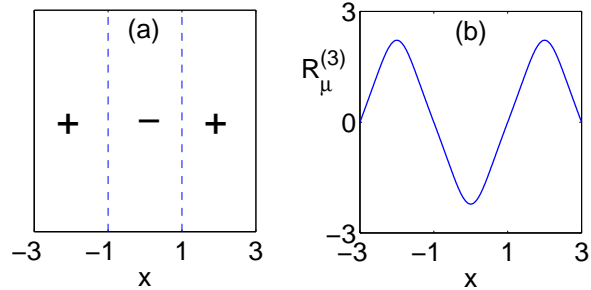}}
\caption{a)~Construction of a one-dimensional necklace solitary wave with $3$~pearls.
 The symbols $'\pm'$ correspond to $\pm R^{(1)}_{\mu}$. Note that $R^{(3)}_{\mu}$ is identically zero at $x = \pm 1$, and is antisymmetric with respect to these points.
(b)~The one-dimensional necklace solitary wave~$R_{\mu=1}^{(3)}$.
}
 \label{fig:one-dimensional_necklace}
\end{center}
\end{figure}

\section{Stability}
\label{sec:Linear stability}

The appropriate notion for stability of NLS solitary waves is that of orbital stability.
On a bounded domain, orbital stability refers to stability up to phase shifts:

\begin{Def}[orbital stability]
  Let $\psi_{\rm sw} = e^{ i \mu z} R_\mu$ be a solitary wave solution of~\eqref{eq:NLS_bounded}. 
  We say that~$\psi_{\rm sw}$ is orbitally stable, if for any $\epsilon>0$, there exists
$\delta>0$, such that if
$
   \|{\psi}_0  - R_\mu \|_{H_0^1(D)} < \delta,
$
and if $\psi$ is the solution of~\eqref{eq:NLS_bounded}
with the initial condition~$\psi_0$,
then
$$
\inf_{\theta(z)\in {\Bbb R}}\|{\psi}(z, {\bf x}) - e^{i \theta(z)} \psi_{\rm sw}(z, {\bf x}) \|_{H_0^1(D)} < \epsilon,
  \qquad  0 \le z < \infty.
$$
\end{Def}

\subsection{Stability of single-pearl (ground-state)  solitary waves}
\label{sec:Linear stability_single_pearl}

We recall that in free space, the ground-state solitary waves of
the critical NLS~\eqref{eq:NLS_free_space} are unstable.
A reflecting boundary, however, has a stabilizing effect. Indeed, since 
the VK condition $\frac{d}{d\mu}P(R_{\mu})>0$ 
holds for the ground state solitary waves on~$B_1$,
these solitary waves are orbitally stable for $\mu_{\rm lin}<\mu<\infty$ ($0<P<\Pcr$). See~\cite{bounded-01,Fukuizumi-12} for more details.

\subsubsection{Convex domains}

Our numerical simulations suggest that the VK condition holds for the ground state  solutions of~\eqref{eq:dDim_R_ODE2}
on a square, on a sector of a circle, and on a sector of an annulus
(see Fig.~\ref{fig:2dRxy}(d), Fig.~\ref{fig:quarter_circle}(d), and Fig.~\ref{fig:quarter_circle_hole}(d), respectively). In addition, solving the NLS~\eqref{eq:NLS_bounded} with an initial condition
which is a perturbed single pearl suggests
that  single-pearl solitary waves are stable, see e.g., Fig.~\ref{fig:rectangle_dx0_025dy0_025Zmax10Xmax1Ymax1mu1only0and10}.
Therefore, based on this numerical evidence, we formulate
 \begin{conj}
 All single-pearl (positive, ground state) solitary waves $\psi_{\rm sw}^{(1)} = e^{i \mu z} R^{(1)}_{\mu}(x,y)$
of the {\rm NLS}~\eqref{eq:NLS_bounded} on a convex bounded domain
 are orbitally stable.
\end{conj}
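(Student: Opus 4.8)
\medskip
\noindent\textbf{Towards a proof.}
The natural framework is the variational method of Cazenave and Lions, which on a bounded domain is far simpler than in free space because the embedding $H_0^1(D)\hookrightarrow L^4(D)$ is \emph{compact}. Fix $\mu\in(\mu_{\rm lin},\infty)$, put $P:=P(R^{(1)}_{\mu})$ (so $0<P<\Pcr$ by~\eqref{eq:VK-pearl}), and set
$$
\mathcal I_P:=\inf\bigl\{H(U):U\in H_0^1(D),\ \|U\|_2^2=P\bigr\},\qquad
\mathcal M_P:=\bigl\{U:\|U\|_2^2=P,\ H(U)=\mathcal I_P\bigr\}.
$$
The plan is: (i)~show that every minimizing sequence for $\mathcal I_P$ is precompact in $H_0^1(D)$ up to a phase; (ii)~identify $\mathcal M_P$ with the phase orbit $\{e^{i\alpha}R^{(1)}_{\mu}:\alpha\in{\Bbb R}\}$; and (iii)~combine (i)--(ii) with the conservation of $P$ and $H$, and with global well-posedness, to conclude orbital stability by the usual contradiction argument.

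For step~(i), the sharp Gagliardo--Nirenberg inequality $\|U\|_4^4\le\frac{2}{\Pcr}\|U\|_2^2\|\nabla U\|_2^2$ together with $P<\Pcr$ gives $H(U)\ge(1-P/\Pcr)\|\nabla U\|_2^2$, so $H$ is coercive on the constraint and any minimizing sequence $\{U_k\}$ is bounded in $H_0^1(D)$. By Rellich--Kondrachov, along a subsequence $U_k\rightharpoonup U_\ast$ in $H_0^1(D)$ and $U_k\to U_\ast$ strongly in $L^2(D)$ and in $L^4(D)$; hence $\|U_\ast\|_2^2=P$ and, by weak lower semicontinuity of $\|\nabla\cdot\|_2$, $H(U_\ast)\le\liminf H(U_k)=\mathcal I_P$, so $U_\ast\in\mathcal M_P$ and $\|\nabla U_k\|_2\to\|\nabla U_\ast\|_2$, which upgrades the convergence to strong in $H_0^1(D)$. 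This is precisely where the boundedness of $D$ eliminates the ``vanishing'' and ``dichotomy'' alternatives that make the free-space problem hard. Moreover every $U_\ast\in\mathcal M_P$ satisfies $H(|U_\ast|)\le H(U_\ast)$ (as recalled after Conjecture~\ref{conj:unique_min}), so one may take $U_\ast$ real and nonnegative; it then solves $-\Delta U_\ast+\lambda U_\ast=U_\ast^{3}$ for a Lagrange multiplier $\lambda$, and $U_\ast>0$ in $D$ by the strong maximum principle, while the equality case of the diamagnetic inequality forces every minimizer to be a constant phase times such a positive $U_\ast$.

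For step~(ii) one needs Conjecture~\ref{conj:unique_min}: on a convex bounded domain the positive solution of~\eqref{eq:dDim_R_ODE2} is unique up to the symmetries of $D$. On a \emph{convex} domain the moving-plane method of Gidas--Ni--Nirenberg forces every positive solution to be invariant under all symmetries of $D$ (and monotone away from the symmetry hyperplanes, which pins down the location of its maximum, cf.\ Corollary~\ref{cor:square}); hence ``unique up to symmetry'' becomes genuine uniqueness, $\lambda=\mu$, $U_\ast=R^{(1)}_\mu$, and $\mathcal M_P=\{e^{i\alpha}R^{(1)}_\mu:\alpha\in{\Bbb R}\}$. Granting this, step~(iii) is routine: global well-posedness of~\eqref{eq:NLS_bounded} in $H_0^1(D)$ for data with $P<\Pcr$ follows from local well-posedness on bounded domains together with the a~priori bound $\|\nabla\psi(z)\|_2^2\le(1-P/\Pcr)^{-1}H(\psi_0)$; and if $\psi_{\rm sw}^{(1)}=e^{i\mu z}R^{(1)}_\mu$ were not orbitally stable there would exist $\varepsilon_0>0$, initial data $\psi_0^{(k)}\to R^{(1)}_\mu$ in $H_0^1(D)$, and times $z_k$ with $\inf_{\alpha}\|\psi^{(k)}(z_k)-e^{i\alpha}R^{(1)}_\mu\|_{H_0^1}=\varepsilon_0$; since $P(\psi^{(k)}(z_k))\to P$ and $H(\psi^{(k)}(z_k))\to H(R^{(1)}_\mu)=\mathcal I_P$, a small rescaling turns $\{\psi^{(k)}(z_k)\}$ into a minimizing sequence for $\mathcal I_P$, which by (i)--(ii) converges in $H_0^1(D)$, after a phase shift, to $R^{(1)}_\mu$ --- contradicting $\inf_{\alpha}\|\cdot\|_{H_0^1}=\varepsilon_0$.

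The main obstacle is therefore step~(ii), i.e.\ Conjecture~\ref{conj:unique_min} itself. Existence of a minimizer holds on any regular bounded domain~\cite{bounded-01}; its uniqueness was proved there only for $D=B_1$, where radial symmetry reduces~\eqref{eq:dDim_R_ODE2} to an ODE with a unique positive solution by a shooting argument. On a general convex $D$ there is no such reduction, and uniqueness of positive solutions of $-\Delta u+\mu u=u^{3}$ on a convex domain is a genuinely hard elliptic problem --- open in general, known for balls, and plausibly tractable for the square and for the circular and annular sectors of Section~\ref{sec:Solitary wave} by further symmetrization or ODE methods. A second route that avoids uniqueness of the \emph{minimizer} is the Grillakis--Shatah--Strauss criterion: orbital stability of $e^{i\mu z}R^{(1)}_\mu$ follows from the VK condition $\frac{d}{d\mu}P(R^{(1)}_\mu)>0$ (see~\eqref{eq:VK-pearl}) provided $L_-:=-\Delta+\mu-(R^{(1)}_\mu)^2$ has $\ker L_-={\rm span}\{R^{(1)}_\mu\}$ (automatic, since $R^{(1)}_\mu>0$ is its principal eigenfunction) and $L_+:=-\Delta+\mu-3(R^{(1)}_\mu)^2$ has exactly one negative eigenvalue and trivial kernel. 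Here $n(L_+)\ge1$ because $\langle L_+R^{(1)}_\mu,R^{(1)}_\mu\rangle=-2\|R^{(1)}_\mu\|_4^4<0$, and $n(L_+)\le1$ once $R^{(1)}_\mu$ is known to be a constrained minimizer; but the non-degeneracy $\ker L_+=\{0\}$ --- automatic in free space only through translation invariance, which is absent on a bounded domain --- is again delicate on a general convex domain and is the analytic crux of this second route. Either way, the conjecture reduces to a uniqueness or non-degeneracy statement for the elliptic ground state that is known on the disk but remains open for arbitrary convex domains.
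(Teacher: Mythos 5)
You should be aware that the paper does not prove this statement at all: it is stated as a conjecture, supported only by numerical evidence (the absence of eigenvalues of~\eqref{eq:eigenvalues2} with positive real part for the single pearl on the square, circular sector, and annular sector, the numerically verified VK slope condition~\eqref{eq:VK-pearl}, and direct simulations with perturbed single-pearl data as in Fig.~\ref{fig:rectangle_dx0_025dy0_025Zmax10Xmax1Ymax1mu1only0and10}). So there is no ``paper proof'' to match, and your proposal should be judged as a conditional program rather than against an existing argument. As such it is sound and honestly presented: the coercivity via the sharp Gagliardo--Nirenberg inequality, the Rellich--Kondrachov compactness that kills vanishing and dichotomy on a bounded domain, and the Cazenave--Lions contradiction argument in step~(iii) are all correct, and you correctly locate the real obstruction in step~(ii), namely Conjecture~\ref{conj:unique_min} (equivalently, nondegeneracy of $L_+$ in the Grillakis--Shatah--Strauss route), which is open for general convex domains and known essentially only for the disk. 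Two smaller caveats you partially gloss over: first, identifying the Lagrange multiplier $\lambda$ of the constrained minimizer with the given $\mu$ requires strict monotonicity of $\mu\mapsto P(R^{(1)}_\mu)$, which in the paper is itself only a numerical observation (item~5 of the Introduction and~\eqref{eq:VK-pearl}), so even granting uniqueness of positive solutions your step~(ii) is doubly conditional; second, the moving-plane argument yields invariance under reflections only for convex domains that actually possess those reflection symmetries (the square and the circular sector do; an arbitrary convex domain need not, and the annular sector invoked elsewhere in the paper is not even convex), so ``unique up to symmetry implies genuine uniqueness'' should be restricted accordingly. With those qualifications, your reduction is a useful analytic complement to the paper's purely numerical support, but it does not close the conjecture, and you say as much.
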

  \begin{figure}[ht!]
\begin{center}
\scalebox{0.6}{\includegraphics{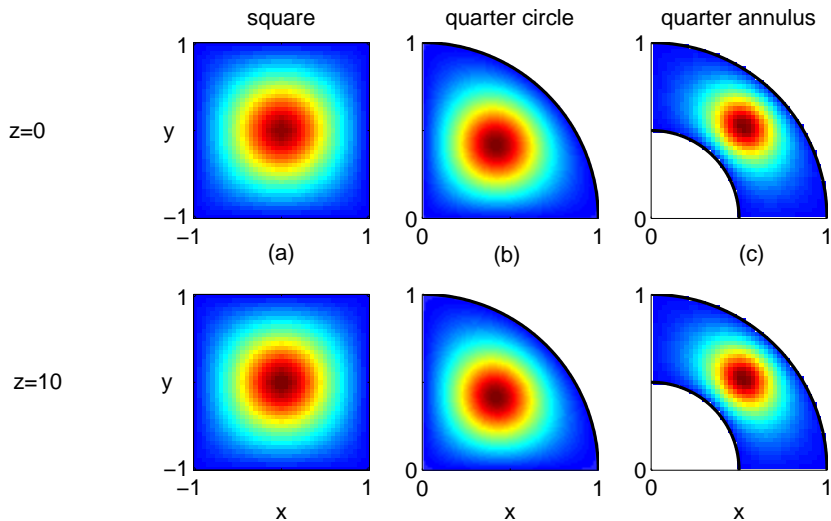}}
\caption{ Contour plots of $|\psi|$ at $z=0$ (top) and $z=10$ (bottom). Here $\psi$ is the solution
 of the NLS~\eqref{eq:NLS_bounded} on a domain~$D$, with the perturbed single-pearl initial condition $\psi_{0}=1.05 R_{\mu=1}^{(1)}(x,y)$.
(a)~$D$ is the square $[-1, 1]^2$. (b)~$D$ is the quarter circle $\{0\leq r \leq 1\ , 0\leq \theta \leq \frac{\pi}{2}\}$.
(c)~$D$ is the quarter annulus $\{\frac{1}{2}\leq r \leq 1\ , 0\leq \theta \leq \frac{\pi}{2}\}$. }
\label{fig:rectangle_dx0_025dy0_025Zmax10Xmax1Ymax1mu1only0and10}
\end{center}
\end{figure}

\subsubsection{Annular domain}

In the case of an annular domain, there are two types of positive solitary waves:
Radial ring-type solutions $\psi_{\rm sw}^{\rm ring} = e^{ i \mu z} R^{\rm ring}_\mu(r)$,
which exist for $ \mu_{\rm lin} <\mu< \infty$, and
nonradial single-peak solutions $\psi_{\rm sw}^{(1)} = e^{ i \mu z} R^{(1)}_\mu(r,\theta)$,
which exist for $ \mu_{\rm c} <\mu< \infty$, see
Section~\ref{sec:Annular-radial-solitary-waves}.
The ground states solutions (i.e., the minimizers of the Hamiltonian, see Conjecture~\ref{conj:unique_min}) are
radial for $ \mu_{\rm lin} <\mu< \mu_{\rm c}$ and non-radial for $ \mu_{\rm c} <\mu< \infty$.

To test numerically for stability, we solve the NLS with an initial condition which is 
the solitary wave profile, multiplied by~1.05 in the first quadrant. Thus,     
$\psi_0 = R^{\rm ring}_\mu(r) H(\theta)$ or $\psi_0 = R^{(1)}_\mu(r) H(\theta)$,
where  
\begin{equation}
\label{eq:H}
H(\theta) = 
\left\{ 
\begin{array}{cl}
1.05, \quad &\mbox{for~~} 0 <\theta < \pi/2, \\
1 , \quad   &\mbox{for~~}   \pi/2 <\theta <2\pi.
\end{array}
\right.
\end{equation}
Numerical simulations (Figure~\ref{fig:R_full_annular_dr0_0125dth0_16_Zmax_10_Rmax_1_Rmin0_5_mu_-38_5and-37_9_power_5_percent__quarter_pertubation} top and bottom raws) show that the ground state solitary waves $\psi_{\rm sw}^{(1)}$
are stable both below and above the symmetry-breaking point~$\mu_{\rm cr}$
(i.e.,
for $ \mu_{\rm lin} <\mu < \infty $).
The ring-type solitary waves $\psi_{\rm sw}^{\rm ring}$, however,
are unstable for $ \mu_{\rm cr} <\mu< \infty $, see
Figure~\ref{fig:R_full_annular_dr0_0125dth0_16_Zmax_10_Rmax_1_Rmin0_5_mu_-38_5and-37_9_power_5_percent__quarter_pertubation} (middle raw).
Thus, as expected, the ground-state solutions are stable, but the excited ones are not.\footnote{A similar stability pattern, below and above the symmetry-breaking point, was observed for the inhomogeneous NLS~\cite{Kirr-08}.}'\footnote{Intuitively, the high-power
ring-type solitary waves inherit the azimuthal instability of the free-space ring-type solutions.
}

\begin{figure}[ht!]
\begin{center}
\scalebox{0.8}{\includegraphics{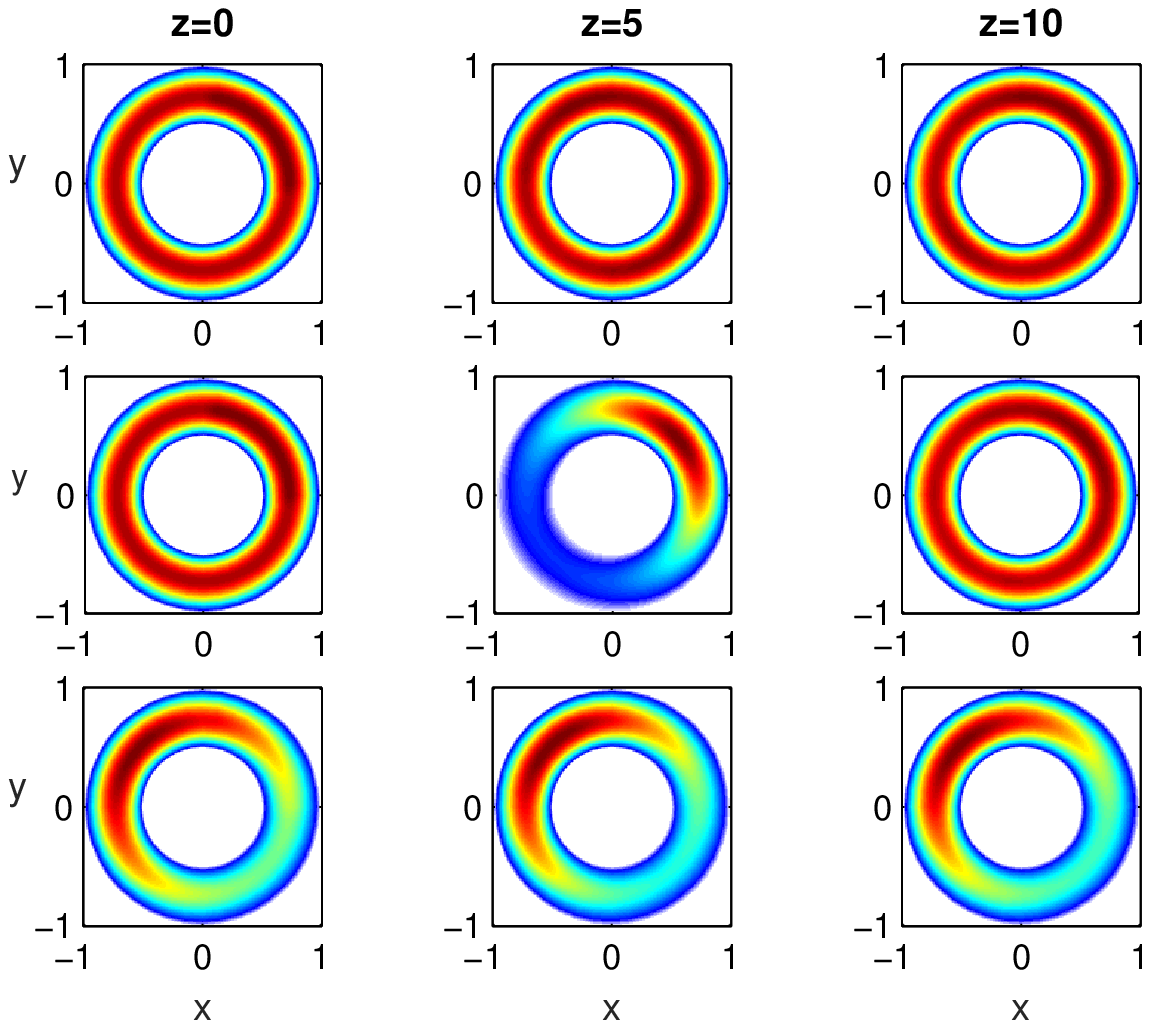}}
\caption{Solution of the NLS~\eqref{eq:NLS_bounded} on the annular domain $D = \{0.5 \le |{\bf x}| \le 1\}$. Top row:~$\mu=\mu_{\rm cr}-0.3$ and $\psi_0 = R^{\rm ring}_\mu(r) H(\theta)$, where  $H(\theta)$ is given by~\eqref{eq:H}.
 Middle row:~$\mu=\mu_{\rm cr}+0.3$
 and $\psi_0 = R^{\rm ring}_\mu(r) H(\theta)$.
 Bottom row:~$\mu=\mu_{\rm cr}+0.3$
 and $\psi_0 = R^{(1)}_\mu(r,\theta) H(\theta)$.
Here $\mu_{\rm cr} \thickapprox-38.2$.
}
\label{fig:R_full_annular_dr0_0125dth0_16_Zmax_10_Rmax_1_Rmin0_5_mu_-38_5and-37_9_power_5_percent__quarter_pertubation}
\end{center}
\end{figure}

\subsection{Stability of necklace solitary waves}
   \label{sec:Necklace-stability}

 Since each pearl satisfies the VK condition (Section~\ref{sec:Linear stability_single_pearl}),
so does the necklace, see~\eqref{eq:P=nP}. This does not imply that the necklace is stable, however, since the VK condition implies stability only for ground states. Therefore, to check for necklace stability, we have to ``go back'' to the original eigenvalue problem from which the VK condition was derived~\cite{Vakhitov-73}.

Let
$$
\psi = e^{i \mu t}(R({\bf x}) + \epsilon h(z,{\bf x})), \qquad h(z,{\bf x}) = e^{ \Omega z} (h(z,{\bf x}) +i v({\bf x})).
$$
Then the linearized equation for~$h$ reads
\begin{equation}
    \label{eq:eigenvalues2}
\left({\begin{array}{cc}
0 & L_{+} \\
-L_{-} & 0
\end{array}} \right)
\left(\begin{array}{c} v \\
u \end{array} \right)= \Omega
\left(\begin{array}{c} v \\
u \end{array} \right),
 \end{equation}
where $\Omega \in \mathbb{C}$ is the eigenvalue, $\left(\begin{array}{c} v \\ u \end{array} \right)=\left(\begin{array}{c} v(x,y) \\
u(x,y) \end{array} \right)$ is the eigenfunction, and
$$
L_{+}:=\Delta -\mu +3|R_\mu|^{2} ,\quad  L_{-}:=\Delta-\mu +|R_\mu|^{2}.
$$
Note that if~$\Omega$ is an eigenvalue with eigenvector~$\left(\begin{array}{c} v \\ u \end{array} \right)$, then $-\Omega$ is
an eigenvalue with eigenvector
$\left(\begin{array}{c} -v \\ u \end{array} \right)$,
 and $\Omega^{*}$ is an eigenvalue with eigenvector
$\left(\begin{array}{c} v^{*} \\ u^{*} \end{array} \right)$.
 Therefore, the genuinely-complex eigenvalues of~(\ref{eq:eigenvalues2}) always
come in quadruples.
Consequently, without loss of generality,
we focus from now on eigenvalues with nonnegative real and imaginary parts.

Necklace solitary waves are stable under perturbations that preserve their symmetries (or, more precisely, their anti-symmetries with respect to the interfaces between adjacent pearls):
\begin{lem}
   \label{lem:necklace-stability}
Let $\psi_{\rm sw}^{(n)}$ be a necklace solitary wave with $n$~pearls.
If the single-pearl solitary wave is stable, then~$\psi_{\rm sw}^{(n)}$ is stable under perturbations that preserve
the symmetries of~$\psi_{\rm sw}^{(n)}$.
\end{lem}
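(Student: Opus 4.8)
The plan is to reduce the stability of the $n$-pearl necklace, under symmetry-preserving perturbations, to the already-known stability of a single pearl on a reduced domain. First I would make precise the symmetry class in question: a perturbation $\tilde h(z,{\bf x})$ of $\psi_{\rm sw}^{(n)}$ is \emph{symmetry-preserving} if it respects all the reflections/rotations that fix the necklace configuration and, crucially, is antisymmetric with respect to each interface between adjacent pearls (the dashed lines of Figures~\ref{fig:rectangle_necklace}, \ref{fig:four_quarters}, \ref{fig:four_quarters_hole}). For the rectangular $l\times k$ necklace this means $\tilde h(x,y)=-\tilde h(2j-x,y)$ across each vertical interface $x\equiv 2j-1$, and similarly in $y$; analogously for the circular and annular necklaces across the separating rays. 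By the same uniqueness argument as in Lemma~\ref{lem:necklace} (uniqueness of solutions of the NLS Cauchy problem), these symmetries are \emph{invariant under the NLS flow}: if $\psi_0-R_\mu^{(n)}$ lies in this symmetry class, then $\psi(z,\cdot)-e^{i\theta(z)}\psi_{\rm sw}^{(n)}$ stays in it for all $z$.

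Next I would observe that on the invariant subspace of symmetry-preserving data, the electric field vanishes identically on every interface (this is forced by the antisymmetry, exactly as in~\eqref{eq:psi===0}). Hence a solution of~\eqref{eq:NLS_bounded} on $D$ with symmetry-preserving initial data is, cell by cell, a solution of the NLS on a single cell $D_0$ (a square, a circular sector, or an annular sector) with Dirichlet conditions on $\partial D_0$, and the $n$ cells are exact reflected/signed copies of one another. Therefore the full $H_0^1(D)$-distance between $\psi$ and the necklace decomposes as $n$ times the corresponding distance on $D_0$ between $\psi|_{D_0}$ and $\pm e^{i\theta(z)}\psi_{\rm sw}^{(1)}$; more precisely,
\begin{equation}
  \label{eq:decomp}
  \inf_{\theta}\big\|\psi(z,\cdot)-e^{i\theta}\psi_{\rm sw}^{(n)}\big\|_{H_0^1(D)}^2
  = n\,\inf_{\theta}\big\|\psi|_{D_0}(z,\cdot)-e^{i\theta}\psi_{\rm sw}^{(1)}\big\|_{H_0^1(D_0)}^2,
\end{equation}
using that the same phase $\theta(z)$ works simultaneously for all cells (the signs $\pm$ are absorbed into $\theta\mapsto\theta+\pi$ and are consistent because adjacent cells carry opposite signs and the perturbation is antisymmetric across the shared interface). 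The same identity holds at $z=0$ with $\psi$ replaced by $\psi_0$.

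Now I would invoke the hypothesis: the single pearl $\psi_{\rm sw}^{(1)}$ on $D_0$ is orbitally stable. Given $\epsilon>0$, pick $\delta_0>0$ from the definition of orbital stability of $\psi_{\rm sw}^{(1)}$ on $D_0$ for the target $\epsilon/\sqrt n$, and set $\delta:=\sqrt n\,\delta_0$. If $\psi_0-R_\mu^{(n)}$ is symmetry-preserving with $\|\psi_0-R_\mu^{(n)}\|_{H_0^1(D)}<\delta$, then by~\eqref{eq:decomp} at $z=0$ the restriction satisfies $\|\psi_0|_{D_0}-R_\mu^{(1)}\|_{H_0^1(D_0)}<\delta_0$; orbital stability on $D_0$ gives $\inf_\theta\|\psi|_{D_0}-e^{i\theta}\psi_{\rm sw}^{(1)}\|_{H_0^1(D_0)}<\epsilon/\sqrt n$ for all $z$; and~\eqref{eq:decomp} at general $z$ then yields $\inf_\theta\|\psi-e^{i\theta}\psi_{\rm sw}^{(n)}\|_{H_0^1(D)}<\epsilon$, which is the claim.

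The main obstacle is the step that lets a \emph{single} phase $\theta(z)$ synchronize all $n$ cells in~\eqref{eq:decomp}: a priori the minimizing phase on each cell could drift independently. This is where the antisymmetry across interfaces is essential — it glues the cells into one connected solution on $D$, so the modulation parameter is genuinely global, not one per pearl. Making this rigorous requires a short modulation-theory argument (or, alternatively, one can avoid phases altogether by working with the conserved $P$ and $H$ on the symmetric subspace and running the Cazenave–Lions/Grillakis–Shatah–Strauss variational stability argument directly on $D_0$, then lifting). A secondary point to check is that the symmetry-preserving subspace is closed in $H_0^1(D)$ and that restriction to $D_0$ is an isometry onto $H_0^1(D_0)$ up to the factor $\sqrt n$ — both are routine. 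One should also remark that this lemma says nothing about perturbations \emph{outside} the symmetry class; those are exactly the azimuthal, antisymmetry-breaking perturbations responsible for the necklace instability discussed in the sequel.
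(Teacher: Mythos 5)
Your argument is essentially the paper's own proof: symmetry-preserving (interface-antisymmetric) perturbations keep the field identically zero on the interfaces by the uniqueness argument of Lemma~\ref{lem:necklace} and~\eqref{eq:psi===0}, so the pearls decouple into independent Dirichlet problems on the single-pearl cell and the necklace inherits the single-pearl stability; your norm decomposition and $\epsilon$--$\delta$ bookkeeping merely make explicit what the paper leaves implicit. Note also that the phase-synchronization ``obstacle'' you flag dissolves within your own symmetry class, since there the solution is an exact signed replication of its restriction to one cell, so a single global phase works for every fixed $\theta$ and the infimum passes through automatically.
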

\begin{proof}
As in free space, see Lemma~\ref{lem:necklace} and~\eqref{eq:psi===0}, the solution remains zero on the interfaces between pearls. Therefore, there is no interaction between pearls. Hence, if the single pearl is stable, then so is the necklace.
\end{proof}

\Remark As $\mu \to \infty$, the single pearl has the asymptotic scaling
$$
R^{(1)}_{\mu}\thicksim R^{(1), \rm free}_{\mu, \rm 2D}(r) = \mu^{\frac{1}{2}}R^{(1), \rm free}_{\rm 2D}(\mu^{\frac{1}{2}}r).
$$ Therefore, the necklace $R^{(n)}_{\mu}(x,y)$
has the same asymptotic scaling.
Hence, by~\eqref{eq:eigenvalues2},
so does the eigenvector~$\left(\begin{array}{c} v \\ u \end{array} \right)$,
while the eigenvalue~$\Omega$ scales asymptotically as~$\mu$.

 \subsection{Rectangular necklaces}
   \label{sec:stability-rectangle}

  When $R_\mu=R_\mu^{(1)}$ is a single pearl on the square $D=[-1, 1]^2$, there are no eigenvalues
of~\eqref{eq:eigenvalues2} with a positive real part, see Fig.~\ref{fig:Re_Omega_as_meu_2_d_4_rects_sigma_1_Nx_41_Ny_41}(a).
This is in agreement with the observed stability of $\psi_{\rm sw}^{(1)}=e^{i \mu z}R^{(1)}_{\mu}$
in Fig.~\ref{fig:rectangle_dx0_025dy0_025Zmax10Xmax1Ymax1mu1only0and10}(a).

  Next, we investigate the stability of $\psi_{\rm sw}^{(2 \times 1)}=R_\mu^{(2\times 1)}(x,y)e^{i\mu z}$,
where $R_\mu^{(2\times 1)}$ is a rectangular necklace with 2~pearls.
 Fig.~\ref{fig:Re_Omega_as_meu_2_d_4_rects_sigma_1_Nx_41_Ny_41}(b) shows that there are no eigenvalues
of~\eqref{eq:eigenvalues2}
with $\mbox{Re}(\Omega)>0$ for $\mu_{\rm lin}\leqslant \mu < \mu_{\rm cr}$,
and that there is a simple eigenvalue~$\Omega_1(\mu)$ with $\mbox{Re}(\Omega_1)>0$ for $ \mu_{\rm cr}<\mu<\infty$,
where  $\mu_{\rm lin}=-\frac{\pi^{2}}{2}\thickapprox -4.9$, see~\eqref{eq:Rlambdarectangular_b-lin},
and~$\mu_{\rm cr}\thickapprox-4$.
 Hence, $\psi_{\rm sw}^{(2\times 1)}$ is linearly stable for $\mu_{\rm lin}\leqslant \mu < \mu_{\rm cr}$ and unstable for $ \mu_{\rm cr}<\mu<\infty$.
As noted, $R^{(2\times 1)}$~satisfies the VK condition for all~$\mu$, since it is constructed from $2$ identical ground states,
each of which satisfies the VK condition.
This does not lead to a contradiction, however, since the VK condition applies to ground states,
which is not the case for
multi-pearl necklaces.

 The change from stability to instability occurs as the necklace power exceeds the threshold power
\begin{equation}
  \label{eq:Pth_square_n=2}
P_{\rm th}^{\rm necklace}(n=2\times 1) := P(R_{\mu_{\rm cr}}^{(2\times 1)})
\approx 0.275P_{\rm cr},
\end{equation}
see Figure~\ref{fig:Re_Omega_as_P_2_d_4_rects_sigma_1_Nx_41_Ny_41}(b),
 i.e., when the power of each pearl is $\approx 0.14 \Pcr$.
For comparison, all single-pearl
solitary waves with power below~$P_{\rm cr}$ are stable. From this perspective,
2-pearl necklace solitary waves are considerably less stable than single-pearl ones.
\begin{figure}[ht!]
\begin{center}
\scalebox{0.6}{\includegraphics{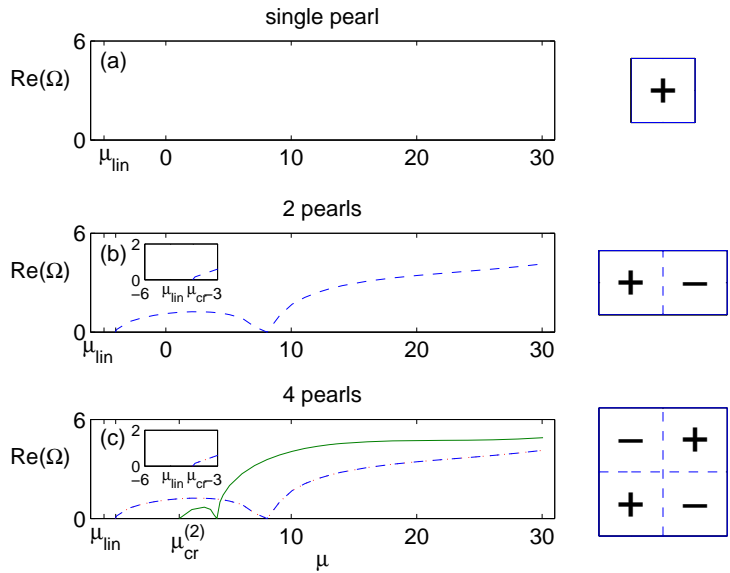}}
\caption{  $\mbox{Re}(\Omega)$ as a function of~$\mu$ for all eigenvalues of~(\ref{eq:eigenvalues2}) with a positive real part. Here~$D$ is the rectangular domain shown on the right.
 (a)~$R=R^{(1)}_{\mu}$.
(b)~$R=R^{(2 \times 1)}_{\mu}$. The dashed line corresponds to~$\Omega_1$.
(c)~$R=R^{(2 \times 2)}_{\mu}$.
The dashed blue and dotted red lines correspond to~$\Omega_1$, and the solid green line to~$\Omega_2$.
 }
\label{fig:Re_Omega_as_meu_2_d_4_rects_sigma_1_Nx_41_Ny_41}
\end{center}
\end{figure}

\begin{figure}[ht!]
\begin{center}
\scalebox{0.6}{\includegraphics{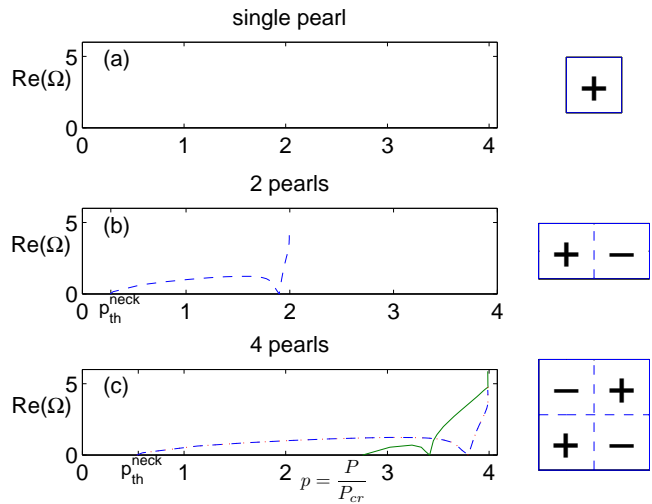}}
\caption{Same as Fig~\ref{fig:Re_Omega_as_meu_2_d_4_rects_sigma_1_Nx_41_Ny_41}, where the $x$-coordinate is the fractional necklace power $p = P(R_\mu^{(n)})/\Pcr$.
Here
$p_{\rm th}^{\rm neck} := P_{\rm th}^{\rm necklace}(n)/\Pcr$ is $\approx 0.275$ in~(b) and  $\approx 0.55$ in~(c).
 }
\label{fig:Re_Omega_as_P_2_d_4_rects_sigma_1_Nx_41_Ny_41}
\end{center}
\end{figure}

Following~\cite{Weinstein-06,Sivan-08}, in order to understand the {\em instability dynamics} of~$\psi_{\rm sw}^{(2\times 1)}$, in Fig~\ref{fig:eigenvectors_Re_Omega_as_meu_2rects_2d_Nx41Ny41_Xmax_1Ymax1mu_min20_mu_max_20_dmu_3_k_eigs_400_4o}
we plot the eigenfunction $f^{(2\times 1)}_1:=\left(\begin{array}{c} v \\ u \end{array} \right)$ that corresponds
to the unstable eigenvalue~$\Omega_1$ of~\eqref{eq:eigenvalues2}
with~$R=R_\mu^{(2 \times 1)}$. We plot both the imaginary and real parts (and not just the absolute value)
since we want to check whether~$f^{(2\times 1)}_1$ is symmetric or anti-symmetric.
The unstable eigenfunction satisfies $f^{(2\times 1)}_1(x,y)=f^{(2\times 1)}_1(-x,y)$, in contrast to~$R_\mu^{(2\times 1)}$ that satisfies $R_\mu^{(2\times 1)}(x,y)=-R_\mu^{(2\times 1)}(-x,y)$.
In other words, $f^{(2\times 1)}_1$ is symmetric with respect to the interface $x=0$ between the pearls, whereas
$R_\mu^{(2\times 1)}$ is antisymmetric with respect to $x=0$.
 This shows that the instability is related to the breaking of the anti-symmetry between the two pearls.
{\em Since the addition of a symmetric perturbation to an anti-symmetric profile lowers one of the peaks while increasing the other, the instability evolves as power flows from one pearl to the other.}

\begin{figure}[ht!]
\begin{center}
\scalebox{1}{\includegraphics{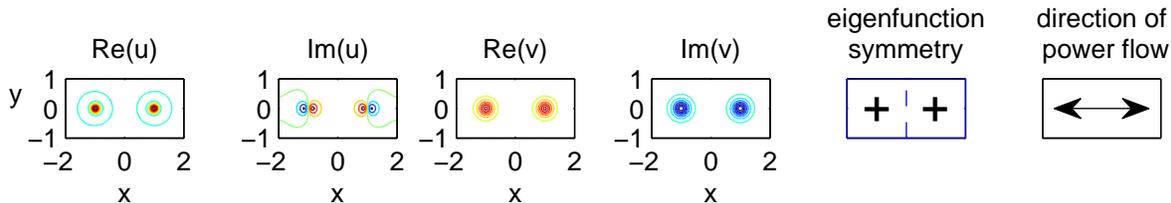}}
\caption{Real and imaginary parts of the eigenfunction $f^{(2\times 1)}_1 = {v \choose u}$
 of~(\ref{eq:eigenvalues2})
that corresponds to the unstable eigenvalue
$\Omega_1(\mu=20)=3.4+4.4i$.  Here $D  = [-2, 2] \times [-1, 1]$ and $R=R_{\mu=20}^{(2 \times 1)}$.
The rightmost panel illustrates the directions in which the power flows as the instability evolves.
 }
\label{fig:eigenvectors_Re_Omega_as_meu_2rects_2d_Nx41Ny41_Xmax_1Ymax1mu_min20_mu_max_20_dmu_3_k_eigs_400_4o}
\end{center}
\end{figure}

Next, we investigate the stability of $\psi_{\rm sw}^{(2 \times 2)}=R_\mu^{(2 \times 2)}(x,y)e^{i\mu z}$, where $R^{(2 \times 2)}$ is
a square necklace with $2 \times 2$ pearls.
  Fig.~\ref{fig:Re_Omega_as_meu_2_d_4_rects_sigma_1_Nx_41_Ny_41}(c) shows that $\psi_{\rm sw}^{(2 \times 2)}$ is stable for $\mu_{\rm lin}\leqslant \mu < \mu_{\rm cr}$ and unstable for $\mu_{\rm cr}<\mu<\infty$, where~$\mu_{\rm lin}$
and~$\mu_{\rm cr}$ are as in the case of~$R_\mu^{(2 \times 1)}$.
Furthermore, $\psi_{\rm sw}^{(2 \times 2)}$ has an unstable eigenvalue~$\Omega_1(\mu)$ of multiplicity~2
for $\mu_{\rm cr}
< \mu< \infty$, whose value is the same as for~$\psi_{\rm sw}^{(2 \times 1)}$,
and an additional simple eigenvalue~$\Omega_2(\mu)$ for $ \mu_{\rm cr}^{(2)}< \mu < \infty  $,
where $\mu_{\rm cr} <\mu_{\rm cr}^{(2)}\thickapprox 1$.

\begin{figure}[ht!]
\begin{center}
\scalebox{1}{\includegraphics{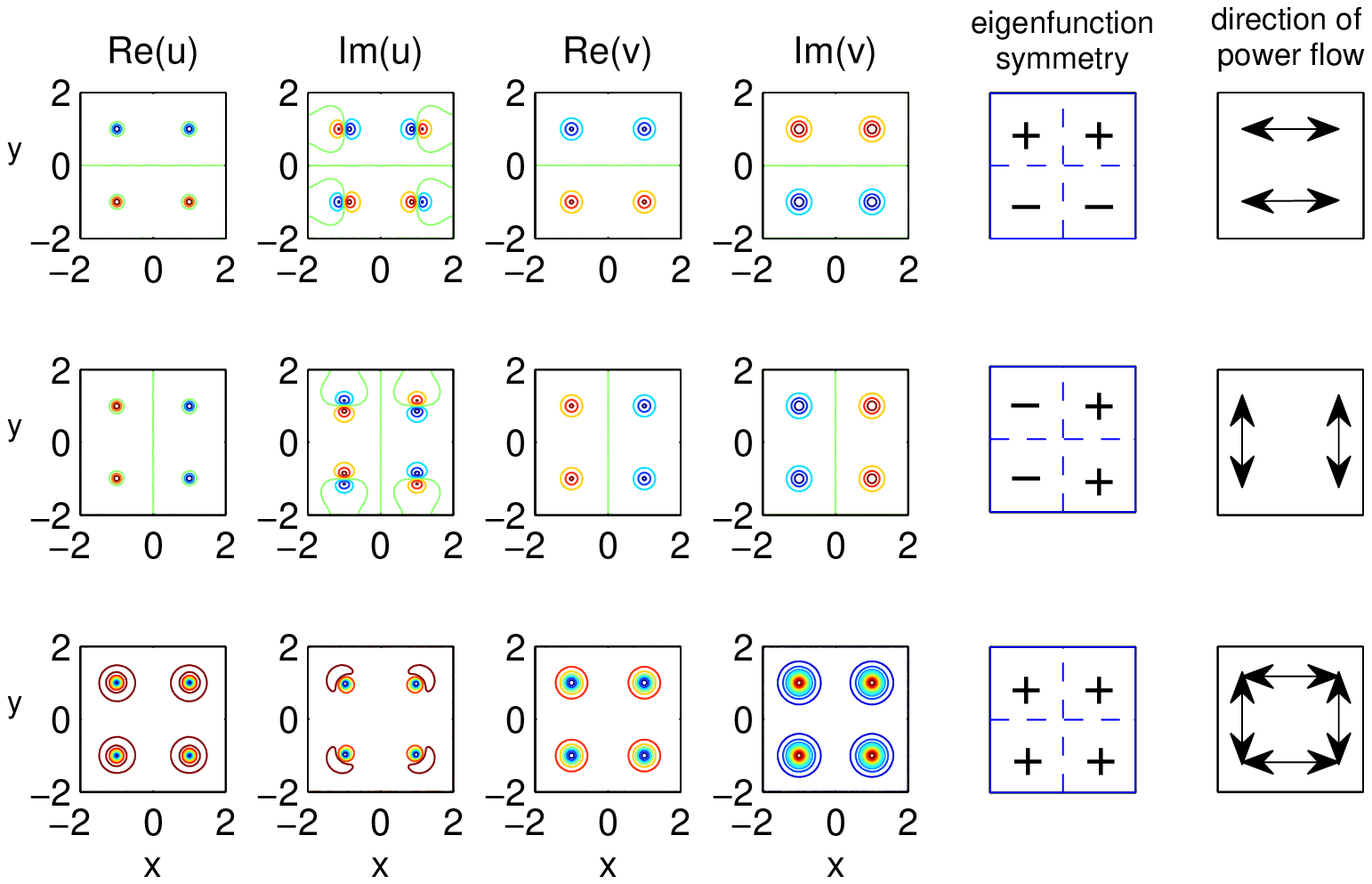}}
\caption{  Same as Fig.~\ref{fig:eigenvectors_Re_Omega_as_meu_2rects_2d_Nx41Ny41_Xmax_1Ymax1mu_min20_mu_max_20_dmu_3_k_eigs_400_4o} with $D  = [-2, 2]^2$ and 
$R=R_{\mu=20}^{(2 \times 2)}$. First row:~The eigenfunction $f^{(2 \times 2)}_{1,1}$ of $\Omega_1$.
Second row:~The eigenfunction $f^{(2 \times 2)}_{1,2}$ of $\Omega_1$.
Third row:~The eigenfunction $f^{(2 \times 2)}_{2}$ of $\Omega_2$.
}
\label{fig:eigenvectors_Re_Omega_as_meu_2d_Nx41Ny41_Xmax_1Ymax1mu_min20_mu_max_20_dmu_3_k_eigs_400_4o}
\end{center}
\end{figure}

In Fig.~\ref{fig:eigenvectors_Re_Omega_as_meu_2d_Nx41Ny41_Xmax_1Ymax1mu_min20_mu_max_20_dmu_3_k_eigs_400_4o} we plot the eigenfunctions that correspond to the unstable eigenvalues of~$R_\mu^{(2 \times 2)}$ for~$\mu=20$,  which is in the regime  $ \mu_{\rm cr}^{(2)}< \mu < \infty$.
One eigenfunction that corresponds to $\Omega_1(\mu=20)=3.4+4.4i$ satisfies $f^{(2 \times 2)}_{1,1}(x,y)=f^{(2 \times 2)}_{1,1}(-x,y)$ and $f^{(2 \times 2)}_{1,1}(x,y)=-f^{(2 \times 2)}_{1,1}(x,-y)$. In contrast, $R^{(2 \times 2)}_{\mu}(x,y)=-R^{(2 \times 2)}_{\mu}(-x,y)$ and $R^{(2 \times 2)}_{\mu}(x,y)=-R^{(2 \times 2)}_{\mu}(x,-y)$. Thus, $f^{(2 \times 2)}_{1,1}$ breaks the anti-symmetry of~$R^{(2 \times 2)}_{\mu}$ in the $x$-direction,
but preserves it in the $y$-direction.
It follows that there is a decoupling between the top and bottom pair of pearls as the instability evolves.
In fact, as illustrated in Figure~\ref{fig:f_2_2_rectangle}(a),
$f^{(2 \times 2)}_{1,1}(x,y)$ is nothing but  $f^{(2 \times 1)}_1(x,y)$ for $-2\leq y\leq 0,$ and $-f^{(2 \times 1)}_1(-x,-y)$ for $0\leq y\leq 2$, where $f^{(2\times 1)}_1$ is the unstable eigenfunction of~\eqref{eq:eigenvalues2} with $R^{(2 \times 1)}_{\mu}$, which is plotted in
Figure~\ref{fig:eigenvectors_Re_Omega_as_meu_2rects_2d_Nx41Ny41_Xmax_1Ymax1mu_min20_mu_max_20_dmu_3_k_eigs_400_4o}.
This explain why the values of~$\mu_{\rm cr}$ and~$\Omega_1(\mu)$ for~$R^{(2 \times 1)}_{\mu}$
and for~$R^{(2 \times 2)}_{\mu}$ are identical,
and also implies that
\begin{equation}
  \label{eq:P(R_cr)}
P_{\rm th}^{\rm necklace}(n=2 \times 2):=P(R_{\mu_{\rm cr}}^{(2 \times 2)})= 2 \cdot  P_{\rm th}^{\rm necklace}(n=2 \times 1) \approx 0.55P_{\rm cr},
\end{equation}
see Figure~\ref{fig:Re_Omega_as_P_2_d_4_rects_sigma_1_Nx_41_Ny_41}(c),
where $P_{\rm th}^{\rm necklace}(n=2 \times 2)$ is the threshold power for instability of a $2 \times 2$ square necklace.

The second eigenfunction that corresponds to $\Omega_1(\mu=20)=3.4+4.4i$
is a $90^{\circ}$ rotation of the first eigenfunction, i.e.,
$f^{(2 \times 2)}_{1,2}(x,y)$ is $f^{(2 \times 1)}_1(-y,x)$ for $-2\leq x\leq 0$
  and $-f^{(2 \times 1)}_1(y,-x)$ for $0\leq x\leq 2$,
 see Figure~\ref{fig:f_2_2_rectangle}(b). Thus, $f^{(2 \times 2)}_{1,2}$ preserves the antisymmetry in~$x$,
but breaks it in~$y$. Hence, in the instability dynamics
 which is excited by this eigenfunction, there is a decoupling between the left and right pairs of pearls.

The unstable eigenfunction which corresponds to $\Omega_2(\mu=20)=4.7+4.6i$, which is denoted by~$f^{(2 \times 2)}_2$,
satisfies $f^{(2 \times 2)}_2(x,y)=f^{(2 \times 2)}_2(-x,y)=f^{(2 \times 2)}_2(x,-y)=f^{(2 \times 2)}_2(-x,-y)$.
Hence, it breaks the antisymmetry between all pearls. Therefore,
the power flows between all $4$~pearls.

\begin{figure}[ht!]
\begin{center}
\scalebox{0.6}{\includegraphics{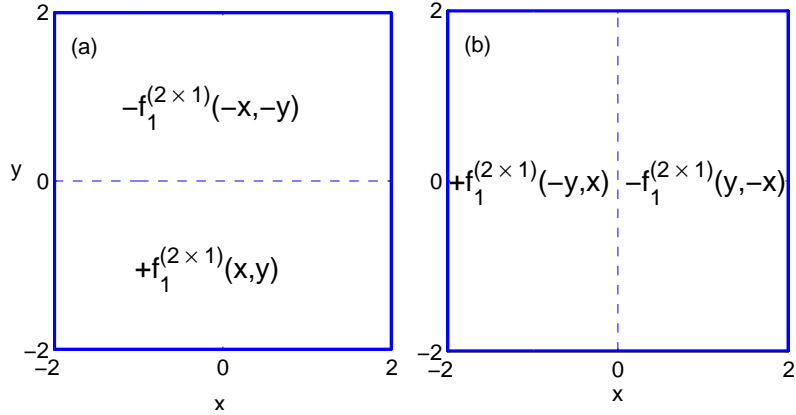}}
\caption{ (a)~Construction of~$f^{(2 \times 2)}_{1,1}$ using~$f^{(2\times 1)}_1$.
(b)~Construction of~$f^{(2 \times 2)}_{1,2}$ using~$f^{(2\times 1)}_1$. See details in text.
}
\label{fig:f_2_2_rectangle}
\end{center}
\end{figure}

\subsubsection{Instability dynamics}

 In free space, unstable solitary waves either collapse or scatter. On bounded domains, however, scattering is impossible.
In addition, since $P_{\rm th}^{\rm necklace}(n=2 \times 2) \approx 0.55P_{\rm cr}$, see~\eqref{eq:P(R_cr)},
then for~$\mu$ slightly above~$\mu_{\rm cr}$ we have that $P(R_{\mu}^{(2 \times 2)}) < P_{\rm cr}$, and so
$\psi_{\rm sw}^{(2 \times 2)}$
cannot collapse under small perturbations. Indeed,
{\em the change from stability to instability of~$\psi_{\rm sw}^{(2 \times 2)}$
is associated with the loss
of the necklace structure,\footnote{i.e., the loss of antisymmetry between adjacent pearls.} due to power flow between pearls, and not with collapse or scattering}. To see this,
in Fig.~\ref{fig:four_rectangles_dx0_025dy0_025_Zmax10_Xmax1_Ymax1_lamda-17_pertubated_5_percent_one_pearl_k_1_4o}
we solve the NLS numerically with a perturbed $R_\mu^{(2 \times 2)}$ initial condition.
In order to excite the unstable modes, we multiply the top-right pearl
by~$1.05$ and leave the other three pearls unchanged.\footnote{Unlike Figure~\ref{fig:rectangle_dx0_025dy0_025Zmax10Xmax1Ymax1mu1only0and10}, we do not test for stability by multiplying~$\psi_{0}$ by~1.05, since in that case
$\psi_0$ remains antisymmetric with respect to the interfaces between the pearls, and hence so does~$\psi$. Consequently, $\psi \equiv 0$ on the interfaces between pearls, which thus act as reflecting boundaries that prevent the pearls from interacting with each other.
 Since each pearl is stable, see Figure~\ref{fig:rectangle_dx0_025dy0_025Zmax10Xmax1Ymax1mu1only0and10}(a), the necklace remains stable
 under this antisymmetry-preserving perturbation (see Lemma~\ref{lem:necklace-stability}).}
 Fig.~\ref{fig:four_rectangles_dx0_025dy0_025_Zmax10_Xmax1_Ymax1_lamda-17_pertubated_5_percent_one_pearl_k_1_4o} shows that
  $\psi_{\rm sw}^{(2 \times 2)}$ preserves its necklace shape till $z=10$ for $\mu = \mu_{\rm cr}-\frac{1}{4}$,
but disintegrates before $z=3$ for $\mu = \mu_{\rm cr}+\frac{1}{4}$.
This confirms that there is a qualitative change in the necklace stability at~$\mu_{\rm cr}$.

\begin{figure}[ht!]
\begin{center}
\scalebox{0.6}{\includegraphics{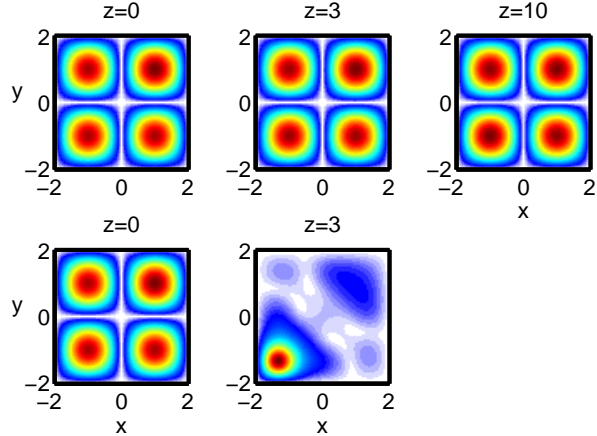}}
\caption{Solution of the NLS~(\ref{eq:NLS_bounded}) on the domain $D  = [-2, 2]^2$ at different distances. Here $\psi_{0}=R^{(2 \times 2)}_{\mu}(x,y)H(\theta)$, 
where  $H(\theta)$ is given by~\eqref{eq:H}. Top row:~$\mu=\mu_{\rm cr}-\frac{1}{4}$. Bottom row:~$\mu=\mu_{\rm cr}+\frac{1}{4}$. Here $\mu_{\rm cr}\thickapprox-4$.  }
\label{fig:four_rectangles_dx0_025dy0_025_Zmax10_Xmax1_Ymax1_lamda-17_pertubated_5_percent_one_pearl_k_1_4o}
\end{center}
\end{figure}


\subsection{Circular necklaces}
  \label{sec:stability-circle}

When $R_\mu=R_\mu^{(1)}$ is a single pearl on the quarter circle $\{0\leq r \leq 1\ , 0\leq \theta \leq \frac{\pi}{2}\}$, there are no eigenvalues of~\eqref{eq:eigenvalues2} with a positive real part, see Fig.~\ref{fig:Re_Omega_as_meu_2_d_4_quarters_circle_sigma_1_Nr_40_Nth_40}(a). This is in agreement with the observed stability of $\psi_{\rm sw}^{(1)}=e^{i \mu z}R^{(1)}_{\mu}$
in Fig.~\ref{fig:rectangle_dx0_025dy0_025Zmax10Xmax1Ymax1mu1only0and10}(b).

  Next, we investigate the stability of $\psi_{\rm sw}^{(2)}=R_\mu^{(2)}(x,y)e^{i\mu z}$,
where $R_\mu^{(2)}$ is a  2-pearl necklace on the semi-circle.
 Fig.~\ref{fig:Re_Omega_as_meu_2_d_4_quarters_circle_sigma_1_Nr_40_Nth_40}(b) shows that there are no eigenvalues
of~\eqref{eq:eigenvalues2}
with $\mbox{Re}(\Omega)>0$ for $\mu_{\rm lin}\leqslant \mu < \mu_{\rm cr}$,
and there is a single eigenvalue~$\Omega_1$ with $\mbox{Re}(\Omega_1)>0$ for $ \mu_{\rm cr}<\mu<\infty$,
where  $\mu_{\rm lin}=-k_{2}^{2}\thickapprox-26.4$, see~\eqref{eq:2Dim_R_ODE_general-lin}, and
 $\mu_{\rm cr}\thickapprox-24.3$.
 Hence, $\psi_{\rm sw}^{(2)}$ is linearly stable for $\mu_{\rm lin}\leqslant \mu < \mu_{\rm cr}$ and unstable for $ \mu_{\rm cr}<\mu<\infty$.\footnote{
As noted, $R^{(2)}$~satisfies the VK condition for all~$\mu$, since it is constructed from two identical ground states,
each of which satisfies the VK condition.
This does not lead to a contradiction, however, since the VK condition applies to positive solutions, which is not the case for
multi-pearl necklaces.}

 The change from stability to instability occurs as the necklace power exceeds the threshold power

\begin{equation}
  \label{eq:Pth_circular_n=2}
P_{\rm th}^{\rm necklace}(n=2) = P(R_{\mu_{\rm cr}}^{(2)})
\approx 0.12P_{\rm cr},
\end{equation}
see Figure~\ref{fig:Re_Omega_as_P_2_d_4_quarters_circle_sigma_1_Nr_40_Nth_40}(b),
 i.e., when the power of each pearl is $\approx  0.06 \Pcr$.
Thus, {\em the threshold power for instability of a 2-pearl necklace on the semi-circle is less than half of that on a rectangle},
see~\eqref{eq:Pth_square_n=2}. We have no explanation for this surprising observation.

\begin{figure}[ht!]
\begin{center}
\scalebox{0.6}{\includegraphics{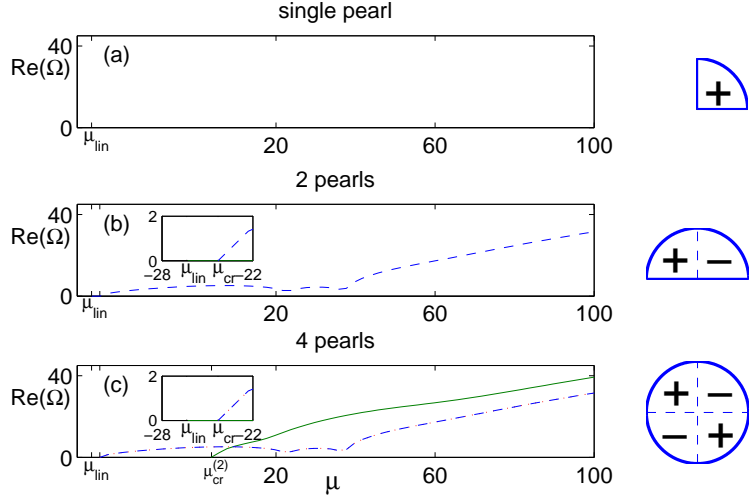}}
\caption{Same as Figure~\ref{fig:Re_Omega_as_meu_2_d_4_rects_sigma_1_Nx_41_Ny_41}, where $D$ is the sector of a circle shown on the right.
 (a)~$R=R^{(1)}_{\mu},$ (b)~$R=R^{(2)}_{\mu},$ (c)~$R=R^{(4)}_{\mu}.$
}
\label{fig:Re_Omega_as_meu_2_d_4_quarters_circle_sigma_1_Nr_40_Nth_40}
\end{center}
\end{figure}

\begin{figure}[ht!]
\begin{center}
\scalebox{0.6}{\includegraphics{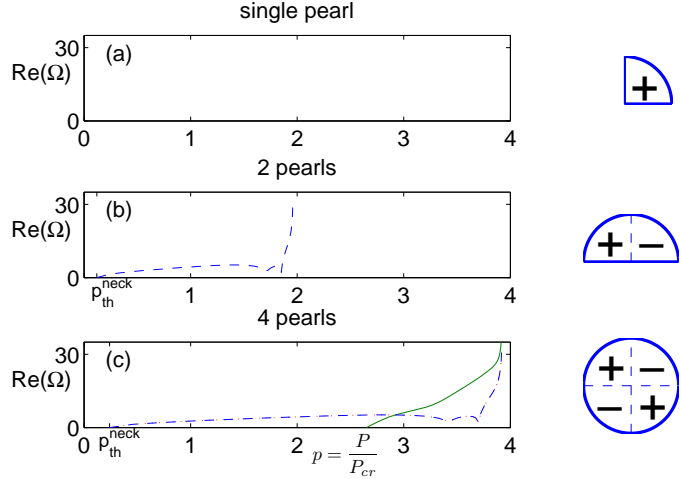}}
\caption{Same as Fig~\ref{fig:Re_Omega_as_meu_2_d_4_quarters_circle_sigma_1_Nr_40_Nth_40}, where the $x$-coordinate is the fractional necklace power $p = P(R_\mu^{(n)})/\Pcr$.
Here
$p_{\rm th}^{\rm neck} := P_{\rm th}^{\rm necklace}(n)/\Pcr$ is $\approx 0.12$ in~(b) and  $\approx 0.24$ in~(c).
 }
\label{fig:Re_Omega_as_P_2_d_4_quarters_circle_sigma_1_Nr_40_Nth_40}
\end{center}
\end{figure}

In Figure~\ref{fig:eigenvectors_two_quarters_Nr41Nth40mu12Rmax1k_eigs_400}
we plot the eigenfunction $f^{(2)}_1:=\left(\begin{array}{c} v \\ u \end{array} \right)$ that corresponds
to the unstable eigenvalue~$\Omega_1$
associated with~$R_\mu^{(2)}$. The unstable eigenfunction satisfies $f^{(2)}_1(x,y)=f^{(2)}_1(-x,y)$, 
in contrast to~$R_\mu^{(2)}$, which satisfies $R_\mu^{(2)}(x,y)=-R_\mu^{(2)}(-x,y)$. This shows that the instability
of~$\psi_{\rm sw}^{(2)}$ is related to the breaking of the anti-symmetry between the two pearls.
Thus, the instability evolves as power flows from one pearl to the other.

\begin{figure}[ht!]
\begin{center}
\scalebox{0.8}{\includegraphics{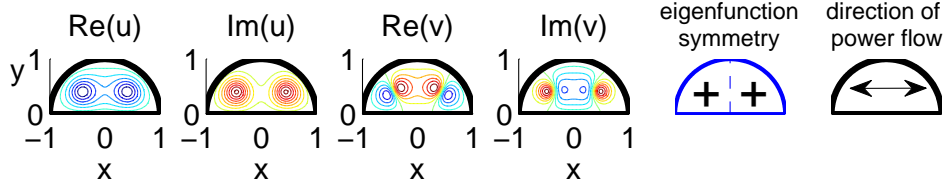}}
\caption{Same as Fig.~\ref{fig:eigenvectors_Re_Omega_as_meu_2rects_2d_Nx41Ny41_Xmax_1Ymax1mu_min20_mu_max_20_dmu_3_k_eigs_400_4o} with  $D$ being  the upper half circle, $R=R^{(2)}_{\mu=12}$,   and $\Omega_1(\mu=12) =5+14i$.
 }
\label{fig:eigenvectors_two_quarters_Nr41Nth40mu12Rmax1k_eigs_400}
\end{center}
\end{figure}

Next, we investigate the stability of $\psi_{\rm sw}^{(4)}=R_\mu^{(4)}(x,y)e^{i\mu z}$, where $R^{(4)}$ is
a circular necklace with $4$ pearls.
  Fig.~\ref{fig:Re_Omega_as_meu_2_d_4_quarters_circle_sigma_1_Nr_40_Nth_40}(c) shows that $\psi_{\rm sw}^{(4)}$ is linearly stable for $\mu_{\rm lin}\leqslant \mu < \mu_{\rm cr}$ and unstable for $\mu_{\rm cr}<\mu<\infty$, where $\mu_{\rm lin}$ and $\mu_{\rm cr}$ are as in the case of $R_\mu^{(2)}$.
Furthermore, $\psi_{\rm sw}^{(4)}$ has an unstable eigenvalue~$\Omega_1$ of multiplicity~2
for $\mu_{\rm cr}
< \mu< \infty$, whose value is the same as for~$\psi_{\rm sw}^{(2)}$,  and an additional simple eigenvalue~$\Omega_2$ for $ \mu_{\rm cr}^{(2)}< \mu < \infty  $, where $\mu_{\rm cr}<\mu_{\rm cr}^{(2)}\thickapprox 4$.

\begin{figure}[ht!]
\begin{center}
\scalebox{0.8}{\includegraphics{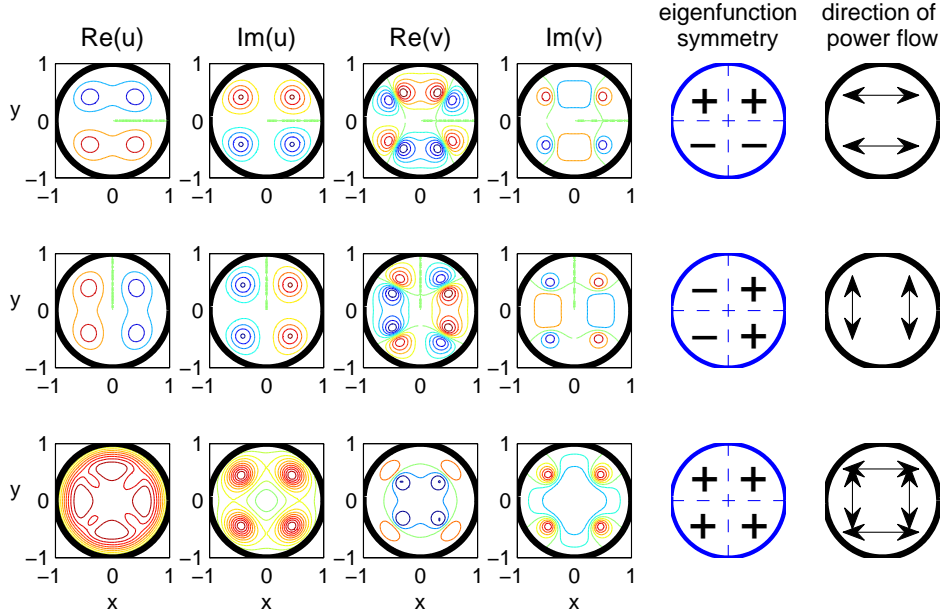}}
\caption{Same as Fig.~\ref{fig:eigenvectors_Re_Omega_as_meu_2rects_2d_Nx41Ny41_Xmax_1Ymax1mu_min20_mu_max_20_dmu_3_k_eigs_400_4o}, with~$D$ being the unit circle and $R=R^{(4)}_{\mu=12}$.
First row:~The eigenfunction $f^{(4)}_{1,1}$ of $\Omega_1$.
Second row:~The eigenfunction $f^{(4)}_{1,2}$ of $\Omega_1$.
Third row:~The eigenfunction $f^{(4)}_{2}$ of $\Omega_2$.
} 
\label{fig:eigenvectors_four_quarters_Nr41Nth40mu12Rmax1k_eigs_400_v2}
\end{center}
\end{figure}

In Fig.~\ref{fig:eigenvectors_four_quarters_Nr41Nth40mu12Rmax1k_eigs_400_v2} we plot the eigenfunctions that correspond to the unstable eigenvalues for $\mu=12$,  which is in the regime  $ \mu_{\rm cr}^{(2)}< \mu < \infty$.
As in the rectangular case,
one eigenfunction that corresponds to~$\Omega_1(\mu=12)=5+14i$ satisfies $f^{(4)}_{1,1}(x,y)=f^{(4)}_{1,1}(-x,y)$ and $f^{(4)}_{1,1}(x,y)=-f^{(4)}_{1,1}(x,-y)$. In contrast, $R^{(4)}_{\mu}(x,y)=-R^{(4)}_{\mu}(-x,y)$ and $R^{(4)}_{\mu}(x,y)=-R^{(4)}_{\mu}(x,-y)$. Thus, $f^{(4)}_{1,1}$ breaks the anti-symmetry of~$R^{(4)}_{\mu}$ in the $x$-direction,
but preserves it in the $y$-direction.
Hence, there is decoupling between the top and bottom pair of pearls as the instability evolves.
In fact, as illustrated in Figure~\ref{fig:f_2_2_circle}(a),
$f^{(4)}_{1,1}(x,y)$ is nothing but  $f^{(2)}_1(x,y)$ for $0\leq y\leq 1,$ and $-f^{(2)}_1(-x,y)$ for $-1\leq y\leq 0$, where $f^{(2)}_1$ is the unstable eigenfunction of~\eqref{eq:eigenvalues2} with $R^{(2)}$, which is plotted in
Figure~\ref{fig:eigenvectors_two_quarters_Nr41Nth40mu12Rmax1k_eigs_400}.
This explain why the values of~$\mu_{\rm cr}$ and~$\Omega_1(\mu)$ for~$R^{(2)}$ and for~$R^{(4)}_{\mu}$ are identical,
and also  implies that the threshold power for instability of a 4-pearl circular necklace satisfies
\begin{equation}
  \label{eq:P(R_cr)2}
P_{\rm th}^{\rm necklace}(n=4)= 2   P_{\rm th}^{\rm necklace}(n=2) \approx 0.24P_{\rm cr},
\end{equation}
see Figure~\ref{fig:Re_Omega_as_P_2_d_4_quarters_circle_sigma_1_Nr_40_Nth_40}(c).

The second eigenfunction that corresponds to~$\Omega_1(\mu=12)=5+14i$
is a $90^{\circ}$ rotation of the first eigenfunction, i.e.,
$f^{(4)}_{1,2}(x,y)$ is $-f^{(2)}_1(y,-x)$ for $-1\leq x\leq 0,$  and $f^{(2)}_1(-y,x)$ for $0\leq x\leq 1$,
 see Figure~\ref{fig:f_2_2_circle}(b).
 For this eigenfunction, there is decoupling between the instability dynamics of the left and right pairs of pearls.
The unstable eigenfunction which corresponds to~$\Omega_2(\mu=12)=6.7+19i$ is denoted by~$f^{(4)}_2$.
It satisfies $f^{(4)}_2(x,y)=f^{(4)}_2(-x,y)=f^{(4)}_2(x,-y)=f^{(4)}_2(-x,-y)$, i.e., it breaks the antisymmetry between all pearls. Therefore,
the power flows between all $4$~pearls.

\begin{figure}[ht!]
\begin{center}
\scalebox{0.6}{\includegraphics{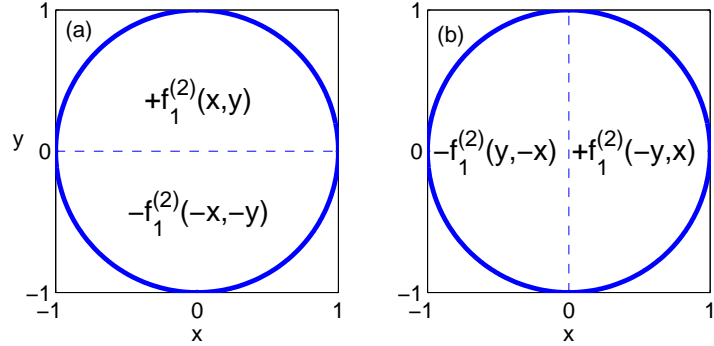}}
\caption{ (a)~Construction of $f^{(4)}_{1,1}$ using~$f^{(2)}_1$.
(b)~Construction of $f^{(4)}_{1,2}$ using~$f^{(2)}_1$. See details in the text.
}
\label{fig:f_2_2_circle}
\end{center}
\end{figure}

\subsubsection{Instability dynamics}

As in the rectangular case, the change from stability to instability of~$\psi_{\rm sw}^{(4)}$
is associated with the loss of the necklace structure due to power flow between pearls, and not with collapse or scattering. To see this,
we solve the NLS numerically with a perturbed $R_\mu^{(4)}$ initial condition.
In order to excite the unstable modes, we multiply the top-left pearl
by~$1.05$ and leave the other three pearls unchanged.
 Fig.~\ref{fig:R_four_quarters_dr0_025dth0_04_Zmax_10_Rmax_1_mu_-25_3_4o_pertubated5percent} shows that
$\psi_{\rm sw}^{(4)}$ preserves its necklace shape till $z=10$ for $\mu = \mu_{\rm cr}-1$, but that
it disintegrates before $z=3$ for $\mu = \mu_{\rm cr}-1$.
This confirms that there is a qualitative change in the necklace stability at~$\mu_{\rm cr}$.

\begin{figure}[ht!]
\begin{center}
\scalebox{0.6}{\includegraphics{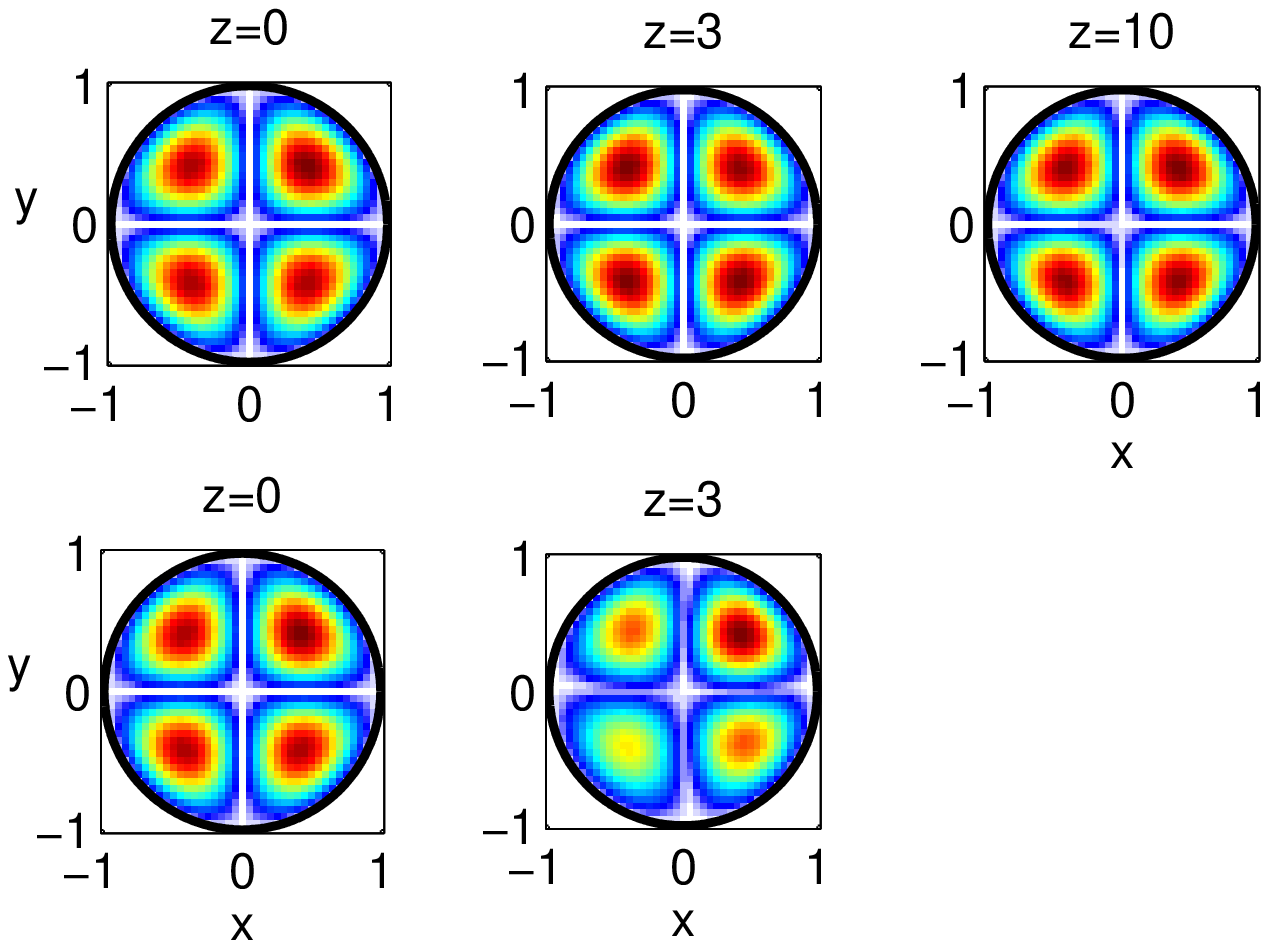}}
\caption{Same as Figure~\ref{fig:four_rectangles_dx0_025dy0_025_Zmax10_Xmax1_Ymax1_lamda-17_pertubated_5_percent_one_pearl_k_1_4o} on a circular domain. Top row:~$\mu=\mu_{\rm cr}-1$. Bottom row:~$\mu=\mu_{\rm cr}+1$. Here $\mu_{\rm cr}\thickapprox-24.3$. }
\label{fig:R_four_quarters_dr0_025dth0_04_Zmax_10_Rmax_1_mu_-25_3_4o_pertubated5percent}
\end{center}
\end{figure}

%
%
%

\subsection{Annular necklaces}

When $R_\mu=R_\mu^{(1)}$ is a single pearl on the quarter-annulus $\{0.5 \leq r \leq 1\ , 0\leq \theta \leq \frac{\pi}{2}\}$, there are no eigenvalues of~\eqref{eq:eigenvalues2} with a positive real part,
see Fig.~\ref{fig:Re_Omega_as_meu_2_d_4_quarters_circle_hole_sigma_1_Nr_41_Nth_40_new}(a). This is in agreement with the observed stability of $\psi_{\rm sw}^{(1)}=e^{i \mu z}R^{(1)}_{\mu}$
in Fig.~\ref{fig:rectangle_dx0_025dy0_025Zmax10Xmax1Ymax1mu1only0and10}(c).

  Next, we investigate the stability of $\psi_{\rm sw}^{(2)}=e^{i\mu z} R_\mu^{(2)}(x,y)$,
where $R_\mu^{(2)}$ is a  2-pearl necklace on the half-annulus
$\{0.5 \leq r \leq 1\ , 0\leq \theta \leq {\pi}\}$.
 Fig.~\ref{fig:Re_Omega_as_meu_2_d_4_quarters_circle_hole_sigma_1_Nr_41_Nth_40_new}(b) shows that there are no eigenvalues
of~\eqref{eq:eigenvalues2}
with $\mbox{Re}(\Omega)>0$ for $\mu_{\rm lin}\leqslant \mu < \mu_{\rm cr}$
and for $\mu_2\leqslant \mu < \mu_4$,
and there is a single eigenvalue~$\Omega_1$ with $\mbox{Re}(\Omega_1)>0$ for $ \mu_{\rm cr}<\mu<\mu_2$
and for $ \mu_4<\mu<\infty$,
where $\mu_{\rm lin}=-k_2^{2}\simeq-46.4$,
see~\eqref{eq:k_n/2}, $\mu_{\rm cr} \thickapprox-43.6$, $\mu_2 \approx 23$, and $\mu_4 \approx 117$.
 Hence, $\psi_{\rm sw}^{(2)}$ is linearly stable for $\mu_{\rm lin}\leqslant \mu < \mu_{\rm cr}$
and $\mu_2 < \mu < \mu_4$,
 and unstable for $ \mu_{\rm cr}<\mu<\mu_2$
and $ \mu_4<\mu<\infty$.

 The initial change from stability to instability occurs as the necklace power exceeds
$$
P_{\rm th}^{\rm necklace}(R_{\mu}^{(2)}) = P(R_{\mu_{\rm cr}}^{(2)})
\approx 0.12P_{\rm cr},
$$
see Figure~\ref{fig:Re_Omega_as_P_2_d_4_quarters_circle_hole_sigma_1_Nr_41_Nth_40_new}(b),
 i.e., when the power of each pearl is $\approx  0.06 \Pcr$.
Thus, {\em the threshold power for instability of a two-pearl necklace on the half-annulus is
the same as that on the half-circle}, see~\eqref{eq:Pth_circular_n=2}.
This is surprising, since the necklace instability is associated with power transfer between the pearls, and so the hole could be expected to have a stabilizing effect.



The behavior of~$\mbox{Re}(\Omega_1)$ for $\mu_{\rm lin}<\mu < \mu_4$ is
different from that for a two-pearl necklace on a rectangle and on a semi-circle, see Figures~\ref{fig:Re_Omega_as_meu_2_d_4_rects_sigma_1_Nx_41_Ny_41}(b) and~\ref{fig:Re_Omega_as_meu_2_d_4_quarters_circle_sigma_1_Nr_40_Nth_40}(b), respectively. It is, however,
similar to that for a two-pearl one-dimensional necklace, see
Figure~\ref{fig:Re_Omega_as_meu_1_d_2_lines_sigma_1_Nx_101}(b). Intuitively, this is because the Dirichlet boundary conditions at $r=r_{\min}$ and $r=r_{\max}$ ``clamp'' the necklace profile in the radial direction.
Consequently, the dependence of~$R_\mu^{(1)}$  on~$\mu$ is predominantly one-dimensional.
Once  $\mu > \mu_4$, however, the pearls become so narrow that they ``do not feel'' the radial walls.
Therefore, they approach the free-space two-dimensional ground state
$R^{(1), \rm free}_{\mu, \rm 2D}$.
Hence, $\mbox{Re}(\Omega_1)$ increases linearly in~$\mu$ (see Section~\ref{sec:Necklace-stability}).

We now discuss the stability of $\psi_{\rm sw}^{(4)}=R_\mu^{(4)}(r,\theta)e^{i\mu z}$,
where $R_\mu^{(4)}$ is an annular necklace with $4$~pearls,
see Sec.~\ref{sec:Annular necklace solitary waves}.
 Fig.~\ref{fig:Re_Omega_as_meu_2_d_4_quarters_circle_hole_sigma_1_Nr_41_Nth_40_new}(c)
 shows that $\psi_{\rm sw}^{(4)}$ is linearly stable for $\mu_{\rm lin}\leqslant \mu < \mu_{\rm cr}$
and $\mu_2\leqslant \mu < {\mu}_3$,
 and unstable for $ \mu_{\rm cr}<\mu<\mu_2$
and $ {\mu}_3<\mu<\infty$, where $\mu_{\rm lin}$, $\mu_{\rm cr}$, and $\mu_2$ are the same as for~$\psi_{\rm sw}^{(2)}$, and
$ {\mu}_3 \approx 73$.\footnote{The upper limit of the second stability regime is smaller than in the two-pearl case
(i.e., $\mu_3<\mu_4$),  since it is determined by the unstable eigenfunction~$f^{(4)}_2$ which corresponds to~$\Omega_2$. }

\begin{figure}[ht!]
\begin{center}
\scalebox{0.7}{\includegraphics{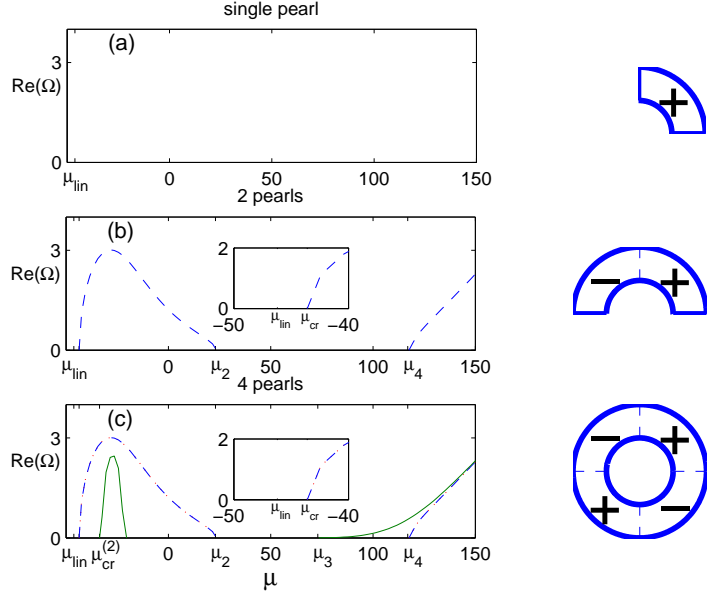}}
\caption{Same as Figure~\ref{fig:Re_Omega_as_meu_2_d_4_rects_sigma_1_Nx_41_Ny_41}, where $R=R^{(4)}_{\mu}$,
and $D$ is the sector of the annulus $0.5<r<1$ shown on the right.
}
\label{fig:Re_Omega_as_meu_2_d_4_quarters_circle_hole_sigma_1_Nr_41_Nth_40_new}
\end{center}
\end{figure}

\begin{figure}[ht!]
\begin{center}
\scalebox{0.7}{\includegraphics{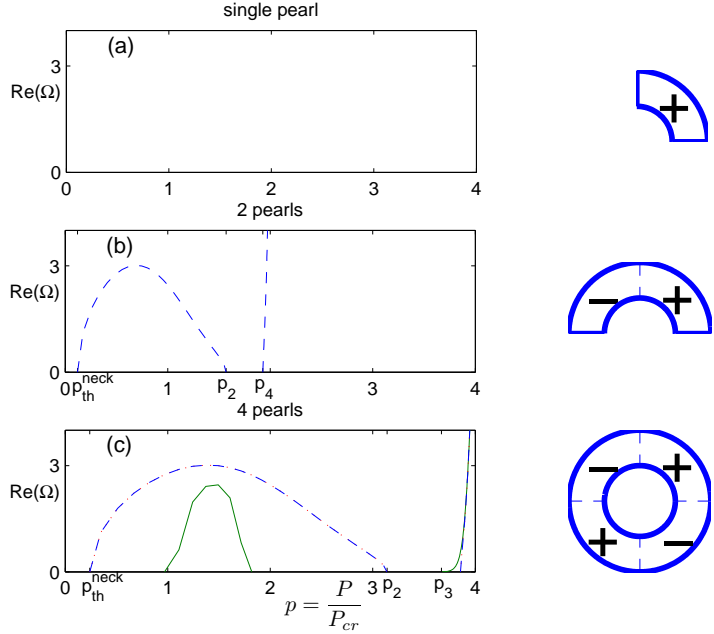}}
\caption{Same as Fig~\ref{fig:Re_Omega_as_meu_2_d_4_quarters_circle_hole_sigma_1_Nr_41_Nth_40_new}, where the $x$-coordinate is the fractional necklace power $p = P(R_\mu^{(n)})/\Pcr$. Here 
$p_{\rm th}^{\rm neck} = P_{\rm th}^{\rm necklace}(n)/\Pcr$ and $p_i:=P(R_{\mu_i}^{(n)})/\Pcr$.
 }
\label{fig:Re_Omega_as_P_2_d_4_quarters_circle_hole_sigma_1_Nr_41_Nth_40_new}
\end{center}
\end{figure}

In Fig.~\ref{fig:eigenvectors_four_quarters_hole_Nr41Nth40mu-28Rmax1Rmin0_5k_eigs_800} we plot the eigenfunctions that correspond to the unstable eigenvalues
for $\mu=-28$. 
One eigenfunction that corresponds to~$\Omega_1(\mu=-28)=3+7.6i$ satisfies $f^{(4)}_{1,1}(x,y)=f^{(4)}_{1,1}(-x,y)$ and $f^{(4)}_{1,1}(x,y)=-f^{(4)}_{1,1}(x,-y)$. In contrast, $R_\mu^{(4)}(x,y)=-R_\mu^{(4)}(-x,y)$ and $R_\mu^{(4)}(x,y)=-R_\mu^{(4)}(x,-y)$. Thus, $f^{(4)}_{1,1}$~breaks the anti-symmetry of~$R_\mu^{(4)}$ in the $x$-direction,
but preserves it in the $y$-direction.
Hence, there is decoupling between the top and bottom pair of pearls as the instability evolves.
As in the rectangular and circular cases,
$f^{(4)}_{1,1}(x,y)$ is nothing but
$f^{(2)}_1(x,y)$ for $-1\leq y\leq 0,$ and $-f^{(2)}_1(-x,-y)$ for $0\leq y\leq 1$
see Figure~\ref{fig:f_2_2_annular}(a),
where~$f^{(2)}_1$ is the unstable eigenfunction of~\eqref{eq:eigenvalues2}
for a two-pearl necklace on the half-annulus.
The second eigenfunction that corresponds to~$\Omega_1(\mu=-28)=3+7.6i$
is a $90^{\circ}$ rotation of the first eigenfunction, i.e.,
$f^{(4)}_{1,2}(x,y)$ is $f^{(2)}_1(-y,x)$ for $-1\leq x\leq 0,$  and $-f^{(2)}_1(y,-x)$ for $0\leq x\leq 1$,
 see Figure~\ref{fig:f_2_2_annular}(b).
 For this eigenfunction, there is decoupling between the instability dynamics of the left and right pair of pearls.
The unstable eigenfunction that corresponds to~$\Omega_2(\mu=-28)=2.5+16i$ is denoted by~$f^{(4)}_2$.
It satisfies $f^{(4)}_2(x,y)=f^{(4)}_2(-x,y)=f^{(4)}_2(x,-y)=f^{(4)}_2(-x,-y)$, i.e., it breaks the antisymmetry between all pearls. Therefore,
the power flows between all $4$~pearls.

The initial threshold power for the 4-pearl annular necklace instability is
\begin{equation}
P_{\rm th}^{\rm necklace}(n=4) = P(R_{\mu_{\rm cr}}^{(4)}) 
\approx 0.24P_{\rm cr},
\end{equation}
see Figure~\ref{fig:Re_Omega_as_P_2_d_4_quarters_circle_hole_sigma_1_Nr_41_Nth_40_new}(c),
which is the same as in the circular case, see~\eqref{eq:P(R_cr)2}. As noted,
this result is surprising, as one could expect the hole to stabilize the necklace.

\begin{figure}[ht!]
\begin{center}
\scalebox{0.8}{\includegraphics{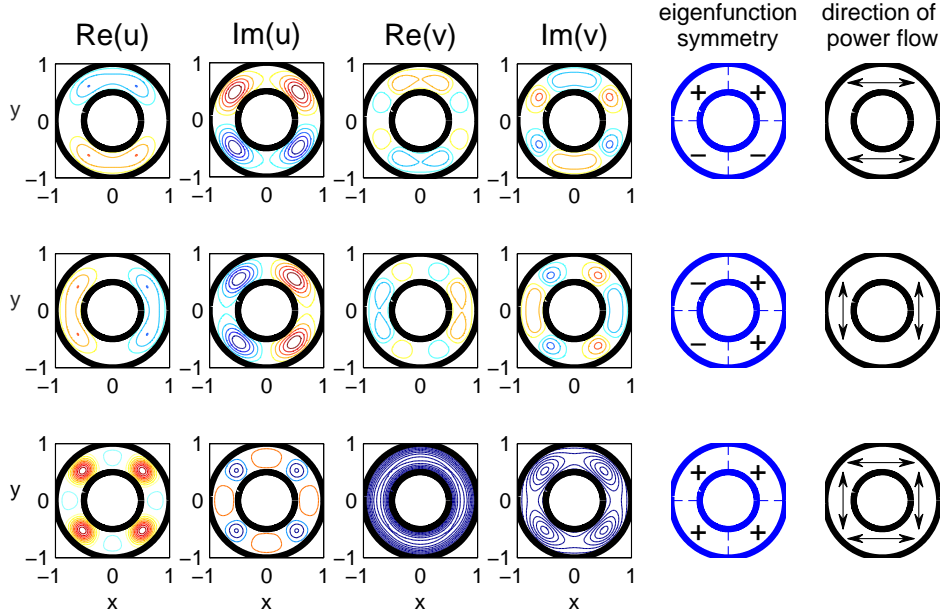}}
\caption{Same as Fig.~\ref{fig:eigenvectors_four_quarters_Nr41Nth40mu12Rmax1k_eigs_400_v2} for the annular domain $D = \{ 0.5 \le r \le 1\}$ and $R=R^{(4)}_{\mu=-28}$.
}
\label{fig:eigenvectors_four_quarters_hole_Nr41Nth40mu-28Rmax1Rmin0_5k_eigs_800}
\end{center}
\end{figure}

\begin{figure}[ht!]
\begin{center}
\scalebox{0.6}{\includegraphics{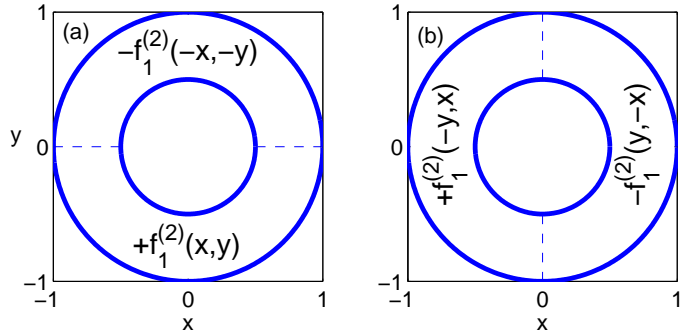}}
\caption{ (a)~Construction of $f^{(4)}_{1,1}$ using~$f^{(2)}_1$.
(b)~Construction of $f^{(4)}_{1,2}$ using~$f^{(2)}_1$. See details in text.
}
\label{fig:f_2_2_annular}
\end{center}
\end{figure}

As in the rectangular and circular cases, the initial change from stability to instability of~$\psi_{\rm sw}^{(4)}$
is associated with the loss of the necklace structure due to power flow between pearls, and not with collapse or scattering. To see this,
we solve the NLS numerically with a perturbed $R_\mu^{(4)}$ initial condition.
In order to excite the unstable modes, we multiply the upper-right pearl
by~$1.05$ and leave the other three pearls unchanged.
 Fig.~\ref{fig:R_four_quarters_dr0_0125dth0_04_Zmax_10_Rmax_1_Rmin0_5_mu_-44_6_pertubated_power_new} shows that
$\psi_{\rm sw}^{(4)}$ preserves its necklace shape till $z=10$ for $\mu = \mu_{\rm cr}-1$,
but  disintegrates before $z=3$ for $\mu = \mu_{\rm cr}+1$.
This confirms that there is a qualitative change in the necklace stability at~$\mu_{\rm cr}$.

\subsubsection{Second stability regime}

{\em Unlike the rectangular and circular cases,  there is a second regime in which the annular necklace becomes  linearly stable.}
This regime is $\mu_2 < \mu < \mu_3$,
see Figure~\ref{fig:Re_Omega_as_meu_2_d_4_quarters_circle_hole_sigma_1_Nr_41_Nth_40_new}(c),
which corresponds to
$$
p_2 \Pcr < P(R_\mu^{(4)}) < p_3\Pcr, \qquad
p_2 :=\frac{P(R_{\mu_2}^{(4)})}{\Pcr} \approx 3.1,  \quad
p_3 :=\frac{P(R_{\mu_3}^{(4)})}{\Pcr} \approx 3.7,
$$
see Figure~\ref{fig:Re_Omega_as_P_2_d_4_quarters_circle_hole_sigma_1_Nr_41_Nth_40_new}(c).
In this regime the hole does stabilize the necklace. Indeed,
Fig.~\ref{fig:R_four_quarters_dr0_0125dth0_04_Zmax_15_Rmax_1_Rmin0_5_mu_50_and 10_power_new_pertubation_2_percent_random_noise} shows that when perturbed by 5\% random noise,
$\psi_{\rm sw}^{(4)}$ becomes unstable and collapses around $z = 11.3$ for $\mu=10$,
but remains stable until $z=15$
for~$\mu=50$, where $P(R_{\mu=10}^{(4)}) \approx 2.9\Pcr$ and
 $P(R_{\mu=50}^{(4)}) \approx 3.5\Pcr$.

%

\begin{figure}[ht!]
\begin{center}
\scalebox{0.6}{\includegraphics{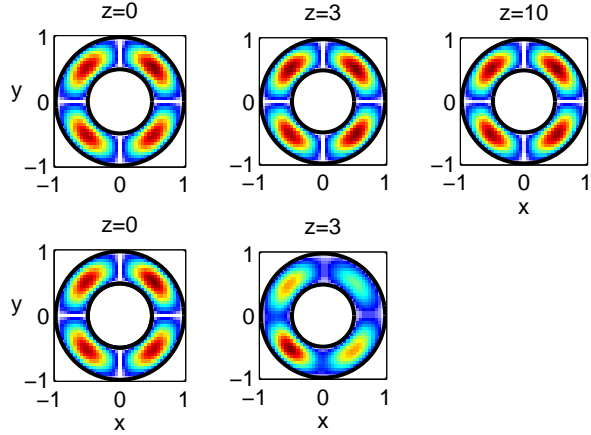}}
\caption{Same as Figure~\ref{fig:four_rectangles_dx0_025dy0_025_Zmax10_Xmax1_Ymax1_lamda-17_pertubated_5_percent_one_pearl_k_1_4o} for the annular domain $0.5 \le r \le 1$. Top row:~$\mu=\mu_{\rm cr}-1$. Bottom row:~$\mu=\mu_{\rm cr}+1$.
Here $\mu_{\rm cr} \thickapprox-43.6$.
}
\label{fig:R_four_quarters_dr0_0125dth0_04_Zmax_10_Rmax_1_Rmin0_5_mu_-44_6_pertubated_power_new}
\end{center}
\end{figure}

\begin{figure}[ht!]
\begin{center}
\scalebox{0.8}{\includegraphics{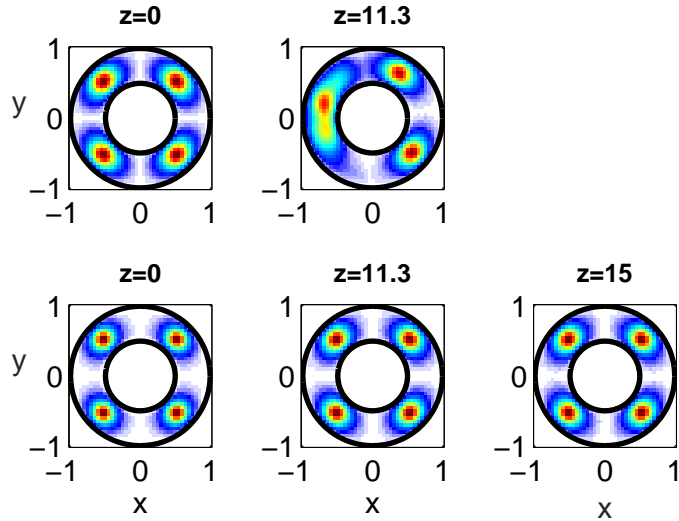}}
\caption{Same as Fig.~\ref{fig:R_four_quarters_dr0_0125dth0_04_Zmax_10_Rmax_1_Rmin0_5_mu_-44_6_pertubated_power_new} for $\psi_{0} = (1+0.05 \cdot \mbox{noise}(x,y))R^{(4)}_{\mu}$, where $\mbox{noise}(x,y)$
is uniformly distributed in~$[-1, 1]$. Top row:~$\mu=10$. Bottom row:~$\mu=50$.
 }
\label{fig:R_four_quarters_dr0_0125dth0_04_Zmax_15_Rmax_1_Rmin0_5_mu_50_and 10_power_new_pertubation_2_percent_random_noise}
\end{center}
\end{figure}

\subsection{One-dimensional necklaces}
\label{sec:One-dimensional-necklaces}

We now consider the stability of necklace solutions of the one-dimensional NLS~\eqref{eq:NLS_bounded-1D}.
We simultaneously consider the subcritical case $\sigma=1$ and the critical case  $\sigma = 2$, in order to
emphasize that, as far as necklace stability is concerned, the subcritical and critical cases are more similar
than different.

Fukuizumi et al.~\cite{Fukuizumi-12} rigorously proved that
when $0<\sigma \le 2$ and~$R^{(1)}_\mu$ is a single pearl on the interval $[-1,1]$,
the solitary waves $\psi_{\rm sw}^{(1)}(z,x) = e^{i \mu z} R_\mu^{(1)}(x)$
of~\eqref{eq:NLS_bounded-1D} are stable for all $\mu_{\rm lin}<\mu<\infty$.
Indeed, there are no eigenvalues of \eqref{eq:eigenvalues2} with a positive real part,
see Fig.~\ref{fig:Re_Omega_as_meu_1_d_2_lines_sigma_1_Nx_101}(a) and~\ref{fig:Re_Omega_as_meu_1_d_2_lines_sigma_2_Nx_101}(a).\footnote{In the one-dimensional case $L_{+}:=\frac{\partial^2}{\partial x^2} -\mu +(2 \sigma+1)|R|^{2}$ and
$L_{-}:=\frac{\partial^2}{\partial x^2}-\mu +|R|^{2}$.
}

 When $R=R_\mu^{(2)}$, a one-dimensional necklace with 2~pearls,
there are no eigenvalues of~\eqref{eq:eigenvalues2}
with $\mbox{Re}(\Omega)>0$ for $\mu_{\rm lin}\leqslant \mu < \mu_{\rm cr}$,
and there is a single eigenvalue~$\Omega_1$ with $\mbox{Re}(\Omega_1)>0$ for $ \mu_{\rm cr}<\mu<\infty$,
see Fig.~\ref{fig:Re_Omega_as_meu_1_d_2_lines_sigma_1_Nx_101}(b)
 and~\ref{fig:Re_Omega_as_meu_1_d_2_lines_sigma_2_Nx_101}(b),
where  $\mu_{\rm lin}=-\frac{\pi^{2}}{4}\thickapprox-2.5$, see~\eqref{eq:Rlambdaline_b-lin},
$\mu_{\rm cr}\thickapprox-1.56$ for $\sigma=1$, and
$\mu_{\rm cr}\thickapprox-2.16$  for $\sigma=2$.
 Hence, $\psi_{\rm sw}^{(2)}=R_\mu^{(2)}(x)e^{i\mu z}$
is linearly stable for $\mu_{\rm lin}\leqslant \mu < \mu_{\rm cr}$ and unstable for $ \mu_{\rm cr}<\mu<\infty$.

\begin{figure}[ht!]
\begin{center}
\scalebox{0.6}{\includegraphics{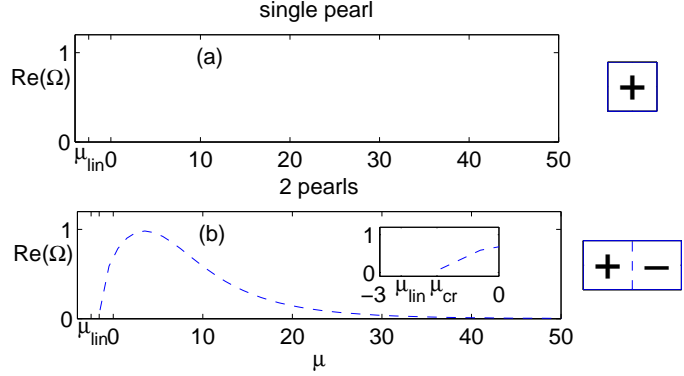}}
\caption{Same as Figure~\ref{fig:Re_Omega_as_meu_2_d_4_rects_sigma_1_Nx_41_Ny_41}
for the one-dimensional NLS with $\sigma=1$.
 (a)~$D = [-1, 1]$ and $R=R^{(1)}_{\mu}$.
(b)~$D = [-2, 2]$  and $R=R^{(2)}_{\mu}$.
}
\label{fig:Re_Omega_as_meu_1_d_2_lines_sigma_1_Nx_101}
\end{center}
\end{figure}

\begin{figure}[ht!]
\begin{center}
\scalebox{0.6}{\includegraphics{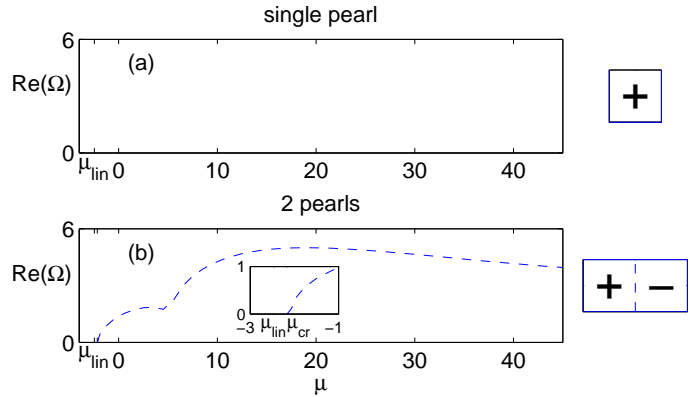}}
\caption{Same as Figure~\ref{fig:Re_Omega_as_meu_1_d_2_lines_sigma_1_Nx_101} for~$\sigma=2$.
}
\label{fig:Re_Omega_as_meu_1_d_2_lines_sigma_2_Nx_101}
\end{center}
\end{figure}

The eigenfunction $f^{(2)}_1:=\left(\begin{array}{c} v \\ u \end{array} \right)$ that corresponds
to the unstable eigenvalue~$\Omega_1$
associated with~$R_\mu^{(2)}$
satisfies $f^{(2)}_1(x)=f^{(2)}_1(-x)$,
see Figures~\ref{fig:eigenvectors_two_lines_Nx101mu10Xmax2_sigma1_full_eigs}
and~\ref{fig:eigenvectors_two_lines_Nx101mu10Xmax2_sigma2_full_eigs},
in contrast to~$R_\mu^{(2)}$, which satisfies $R_\mu^{(2)}(x)=-R_\mu^{(2)}(-x)$. Thus, the instability of~$\psi_{\rm sw}^{(2)}$ is related to the breaking of the anti-symmetry between the two pearls,
and it evolves as power flows from one pearl to the other. As in the two-dimensional case,
the necklace instability in unrelated to collapse. This is obvious in the
case of the one-dimensional cubic NLS, which is subcritical and thus does not admit blowup solutions.
In the one-dimensional quintic NLS, which is critical,
 the change from stability to instability occurs as the power of~$R_\mu^{(2)}$  exceeds
$P_{\rm th}^{\rm necklace}(R_\mu^{(2)}) = P(R_{\mu_{\rm cr}}^{(2)})
\approx 0.25P_{\rm cr}$,\footnote{i.e., when the power of each pearl is $\approx \frac18 \Pcr$.},\footnote{This result was already observed in~\cite{Fukuizumi-12}.}
which is well below the critical power for collapse.

\begin{figure}[ht!]
\begin{center}
\scalebox{0.7}{\includegraphics{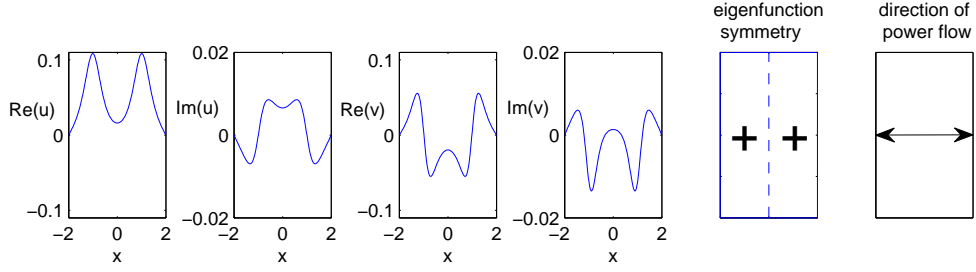}}
\caption{Same as Figure~\ref{fig:eigenvectors_Re_Omega_as_meu_2rects_2d_Nx41Ny41_Xmax_1Ymax1mu_min20_mu_max_20_dmu_3_k_eigs_400_4o} for the one-dimensional NLS with $\sigma=1$, $D = [-2, 2]$, $R=R_{\mu=10}^{(2)}$,  and
$\Omega_1(\mu=10)=0.59+2.2i$. }
\label{fig:eigenvectors_two_lines_Nx101mu10Xmax2_sigma1_full_eigs}
\end{center}
\end{figure}

\begin{figure}[ht!]
\begin{center}
\scalebox{0.7}{\includegraphics{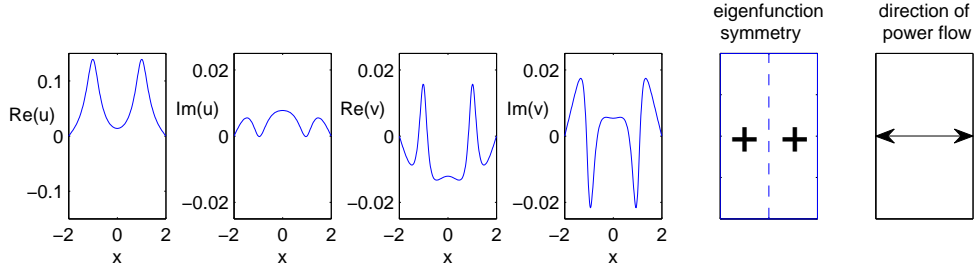}}
\caption{Same as Figure~\ref{fig:eigenvectors_two_lines_Nx101mu10Xmax2_sigma1_full_eigs} for~$\sigma=2$ and $\Omega_1(\mu=10)=4.3+5.4i$.  }
\label{fig:eigenvectors_two_lines_Nx101mu10Xmax2_sigma2_full_eigs}
\end{center}
\end{figure}

\section{Numerical methods}
   \label{sec:numerical}

The NLS~\eqref{eq:NLS_bounded} was integrated numerically using standard fourth-order finite differencing in~{\bf x},
with an implicit Crank-Nicolson method in~$z$ for the linear part, and a predictor-corrector method for the nonlinear part.
The eigenvalues and eigenvectors of~\eqref{eq:eigenvalues2} were calculated using Matlab's {\em eig} function.
The only numerical element which required a non-standard approach was the calculation of the solitary necklace profiles.

\subsection{Non-spectral renormalization method}
  \label{sec:Computation_of_the_solitary_waves}

A popular method for calculating multidimensional solitary waves is Petviashvili's spectral renormalization method~\cite{Petviashvili-76, Ablowitz-05}.
This method, however, can only compute positive solutions, which is not the case with necklace solutions.
 Therefore, we use  this method to compute a single pearl (which is positive), and then ``replicate'' it into a necklace, as explained in Section~\ref{sec:Solitary wave}.
 In addition, since Petviashvili's method is based on iterations in Fourier space, it cannot be applied on bounded domains. Therefore, following~\cite{Baruch_Fibich_Gavish:2009}, we employ a non-spectral renormalization method, as follows. We first rewrite~(\ref{eq:dDim_R_ODE2}) as
$$
   L(R)=|R|^{2}R, \qquad  (x,y) \in D, \qquad L:=-\Delta+\mu,
$$
subject to
$R=0$  on~$\partial D$. Then we consider the fixed-point iterations
\begin{equation}
     \label{eq:iterative_scheme}
R_{k+1}=L^{-1}\left[|R_{k}|^{2}R_{k}\right], \qquad k=0,1,\dots
\end{equation}
subject to
$R=0$  on~$\partial D$,
where $R_{k}=R_{k}(x,y)$ is the $k$th~iteration. To implement~\eqref{eq:iterative_scheme}, $L$~is discretized using
finite differences and inverted (once) using
the LU decomposition. We observe numerically that the
iterations (\ref{eq:iterative_scheme}) diverge to zero for a small
initial guess $R_{0}=R_{0}(x,y)$, and to infinity for a large initial guess. To
avoid this divergence, we renormalize the solution at each
iteration, so that it satisfies the integral relation\footnote{Other integral relations can also be used.},\footnote{Here, $\langle
f,g\rangle:=\int{f^{*} g \, dx dy}$ denotes the standard inner product.}
\begin{equation}\label{eq:integral_relation}
\mbox{SL}[R]=\mbox{SR}[R], \qquad \mbox{SL}[R]:=\langle R,R\rangle, \qquad \mbox{SR}[R]:=\langle R,L^{-1}\left(|R|^{2}R\right)\rangle,
\end{equation}
which is obtained from multiplication of
(\ref{eq:iterative_scheme}) by~$R$ and integration over~$(x,y)$. Thus, we define
$R_{k+\frac{1}{2}}:=c_{k}R_{k}$, where $c_{k}$ is chosen so
that $R_{k+\frac{1}{2}}$ satisfies~(\ref{eq:integral_relation}), i.e., $$c_{k}^{2} \mbox{SL}[R_{k}]=c_{k}^{4}\mbox{SR}[R_{k}].$$
 Consequently,
$c_{k}=\left({\mbox{SL}[R_{k}]}/{\mbox{SR}[R_{k}]}\right)^{\frac{1}{2}}$.
Therefore, the non-spectral renormalization method reads
\begin{equation}\label{eq:final_relation}
R_{k+1}=L^{-1}\left[|R_{k+\frac{1}{2}}|^{2}R_{k+\frac{1}{2}}\right]=\left(\frac{\langle
R_{k},R_{k}\rangle}{\langle
R_{k},L^{-1}\left(|R_{k}|^{2}R_{k}\right)\rangle}\right)
^{\frac{3}{2}}L^{-1}\left[|R_{k}|^{2}R_{k}\right],
\quad k=0,1,\dots
\end{equation}

To illustrate the convergence of this method,
consider the iterations~\eqref{eq:final_relation} when~$D$ is the square $[-1, 1]^2$, $\mu = 1$,
the initial guess  is $R_{0}=\sin\left(\pi (\frac{x-1}{2})\right)\sin\left(\pi (\frac{y-1}{2})\right)$, $L$ is discretized using fourth-order center-difference scheme, and $dx=dy=0.05$.
Fig.~\ref{fig:non_spectral_variant_of_Petviashvilis_method}(a) shows that after 40~iterations
$c_{k}$~converges to~1 with machine accuracy,  thus confirming the convergence of the iterations.
In order to check that the iterations converge to a solitary wave, we recall that the Pohozaev identities on a bounded domain read
\begin{eqnarray*}
\mu \|R_{\mu}\|_{2}^2 &=&
\frac{1}{2}
   \|R_{\mu}\|_{4}^{4}
  - \frac{1}{2}  \int_{\partial D}
 ({\bf x} \cdot {\bf n}) \, (\nabla R_{\mu} \cdot {\bf n})^2 \, d{\bf s},
\\
  \|\nabla R_{\mu}\|_{2}^2 &=&\frac{1}{2}
   \|R_{\mu}\|_{4}^{4}
  + \frac{1}{2}  \int_{\partial D}
 ({\bf x} \cdot {\bf n}) \, (\nabla R_{\mu} \cdot {\bf n})^2 \, d{\bf s},
\end{eqnarray*}
where~${\bf n}$ is the outward unit normal to~${\partial D}$.
To avoid computing boundary integrals, we check whether~$R_{k}$ satisfies the sum of these Pohozaev identities, i.e.,
whether $E_{\rm Pohozaev}(R_{k}) \to 0$, where
$$
E_{\rm Pohozaev}(R):= \mu\|R\|_{2}^{2}+\|\nabla R\|_{2}^{2}-\|R\|_{4}^{4}.
  $$
Fig.~\ref{fig:non_spectral_variant_of_Petviashvilis_method}(b) shows that
$E_{\rm Pohozaev}$ converges to~$O(10^{-3})$, and not to machine accuracy.
This is because the iterations converge to a solution of a discretized version of~(\ref{eq:dDim_R_ODE2}).
In other words, the $O(10^{-3})$ error of the Pohozaev identities
is determined by the discretization error and not by the convergence of the iterations.
Indeed, we verified that if we use smaller values of~$dx$ and~$dy$,
the limiting Pohozaev error reduces according to the order of the discretization scheme being used.

\begin{figure}[ht!]
\begin{center}
\scalebox{0.6}{\includegraphics{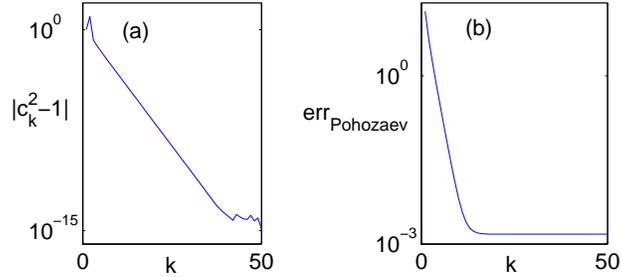}}
\caption{Convergence of the non-spectral renormalization method. (a)~$|c_{k}^{2}-1|$ as a function of the iteration number~$k$. (b)~${\rm err}_{\rm Pohozaev}(R_{k})$ as a function of the iteration number.
}
\label{fig:non_spectral_variant_of_Petviashvilis_method}
\end{center}
\end{figure}

{\bf Acknowledgments}  We thank A. Gaeta, B. Ilan, N. Fusco, and Y. Pinchover for useful suggestions. This research was partially supported by grant \#177/13 from the Israel Science Foundation (ISF),
and by the Kinetic Research Network (KI-Net) under the NSF Grant No. RNMS  \#1107444.

\bibliographystyle{plain}
\bibliography{WeakSol}
\end{document}